\title{Foundations for Uniform Interpolation and Forgetting \\in Expressive Description Logics}
\author{
  \begin{tabular}[c]{ccc}
Carsten Lutz &$\qquad$& Frank Wolter\\
\textnormal{Fachbereich Informatik} && \textnormal{Department of Computer Science}\\
\textnormal{Universit\"at Bremen, Germany} && \textnormal{University of Liverpool, UK}
%wolter@liverpool.ac.uk}
  \end{tabular}
}
\begin{document}

\date{}
\maketitle

\begin{abstract}
  We study uniform interpolation and forgetting in the description
  logic \ALC.  Our main results are model-theoretic characterizations
  of uniform interpolants and their existence in terms of
  bisimulations, tight complexity bounds for deciding the existence of
  uniform interpolants, an approach to computing interpolants when
  they exist, and tight bounds on their size. We use a mix of
  model-theoretic and automata-theoretic methods that, as a
  by-product, also provides characterizations of and decision
  procedures for conservative extensions.
\end{abstract}

\section{Introduction}

In Description Logic (DL), a \emph{TBox} or \emph{ontology} is a
logical theory that describes the conceptual knowledge of an
application domain using a set of appropriate predicate symbols.  For
example, in the domain of universities and students, the predicate
symbols could include the concept names $\mn{Uni}$, $\mn{Undergrad}$,
and $\mn{Grad}$, and the role name $\mn{has}$\_$\mn{student}$. When
working with an ontology, it is often useful to eliminate some of the
used predicates while retaining the meaning of all remaining ones. For
example, when \emph{re-using} an existing ontology in a new
application, then typically only a very small fraction of the
predicates is of interest. Instead of re-using the whole ontology, one
can thus use the potentially much smaller ontology that results from
an elimination of the non-relevant predicates. Another reason for
eliminating predicates is \emph{predicate hiding}, i.e., an ontology
is to be published, but some part of it should be concealed from the
public because it is confidential
\cite{DBLP:conf/kr/GrauM10}. Finally, one can view the result of
predicate elimination as an approach to \emph{ontology summary}: the
resulting, smaller and more focussed ontology summarizes what the
original ontology says about the remaining predicates.

The idea of eliminating predicates has been studied in AI under the
name of \emph{forgetting a signature (set of predicates)~$\Sigma$},
i.e., rewriting a knowledge base $K$ such that it does not use
predicates from $\Sigma$ anymore and still has the same logical
consequences that do not refer to predicates from $\Sigma$ \cite{Lin}.
In propositional logic, forgetting is also known as \emph{variable
  elimination}~\cite{vindependence}. In mathematical logic, forgetting
has been investigated under the dual notion of \emph{uniform
  interpolation w.r.t.\ a signature $\Sigma$}, i.e., rewriting a
formula $\vp$ such that it uses \emph{only} predicates from $\Sigma$
and has the same logical consequences formulated \emph{only} in
$\Sigma$. The result of this rewriting is then the \emph{uniform
  interpolant} of $\vp$ w.r.t.\ $\Sigma$. This notion can be seen as a
generalization of the more widely known Craig interpolation.

Due to the various applications briefly discussed above, forgetting
und uniform interpolation receive increased interest also in a DL
context
\cite{DBLP:conf/webi/EiterISTW06,DBLP:conf/ecai/WangWTZ10,DBLP:conf/ausai/WangWTPA09,DBLP:conf/semweb/WangWTPA09,DBLP:journals/ai/KontchakovWZ10,DBLP:conf/ijcai/KonevWW09}.
Here, the knowledge base $K$ resp.\ formula $\vp$ is replaced with a
TBox~\Tmc.
%
% (MAYBE: Arguably, a naive application of forgetting can sometimes be problematic: 
% distorting the original ontology, etc. So, one might be interested in finding a compromise
% between forgetting and structure preserving \cite{H}. Also, the type of consequences
%  will not be discussed.)
%
In fact, uniform interpolation is rather well-understood in
lightweight DLs such as DL-Lite and $\mathcal{EL}$: there, uniform
interpolants of a TBox~\Tmc can often be expressed in the DL in which
\Tmc is formulated \cite{DBLP:journals/ai/KontchakovWZ10,DBLP:conf/ijcai/KonevWW09}
and practical experiments have confirmed the
usefulness and feasibility of their computation \cite{DBLP:conf/ijcai/KonevWW09}. The
situation is different for `expressive' DLs such as \ALC and its
various extensions, where much less is known. There is a thorough
understanding of uniform interpolation on the level of
\emph{concepts}, i.e., computing uniform interpolants of concepts
instead of TBoxes \cite{CCMV-KR06,DBLP:conf/ausai/WangWTPA09}, which
is also what the literature on uniform interpolants in modal logic is
about \cite{Visser,DBLP:conf/jelia/HerzigM08}. 
%\cite{Visser,Ghilardi,Pitts,DBLP:conf/jelia/HerzigM08}. 
On the TBox level, a basic
observation is that there are very simple \ALC-TBoxes and signatures $\Sigma$ 
such that the uniform interpolant of \Tmc w.r.t.\ $\Sigma$ cannot be
expressed in $\ALC$ (nor in first-order predicate logic) \cite{KR06}. A scheme for approximating (existing or
non-existing) interpolants of \ALC-TBoxes was devised in
\cite{DBLP:conf/semweb/WangWTPA09}. In
\cite{DBLP:conf/ecai/WangWTZ10}, an attempt is made to improve this to
an algorithm that computes uniform interpolants of \ALC-TBoxes in an
exact way, and also decides their existence (resp.\ expressibility in
\ALC). Unfortunately, that algorithm turns out to be incorrect.

The aim of this paper is to lay foundations for uniform interpolation
in \ALC and other expressive DLs, with a focus on (i)~model-theoretic
characterizations of uniform interpolants and their existence;
(ii)~deciding the existence of uniform interpolants and computing them
in case they exist; and (iii)~analyzing the size of uniform
interpolants. Clearly, these are fundamental steps on the way towards
the computation and usage of uniform interpolation in practical
applications.  Regarding~(i), we establish an intimate connection
between uniform interpolants and the well-known notion of a
bisimulation and characterize the existence of interpolants in terms
of the existence of models with certain properties based on
bisimulations. For~(ii), our main result is that deciding the
existence of uniform interpolants is 2-\ExpTime-complete, and that
methods for computing uniform interpolants on the level of concepts
can be lifted to the TBox level. Finally, regarding (iii) we prove that
the size of uniform interpolants is at most triple exponential in the
size of the original TBox (upper bound), and that, in general, no
shorter interpolants can be found (lower bound). In particular, this
shows that the algorithm from \cite{DBLP:conf/ecai/WangWTZ10} is
flawed as it always yields uniform interpolants of at most double
exponential size. Our methods, which are a mix of model-theory and
automata-theory, also provide model-theoretic characterizations of
conservative extensions, which are closely related to uniform
interpolation~\cite{KR06}.  Moreover, we use our approach to reprove the
2-\ExpTime upper bound for deciding conservative extensions from
\cite{KR06}, in an alternative and argueably more transparent way.

%%% Local Variables: 
%%% mode: latex
%%% TeX-master: "main"
%%% End: 

Most proofs in this paper are deferred to the appendix.

\section{Getting Started}
\label{sect:prelim}

We introduce the description logic ${\cal ALC}$ and define uniform
interpolants and the dual notion of forgetting.  Let $\NC$ and $\NR$
be disjoint and countably infinite sets of \emph{concept} and
\emph{role names}.  \emph{\ALC concepts} are formed using the
syntax rule
$$
C,D \longrightarrow \top \mid A \mid \neg C \mid C \sqcap D \mid \exists r .
C
$$
where $A\in \NC$ and $r\in \NR$.
The concept constructors $\bot$, $\sqcup$, and $\forall r.C$ are
defined as abbreviations: $\bot$ stands for $\neg \top$, $C \sqcup D$
for $\neg(\neg C \sqcap \neg D)$ and $\forall r . C$ abbreviates
$\neg \exists r. \neg C$. 
A \emph{TBox} is a finite set of \emph{concept inclusions} $C
\sqsubseteq D$, where $C,D$ are \ALC-concepts.
We use $C \equiv D$ as abbreviation for the two inclusions
$C \sqsubseteq D$ and $D \sqsubseteq C$. 

%For the sake of completeness, we briefly illustrate
%the use of TBoxes as an ontology formalism: the following TBox
%provides an (extremely simplified) ontology of web services, where
%$C \doteq D$ is an abbreviation for the two concept implications
%$C \sqsubseteq D$ and $D \sqsubseteq C$:
%
%$$
%\begin{array}{rcl}
%  {\sf webservice} &\sqsubseteq& \exists {\sf provider}.\top \sqcap \exists {\s%f input}.\top \sqcap \exists {\sf output}.\top \\[1mm]
%  {\sf amazonservice} &\doteq& {\sf webservice} \sqcap \exists {\sf provider} .% {\sf amazon}
%\end{array}
%$$
%Note that,
%since the left-hand sides of concept implications are not restricted
%to concept names, TBoxes can also describe general background
%knowledge about the application domain such as 
%%
%$$
%  \mn{Heart\text{-}Disease} \doteq \mn{Disease} \sqcap \exists \mn{location} . \mn{Heart}.
%$$
%
The semantics of \ALC-concepts is given in terms of
\emph{interpretations} $\Imc=(\Delta^\Imc,\cdot^\Imc)$, where
$\Delta^\Imc$ is a non-empty set (the \emph{domain}) and $\cdot^\Imc$
is the \emph{interpretation function}, assigning to each $A\in \NC$ a
set $A^\Imc \subseteq \Delta^\Imc$, and to each $r\in \NR$ a relation
$r^\Imc \subseteq \Delta^\Imc \times \Delta^\Imc$. The interpretation
function is inductively extended to concepts as follows:
$$
\begin{array}{l}%{r@{\;}c@{\;}l}
  \top^\Imc := \Delta^\Imc \quad
%  \bot^\Imc &:=& \emptyset \\[1mm]
  (\neg C)^\Imc := \Delta^\Imc \setminus C^\Imc \quad
  (C \sqcap D)^\Imc := C^\Imc \cap D^\Imc \\[1mm]
%  (C \sqcup D)^\Imc &:=& C^\Imc \cup D^\Imc \\[1mm]
%  (\forall r . C)^\Imc &:=& \{ d \in \Delta^\Imc \mid \forall e. (d,e) \in
%  r^\Imc \rightarrow e \in C^\Imc \} \\[1mm]
  (\exists r . C)^\Imc := \{ d \in \Delta^\Imc \mid \exists e. (d,e) \in
  r^\Imc \wedge e \in C^\Imc \}
\end{array}
$$
An interpretation \Imc \emph{satisfies} an inclusion $C \sqsubseteq
D$ if $C^\Imc \subseteq D^\Imc$, and \Imc is a \emph{model} of a TBox
\Tmc if it satisfies all inclusions in \Tmc. 
%A concept $C$ is
%\emph{satisfiable} relative to a TBox $\Tmc$ if there exists a model
%$\Imc$ of $\Tmc$ such that $C^{\Imc}\not=\emptyset$.  
A concept $C$ is
\emph{subsumed} by a concept $D$ relative to a TBox \Tmc (written
$\Tmc \models C \sqsubseteq D$) if every model \Imc of $\Tmc$
satisfies the inclusion $C \sqsubseteq D$. We write $\Tmc \models \Tmc'$
to indicate that $\Tmc \models C \sqsubseteq D$ for all $C \sqsubseteq D \in
\Tmc'$.

\smallskip

A set $\Sigma\subseteq \NC \cup \NR$ of concept and role names is
called a \emph{signature}. The signature ${\sf sig}(C)$ of a concept
$C$ is the set of concept and role names occurring in $C$, and
likewise for the signature $\mn{sig}(C \sqsubseteq D)$ of an inclusion $C \sqsubseteq D$
and $\mn{sig}(\Tmc)$ of a TBox \Tmc. A
\emph{$\Sigma$-TBox} is a TBox with $\mn{sig}(\Tmc) \subseteq \Sigma$,
and likewise for \emph{$\Sigma$-inclusions} and \emph{$\Sigma$-concepts}. 
%A
%\emph{$\Sigma$-interpretation} is an interpretation $\Imc$ in which
%$X^{\Imc}=\emptyset$ for all $X\not\in \Sigma$.
% 
\smallskip

We now introduce the main notions studied in this paper: 
uniform interpolants and conservative extensions.
\begin{definition}
\label{def:unifint}
  Let $\mathcal{T},\Tmc'$ be TBoxes and $\Sigma$ a signature. 
  \Tmc and $\Tmc'$ are \emph{$\Sigma$-inseparable} if for all
  $\Sigma$-inclusions $C \sqsubseteq D$, we have $\Tmc \models C
  \sqsubseteq D$ iff $\Tmc' \models C \sqsubseteq D$. We call
  \begin{itemize}
  \item $\Tmc'$ a
  \emph{conservative extension} of $\Tmc$ if $\Tmc'\supseteq \Tmc$ and
   $\Tmc$ and $\Tmc'$ are $\Sigma$-inseparable for $\Sigma= \mn{sig}(\Tmc)$.
  \item $\Tmc$ a \emph{uniform $\Sigma$-interpolant} of $\Tmc'$
  if $\mn{sig}(\Tmc)\subseteq \Sigma \subseteq \mn{sig}(\Tmc')$ and $\Tmc$ and $\Tmc'$ are 
  $\Sigma$-inseparable.
\end{itemize}
\end{definition}
Note that uniform $\Sigma$-interpolants are unique up
to logical equivalence, if they exist. 
% Also note that when \Tmc and $\Tmc'$ are
% $\Sigma$-inseparable as demanded in Definition~\ref{def:unifint}, then
% $\Tmc\models C \sqsubseteq D$ iff $\Tmc'\models C \sqsubseteq D$ even
% when we only know that $\mn{sig}(C\sqsubseteq D) \cap
% \mn{sig}(\Tmc')\subseteq\Sigma$, i.e., it does not matter whether or
% not we admit additional `fresh' predicates in the
% consequences; actually, it can be seen that this is a consequence of
% $\ALC$ having the Craig interpolation property for
% TBoxes~\cite{DBLP:series/lncs/KonevLWW09}.

% Note that since $\ALC$ has the Craig interpolation property for TBoxes
% \cite{DBLP:series/lncs/KonevLWW09}
%  (i.e., if $\Tmc \models C
% \sqsubseteq D$, then there exists a $\Tmc'$ with $\mn{sig}(\Tmc')
% \subseteq \mn{sig}(\Tmc)\cap \mn{sig}(C\sqsubseteq D)$ such that $\Tmc
% \models\Tmc'$ and $\Tmc'\models C\sqsubseteq D$) it follows from
% $\Tmc$ being a $\Sigma$-uniform interpolant of $\Tmc'$ that
% $\Tmc\models C \sqsubseteq D$ whenever $\Tmc'\models C \sqsubseteq D$
% and $\mn{sig}(C\sqsubseteq D) \cap \mn{sig}(\Tmc')\subseteq\Sigma$.

%Often, we denote a $\Sigma$-uniform 
%interpolant of a TBox $\Tmc$ by $\Tmc_{\Sigma}$.

The notion of forgetting as investigated in
\cite{DBLP:conf/ecai/WangWTZ10} is dual to uniform interpolation: a
TBox $\Tmc'$ is \emph{the result of forgetting about a signature
  $\Sigma$ in a TBox $\Tmc$} if $\Tmc'$ is a uniform $\mn{sig}(\Tmc)\setminus
\Sigma$-interpolant of $\Tmc$.
\begin{example}
{\em Let $\Tmc$ consist of the inclusions \\[1mm]
(1) ${\sf Uni}\sqsubseteq \exists 
{\sf has\_st}.{\sf Undergrad} \sqcap \exists {\sf has\_st}.{\sf Grad}$ \\[1mm]
(2) ${\sf Uni}\sqcap {\sf Undergrad}\sqsubseteq \bot$
$\quad$
(3) ${\sf Uni} \sqcap {\sf Grad}\sqsubseteq \bot$ \\[1mm]
(4) ${\sf Undergrad}\sqcap {\sf Grad}\sqsubseteq \bot$.\\[1mm] 
Then the TBox that consists of (2) and
$${\sf Uni}\sqsubseteq \exists 
{\sf has\_st}.{\sf Undergrad} \sqcap \exists {\sf has\_st}.
(\neg {\sf Undergrad}\sqcap \neg {\sf Uni})
$$
is the result of forgetting ${\{\sf Grad}\}$. Additionally forgetting
${\sf Undergrad}$ yields the TBox $\{{\sf Uni}\sqsubseteq \exists {\sf
  has\_st}.\neg{\sf Uni}\}$.}
\end{example}

The following examples will be used to illustrate our characterizations.
Proofs are provided once we have developed
the appropriate tools.

\begin{example}\label{ex1}
{\em In the following, we always forget $\{B\}$.

\smallskip
\noindent
(i) Let $\Tmc_1=\{A \sqsubseteq \exists r.B\sqcap \exists r.\neg B\}$ and 
$\Sigma_{1} = \{A,r\}$. Then $\Tmc_1'=
\{A \sqsubseteq \exists r.\top\}$
is a uniform $\Sigma_1$-interpolant of $\Tmc_{1}$. 

\smallskip
\noindent
(ii) Let $\Tmc_{2}=\{A \equiv B \sqcap \exists r.B\}$ and $\Sigma_{2}=\{A,r\}$.
Then $\Tmc'_{2}=\{A \sqsubseteq \exists r.(A \sqcup \neg \exists r.A)\}$
is a uniform $\Sigma_{2}$-interpolant of~$\Tmc_2$.

\smallskip
\noindent
(iii) For $\Tmc_3=\{A \sqsubseteq B, B \sqsubseteq \exists r.B\}$ and 
$\Sigma_{3}=\{A,r\}$, there is no uniform $\Sigma_{3}$-interpolant of
$\Tmc_3$.

\smallskip
\noindent
(iv) For $\Tmc_4=
\{A \sqsubseteq \exists r.B$, $A_0 \sqsubseteq \exists r.(A_1 \sqcap B)$,
$E \equiv A_1 \sqcap B \sqcap \exists r.(A_2 \sqcap B)\}$
and $\Sigma_{4}=\{A,r,A_{0},A_{1},E\}$, there is no uniform $\Sigma_{4}$-interpolant of 
$\Tmc_4$. Note that $\Tmc_4$ is of a very simple form, namely an acyclic
$\mathcal{EL}$-TBox, see \cite{DBLP:conf/ijcai/KonevWW09}.
}
\end{example}

%Intuition (can be proved precisely easily using criterion):
 
%We need for every n an inclusion stating: "if (a) A holds in all nodes reachabl%e
%in n steps and (b) E does not hold in any node reachable in n steps and (c) A_1% vee A_2
%holds in all nodes reachable in n steps and (d) A_0 holds now, 
%then there exists a sequence of A_1 nodes starting directly after A_0 of length% at least n". 
 
Bisimulations are a central tool for studying the expressive power of
\ALC, and play a crucial role also in our approach to uniform
interpolants.  We introduce them next.  
A \emph{pointed
  interpretation} is a pair $(\Imc,d)$ that consists of an
interpretation $\Imc$ and a $d\in \Delta^{\Imc}$. 
\begin{definition}
  Let $\Sigma$ be a finite signature and $(\Imc_{1},d_{1})$,
  $(\Imc_{2},d_{2})$ pointed interpretations. A relation $S \subseteq
  \Delta^{\Imc_{1}}\times \Delta^{\Imc_{2}}$ is a
  \emph{$\Sigma$-bisimulation} between $(\Imc_{1},d_{1})$ and
  $(\Imc_{2},d_{2})$ if $(d_{1},d_{2})\in S$ and for all $(d,d') \in
  S$ the following conditions are satisfied:
\begin{enumerate}

  \item $d \in A^{\Imc_{1}}$ iff $d' \in A^{\Imc_{2}}$, for all $A \in \Sigma\cap \NC$;

  \item if $(d,e) \in r^{\Imc_{1}}$, then there exists $e'\in \Delta^{\Imc_{2}}$ such
that $(d',e') \in r^{\Imc_{2}}$ and $(e,e')\in S$, for all $r \in \Sigma\cap \NR$;

  \item if $(d',e') \in r^{\Imc_{2}}$, then there exists $e\in \Delta^{\Imc_{1}}$ such
that $(d,e) \in r^{\Imc_{1}}$ and $(e,e')\in S$, for all $r \in \Sigma\cap \NC$.

\end{enumerate}
$(\Imc_{1},d_{1})$ and $(\Imc_{2},d_{2})$ are
\emph{$\Sigma$-bisimilar}, written $(\Imc_{1},d_{1}) \sim_\Sigma
(\Imc_{2},d_{2})$, if there exists a $\Sigma$-bisimulation between
them.
\end{definition}
%
%
% \begin{example}[Tree interpretation]\label{ex:tree}\marginpar{SPACE WELL SPENT?}
% {\em An interpretation $\Imc$ is called a \emph{tree interpretation}
% if any two distinct roles have disjoint interpretations and
% the undirected graph $(\Delta^{\Imc},\bigcup_{r\in \NR}r^{\Imc})$ is a
% (possibly infinite) intransitive tree. The root of $\Imc$ is denoted by 
% $\rho^{\Imc}$. By employing the standard unraveling technique \cite{GorankoOtto}, one can show
% that every pointed interpretation $(\Jmc,d)$ is $\Sigma$-bisimilar to a pointed tree interpretation 
% $(\Imc,\rho^{\Imc})$, for any signature $\Sigma$. 
% % Moreover, since bisimulations
% % cannot count the number of $r$-successors of a point, any point with a least one $r$-successor
% % can have an unbounded number of $r$-successors in some bisimilar $(\Imc,\rho^{\Imc})$. 
% %It follows, in particular,
% %that tree-interpretation provide a complete semantics for $\ALC$-TBoxes.
% }
% \end{example}
%
We now state the main connection between bisimulations and \ALC,
well-known from modal logic~\cite{GorankoOtto}.  Say that
$(\Imc_{1},d_{1})$ and $(\Imc_{2},d_{2})$ are
\emph{$\ALC_{\Sigma}$-equivalent}, in symbols
$(\Imc_{1},d_{1})\equiv_{\Sigma}(\Imc_{2},d_{2})$, if for all
$\Sigma$-concepts $C$, $d_{1}\in C^{\Imc_{1}}$ iff $d_{2}\in
C^{\Imc_{2}}$. An
\emph{interpretation} $\mathcal{I}$ has \emph{finite outdegree} if $\{
d' \mid (d,d') \in \bigcup_{r \in \NR} r^\Imc \}$ is finite, for all
\mbox{$d\in \Delta^{\mathcal{I}}$}.
\begin{theorem}\label{theorem:bisimchar}
  For all pointed interpretations $(\Imc_{1},d_{1})$ and
  $(\Imc_{2},d_{2})$ and all finite signatures~$\Sigma$,
  $(\Imc_{1},d_{1}) \sim_\Sigma (\Imc_{2},d_{2})$ implies
  $(\Imc_{1},d_{1})\equiv_{\Sigma}(\Imc_{2},d_{2})$; the converse
  holds for all $\Imc_1,\Imc_2$ of finite outdegree.
%   For all pointed interpretations $(\Imc_{1},d_{1})$ and
%   $(\Imc_{2},d_{2})$ and all finite signatures~$\Sigma$,
% \begin{enumerate}
% \item $(\Imc_{1},d_{1}) \sim_\Sigma (\Imc_{2},d_{2})$ implies 
% $(\Imc_{1},d_{1})\equiv_{\Sigma}(\Imc_{2},d_{2})$;
% \item $(\Imc_{1},d_{1})\equiv_{\Sigma}(\Imc_{2},d_{2})$ implies 
% $(\Imc_{1},d_{1}) \sim_\Sigma (\Imc_{2},d_{2})$ if $\Imc_{1},\Imc_{2}$ have
% finite outdegree.
% \end{enumerate}
\end{theorem}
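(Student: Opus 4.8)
The plan is to prove the two implications separately. For the forward direction, fix a $\Sigma$-bisimulation $S$ between $(\Imc_1,d_1)$ and $(\Imc_2,d_2)$ and prove, by induction on the structure of $\Sigma$-concepts $C$, the claim that $d\in C^{\Imc_1}$ iff $d'\in C^{\Imc_2}$ holds for every $(d,d')\in S$. The base cases $C=\top$ and $C=A$ with $A\in\Sigma\cap\NC$ are immediate ($\top$ trivially, $A$ by Condition~1 of the bisimulation). For $C=\neg D$ and $C=D_1\sqcap D_2$ the claim follows directly from the induction hypothesis and the Boolean semantics. The only interesting case is $C=\exists r.D$ with $r\in\Sigma\cap\NR$: if $d\in C^{\Imc_1}$, pick an $r$-successor $e$ of $d$ with $e\in D^{\Imc_1}$; Condition~2 gives an $r$-successor $e'$ of $d'$ with $(e,e')\in S$, and the induction hypothesis yields $e'\in D^{\Imc_2}$, hence $d'\in C^{\Imc_2}$; the converse implication is symmetric via Condition~3. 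Applying the claim to $(d_1,d_2)\in S$ gives $(\Imc_1,d_1)\equiv_\Sigma(\Imc_2,d_2)$.

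For the converse, assume $\Imc_1,\Imc_2$ have finite outdegree and $(\Imc_1,d_1)\equiv_\Sigma(\Imc_2,d_2)$. Set $S=\{(d,d')\in\Delta^{\Imc_1}\times\Delta^{\Imc_2}\mid (\Imc_1,d)\equiv_\Sigma(\Imc_2,d')\}$; by assumption $(d_1,d_2)\in S$, so it remains to verify the three bisimulation conditions. Condition~1 is immediate since every $A\in\Sigma\cap\NC$ is a $\Sigma$-concept. For Condition~2, take $(d,d')\in S$ and an $r$-successor $e$ of $d$ with $r\in\Sigma\cap\NR$, and suppose towards a contradiction that no $r$-successor $e'$ of $d'$ satisfies $(e,e')\in S$. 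By finite outdegree of $\Imc_2$, let $e'_1,\dots,e'_n$ enumerate the $r$-successors of $d'$. For each $i$ we have $(\Imc_1,e)\not\equiv_\Sigma(\Imc_2,e'_i)$, and using closure of $\Sigma$-concepts under negation to fix the direction of the separating concept, we obtain a $\Sigma$-concept $D_i$ with $e\in D_i^{\Imc_1}$ and $e'_i\notin D_i^{\Imc_2}$. Then $D:=D_1\sqcap\cdots\sqcap D_n$ (with $D:=\top$ if $n=0$) is a $\Sigma$-concept, and $\exists r.D$ witnesses $(\Imc_1,d)\not\equiv_\Sigma(\Imc_2,d')$: the element $e$ shows $d\in(\exists r.D)^{\Imc_1}$, while no $r$-successor of $d'$ lies in $D^{\Imc_2}$, so $d'\notin(\exists r.D)^{\Imc_2}$. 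This contradicts $(d,d')\in S$. Condition~3 is handled symmetrically, now using finite outdegree of $\Imc_1$.

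The main obstacle is exactly this last aggregation step: passing from the pointwise inequivalences of $e$ with each successor $e'_i$ to a single $\Sigma$-concept $\exists r.D$ that separates $d$ from $d'$. This is where finite outdegree is essential — with infinitely many $r$-successors one would be forced to take an infinitary conjunction, which is not an \ALC concept, and indeed the converse fails without the finite-outdegree hypothesis. I note in passing that finiteness of $\Sigma$ itself is not actually used in either direction (each separating concept has finite signature and only finitely many are conjoined); it is imposed here only for uniformity with the rest of the development.
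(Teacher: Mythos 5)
Your proof is correct and is the standard Hennessy--Milner argument for modal logic adapted to $\ALC$: the paper does not prove Theorem~\ref{theorem:bisimchar} at all but simply cites it as well known from \cite{GorankoOtto}, so yours is effectively the proof the paper has in mind. Your aside is also accurate — finiteness of $\Sigma$ plays no role in either direction (the aggregation step uses only the finitely many $r$-successors supplied by the finite-outdegree hypothesis), whereas finite outdegree is genuinely indispensable for the converse, as you note.
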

Bisimulations enable a purely semantic characterization of
uniform interpolants. For a pointed interpretation $(\Imc,d)$, we
write $(\Imc,d) \models \exists^\sim_{\overline \Sigma}. \Tmc$ when
$(\Imc,d)$ is $\Sigma$-bisimilar to some pointed interpretation
$(\Jmc,d')$ with $\Jmc$ a model of $\Tmc$. The notation reflects that
what we express here can be understood as a form of bisimulation
quantifier, see~\cite{French}.

%ALTERNATIVES: $(\Imc,d) \models \exists_{\Sigma}^{bsim}\Tmc$ or $(\Imc,d)\models \exists^{bsim}\Sigma.\Tmc$.
 
\begin{theorem}\label{bisimuniform}
  Let $\Tmc$ be a TBox and $\Sigma\subseteq {\sf sig}(\Tmc)$. A
  $\Sigma$-TBox $\Tmc_\Sigma$ is a uniform $\Sigma$-interpolant of
  $\Tmc$ iff for all interpretations~$\Imc$,
  \begin{equation*}
\Imc \models \Tmc_\Sigma \quad \Leftrightarrow \quad \text{for all }d \in \Delta^{\Imc},\
(\Imc,d) \models \exists^\sim_{\overline{\Sigma}} . \Tmc.
\tag{$*$}
  \end{equation*}
\end{theorem}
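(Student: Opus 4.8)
The plan is to prove the biconditional by unfolding both sides into their four natural sub-implications and observing that three of them are soft while exactly one carries the real content. First reduce: since the statement already assumes $\mn{sig}(\Tmc_\Sigma)\subseteq\Sigma\subseteq\mn{sig}(\Tmc)$, "$\Tmc_\Sigma$ is a uniform $\Sigma$-interpolant of $\Tmc$" is equivalent to "$\Tmc$ and $\Tmc_\Sigma$ are $\Sigma$-inseparable". A fact used throughout is $\Tmc\models\Tmc_\Sigma$: in the $(\Rightarrow)$ direction it holds because every inclusion of the $\Sigma$-TBox $\Tmc_\Sigma$ is a $\Sigma$-inclusion trivially entailed by $\Tmc_\Sigma$, hence by $\Tmc$ via inseparability; in the $(\Leftarrow)$ direction it follows by applying $(*)$ to an arbitrary model $\Imc$ of $\Tmc$ and using $(\Imc,d)\sim_\Sigma(\Imc,d)$. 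The three soft sub-implications are then handled by one pattern: given a $\Sigma$-bisimulation from $(\Imc,d)$ to a pointed model $(\Jmc,e)$ of $\Tmc$, the "$\sim_\Sigma\Rightarrow\equiv_\Sigma$" half of Theorem~\ref{theorem:bisimchar} (valid for all interpretations) transports any $\Sigma$-concept between $d$ and $e$; combining this with $\Tmc\models\Tmc_\Sigma$ one gets (a) if $\Tmc_\Sigma$ is an interpolant and every point of $\Imc$ satisfies $\exists^\sim_{\overline{\Sigma}}.\Tmc$ then $\Imc\models\Tmc_\Sigma$; (b) if $(*)$ holds then $\Tmc_\Sigma\models C\sqsubseteq D$ implies $\Tmc\models C\sqsubseteq D$; and (c) if $(*)$ holds then $\Tmc\models C\sqsubseteq D$ implies $\Tmc_\Sigma\models C\sqsubseteq D$, for every $\Sigma$-inclusion $C\sqsubseteq D$.

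The one substantial direction is: \emph{if $\Tmc_\Sigma$ is a uniform $\Sigma$-interpolant and $\Imc\models\Tmc_\Sigma$, then $(\Imc,d)\models\exists^\sim_{\overline{\Sigma}}.\Tmc$ for every $d\in\Delta^\Imc$}. Since the class of pointed interpretations satisfying $\exists^\sim_{\overline{\Sigma}}.\Tmc$ is closed under $\Sigma$-bisimulation (the composition of two $\Sigma$-bisimulations is again one), we may replace $(\Imc,d)$ by its $\Sigma$-tree unraveling; so assume $\Imc$ is a tree whose only roles lie in $\Sigma$, with root $d$, and $\Imc\models\Tmc_\Sigma$. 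Next a compactness argument: for every $e\in\Delta^\Imc$ the $\Sigma$-type $t_\Sigma(\Imc,e)=\{C\ \Sigma\text{-concept}\mid e\in C^\Imc\}$ is satisfiable in some model of $\Tmc$, since otherwise some finite conjunction $C$ of its members gives $\Tmc\models C\sqsubseteq\bot$, a $\Sigma$-inclusion, whence $\Tmc_\Sigma\models C\sqsubseteq\bot$, contradicting $e\in C^\Imc$ and $\Imc\models\Tmc_\Sigma$. Fix an $\omega$-saturated model $\mathcal M$ of $\Tmc$ that realizes \emph{every} $\Sigma$-type consistent with $\Tmc$ (e.g.\ an $\omega$-saturated elementary extension of the disjoint union of realizing models; TBox-hood is preserved under disjoint unions).

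The desired model $\Jmc$ of $\Tmc$ is then built by \emph{grafting}: keep the $\Sigma$-tree $\Imc$ with all its $\Sigma$-edges intact, and attach to each tree node a fresh copy of the fragment of $\mathcal M$ around a chosen certificate point of that node's $\Sigma$-type, continuing a path purely inside $\mathcal M$ once it leaves the tree along a non-$\Sigma$ role; certificates are threaded so that along each retained $\Sigma$-edge the certificates of the endpoints are linked by the same role in $\mathcal M$, and non-$\Sigma$ concept names and non-$\Sigma$ roles at tree nodes are read off from the certificate. By construction the identity on the tree is a $\Sigma$-bisimulation between $(\Imc,d)$ and $(\Jmc,d)$, as no $\Sigma$-edge is added to or among tree nodes and $\Sigma$-concept names are unchanged there. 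What remains — the crux — is $\Jmc\models\Tmc$, which reduces to an induction showing that at every node the $\ALC$-concepts satisfied in $\Jmc$ coincide with those satisfied by the corresponding certificate in $\mathcal M$, so that every inclusion of $\Tmc$ transfers. The hard case is $\exists r.C$ with $r\in\Sigma$ at a tree node: the "$\Jmc\Rightarrow\mathcal M$" direction works because certificates are threaded along $\Sigma$-edges, but "$\mathcal M\Rightarrow\Jmc$" requires that the \emph{rigid} $\Sigma$-successor structure inherited from $\Imc$ is already rich enough to realize, after grafting, every $r$-successor behaviour that $\mathcal M$ — and hence $\Tmc$ — demands, even though $\Imc$, being an unraveling, is not $\omega$-saturated. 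Establishing this is where the finiteness of $\Tmc_\Sigma$ and, once more, $\Sigma$-inseparability must be used: $\Tmc_\Sigma$ already forces, at each point and for each finite $\Sigma$-approximation of what $\Tmc$ requires of an $r$-successor, the presence of a suitable successor, and one must assemble these finite witnesses into a globally coherent choice of certificates (allowing, if necessary, several grafted copies of the same tree node). Getting this reconciliation right is the main obstacle; the remaining bookkeeping — that $\Jmc$ models $\Tmc$ and that the identity is a $\Sigma$-bisimulation — is routine.
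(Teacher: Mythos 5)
Your reduction to the single hard implication (``\,$\Tmc_\Sigma$ a uniform $\Sigma$-interpolant and $\Imc\models\Tmc_\Sigma$ imply every point of $\Imc$ satisfies $\exists^\sim_{\overline{\Sigma}}.\Tmc$'') is correct, as is the compactness step showing each $\Sigma$-type realised in $\Imc$ is $\Tmc$-satisfiable. But you stop exactly at the step that actually carries the theorem, and what you describe there does not close the gap. The obstacle is not bookkeeping: for an $\Imc$ of infinite outdegree (or, after unravelling, of infinite branching), $\Sigma$-type agreement between a tree node and a certificate in an $\omega$-saturated $\mathcal M$ does \emph{not} give a $\Sigma$-bisimulation, because the Hennessy--Milner property ($\equiv_\Sigma\Rightarrow\sim_\Sigma$) requires modal saturation on \emph{both} sides, and the unravelled $\Imc$ is not modally saturated. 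Concretely, $\mathcal M$ may demand an $r$-successor ($r\in\Sigma$) with a $\Sigma$-behaviour that no $r$-successor in $\Imc$ approximates even up to finite depth, and adding one to $\Jmc$ destroys any bisimulation back to $\Imc$. Duplicating tree nodes, which you mention parenthetically, handles only the easy phenomenon (too few successors of a given $\Sigma$-type), not the possible absence of a matching $\Sigma$-type altogether. You acknowledge this yourself (``getting this reconciliation right is the main obstacle''), i.e.\ the proof is not given.

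The paper avoids this by a two-step argument that you do not reproduce. First, it proves the theorem only for interpretations $\Imc$ of \emph{finite outdegree}, where the Hennessy--Milner argument is sound: finite outdegree implies modal saturation, the $\Tmc$-side witness can be taken modally saturated by compactness, and then $\equiv_\Sigma\Rightarrow\sim_\Sigma$ (the modally saturated strengthening of Theorem~\ref{theorem:bisimchar}) gives the bisimulation outright, with no grafting needed. Second, to pass from finite outdegree to arbitrary $\Imc$, the paper uses the automata machinery of Section~\ref{sect:automatastuff}: the complement $\Amc$ of $\Amc_{\Tmc,\Sigma}$ recognises exactly the pointed $\Sigma$-interpretations \emph{not} satisfying $\exists^\sim_{\overline{\Sigma}}.\Tmc$ (Theorem~\ref{thm:bisimauto}, Lemma~\ref{lem:aptabisimclosed}); given a hypothetical model $\Imc\models\Tmc_\Sigma$ with a bad point, one unravels, takes an accepting run, and prunes subtrees synchronously in the interpretation and the run to obtain a \emph{finite-outdegree} model $\Jmc\models\Tmc_\Sigma$ still accepted by $\Amc$, contradicting the already-proved finite-outdegree case. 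Your proposal contains no analogue of this reduction, and your grafting sketch cannot replace it; you would at minimum need some form of saturation or automaton-based pruning to get $\Imc$ into a regime where type agreement implies bisimilarity.
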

%
%We illustrate this criterion by proving Example~\ref{ex1}.using Example~\ref{ex1}.
For $\Imc$ of finite outdegree, one can prove this result by employing
compactness arguments and Theorem~\ref{theorem:bisimchar}. To prove it
in its full generality, we need the automata-theoretic machinery
introduced in Section~\ref{sect:automatastuff}. We illustrate
Theorem~\ref{bisimuniform} by sketching a proof of
Example~\ref{ex1}(i). Correctness of \ref{ex1}(ii) is proved in the
appendix, while \ref{ex1}(iii) and \ref{ex1}(iv) are addressed in
Section~\ref{sect:charact}. An interpretation $\Imc$ is called a
\emph{tree interpretation} if the undirected graph
$(\Delta^{\Imc},\bigcup_{r\in \NR}r^{\Imc})$ is a (possibly infinite)
tree and $r^\Imc \cap s^\Imc = \emptyset$ for all distinct $r,s \in \NR$.
%$\rho^{\Imc}$ denotes the root of $\Imc$.
%
\begin{example}%[Example~\ref{ex1}(i)]
\label{ex:charact}
  {\em Let $\Tmc_1=\{A \sqsubseteq \exists r.B\sqcap \exists r.\neg B\}$, $\Sigma_1=\{A,r\}$,
    and $\Tmc_1'= \{A \sqsubseteq \exists r.\top\}$ as in Example~\ref{ex1}(i). We have
       $\Imc \models \Tmc'_1$
    \\[1mm]
%   $\Imc \models \Tmc'_1$ \\[1mm] % iff $\forall e \in A^\Imc: e \in (\exists r . \top)^\Imc$ \\[1mm]
   iff $\forall d \in \Delta^\Imc$: $(\Imc,d) \sim_{\Sigma_1} (\Jmc,d)$ for a tree model \Jmc of
   $\Tmc'_1$\\[1mm]
   iff $\forall d \in \Delta^\Imc$: $(\Imc,d) \sim_{\Sigma_1} (\Jmc,d)$ for a tree interpretation~\Jmc\\
   \hspace*{12mm}such that $e \in A^\Jmc$ implies $|\{d \mid (d,e) \in r^\Jmc \}| \geq 2$ \\[1mm]
   iff $\forall d \in \Delta^\Imc$: $(\Imc,d) \sim_{\Sigma_1} (\Jmc,d)$ for a tree interpretation \Jmc\\
   \hspace*{12mm}such that $e \in A^\Jmc$ implies $e \in (\exists r . B \sqcap \exists r . \neg B)^\Jmc$ \\[1mm]
   iff $\forall d \in \Delta^\Imc$: $(\Imc,d) \models \exists^\sim_{\overline{\Sigma}_{1}} .\Tmc_{1}$.

    \smallskip
    \noindent
    The first `iff' relies on the fact that unraveling
    an interpretation into a tree interpretation preserves
    bisimularity, the second one on the fact that bisimulations are
    oblivious to the duplication of successors, and the third one on a
    reinterpretation of $B \notin \Sigma_1$ in~\Jmc.
%    For all interpretations $\Imc$ we have that $\Imc$ is a model of $A
%    \sqsubseteq \exists r.\top$ iff every $e\in A^{\Imc}$ has at least
%    one $r$-successor iff every $(\Imc,d)$ is $\{A,r\}$-bisimilar to
%    some tree-interpretation $(\Jmc,d)$ in which every $e\in A^{\Jmc}$
%    has at least two $r$-successors iff every $(\Imc,d)$ is
%    $\{A,r\}$-bisimilar to some tree-interpretation $(\Jmc,d)$ in which
%    exactly the $e\in A^{\Jmc}$ have an $r$-successor in $B^{\Jmc}$ and
%    an $r$-successor not in $B^{\Jmc}$ iff
%    $(\Imc,d)\models\exists^{\sim}_{\overline{\{A,r\}}}\Tmc_1$.  
}
\end{example}
Theorem~\ref{bisimuniform} also yields a characterization of
conservative extensions in terms of bisimulations, which is as follows.
\begin{theorem}\label{bisimCE}
  Let $\Tmc,\Tmc'$ be TBoxes. Then $\Tmc \cup \Tmc'$
  is a conservative extension of \Tmc iff for all interpretations
  $\Imc$,
$
\Imc \models \Tmc \Rightarrow \text{for all }d \in \Delta^{\Imc},\
(\Imc,d) \models \exists^\sim_{\overline{\Sigma}} . \Tmc'$ where $\Sigma=\mn{sig}(\Tmc)$.
\end{theorem}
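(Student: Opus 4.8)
The plan is to derive Theorem~\ref{bisimCE} as a corollary of Theorem~\ref{bisimuniform}, using the fact that a conservative extension is just a special case of uniform interpolation where the signature is fixed to $\Sigma = \mn{sig}(\Tmc)$. Concretely, I would first observe that $\Tmc \cup \Tmc'$ is a conservative extension of $\Tmc$ (w.r.t.\ $\Sigma = \mn{sig}(\Tmc)$) if and only if $\Tmc$ is a uniform $\Sigma$-interpolant of $\Tmc \cup \Tmc'$. The `only if' direction is immediate from Definition~\ref{def:unifint}: conservativity gives $\Sigma$-inseparability of $\Tmc$ and $\Tmc \cup \Tmc'$, and trivially $\mn{sig}(\Tmc) = \Sigma \subseteq \mn{sig}(\Tmc \cup \Tmc')$. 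For the `if' direction, if $\Tmc$ is a uniform $\Sigma$-interpolant of $\Tmc \cup \Tmc'$ with $\Sigma = \mn{sig}(\Tmc)$, then $\Tmc$ and $\Tmc \cup \Tmc'$ are $\Sigma$-inseparable and $\Tmc \cup \Tmc' \supseteq \Tmc$, which is exactly the definition of conservative extension. (One should double-check the edge requirement $\Sigma \subseteq \mn{sig}(\Tmc \cup \Tmc')$, which holds since $\mn{sig}(\Tmc) \subseteq \mn{sig}(\Tmc \cup \Tmc')$.)

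Next I would invoke Theorem~\ref{bisimuniform} with the TBox there instantiated as $\Tmc \cup \Tmc'$ and the signature as $\Sigma = \mn{sig}(\Tmc)$, and with $\Tmc_\Sigma := \Tmc$ (note $\mn{sig}(\Tmc) = \Sigma \subseteq \mn{sig}(\Tmc \cup \Tmc')$, so the hypotheses of the theorem are met). This yields: $\Tmc$ is a uniform $\Sigma$-interpolant of $\Tmc \cup \Tmc'$ iff for all interpretations $\Imc$, $\Imc \models \Tmc$ iff for all $d \in \Delta^\Imc$, $(\Imc,d) \models \exists^\sim_{\overline\Sigma}.(\Tmc \cup \Tmc')$. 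Combining with the equivalence from the previous paragraph, conservativity holds iff this biconditional ($*$) holds. It then remains to simplify ($*$) to the one-directional implication stated in Theorem~\ref{bisimCE} and to replace $\Tmc \cup \Tmc'$ by $\Tmc'$ inside the bisimulation quantifier.

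The simplification rests on two easy observations. First, the left-to-right direction of ($*$) — if $\Imc \models \Tmc$ then every point of $\Imc$ is $\Sigma$-bisimilar to some point of a model of $\Tmc \cup \Tmc'$ — is exactly the implication in Theorem~\ref{bisimCE} once we show the bisimulation quantifier over $\Tmc \cup \Tmc'$ coincides with the one over $\Tmc'$; and the right-to-left direction of ($*$) is in fact \emph{always true} and hence carries no information. Indeed, if $(\Imc,d) \models \exists^\sim_{\overline\Sigma}.(\Tmc\cup\Tmc')$ for every $d$, then in particular each $(\Imc,d)$ is $\Sigma$-bisimilar to some $(\Jmc_d, d')$ with $\Jmc_d \models \Tmc \cup \Tmc' \models \Tmc$; since $\Sigma = \mn{sig}(\Tmc)$ and $\Sigma$-bisimilarity preserves satisfaction of $\Sigma$-concepts (Theorem~\ref{theorem:bisimchar}), $d \in C^\Imc$ iff $d \in D^\Imc$ whenever $\Jmc_d \models C \sqsubseteq D$... more carefully, for each inclusion $C \sqsubseteq D \in \Tmc$ and each $d \in C^\Imc$, the bisimilar $d'$ lies in $C^{\Jmc_d} \subseteq D^{\Jmc_d}$, so $d \in D^\Imc$; hence $\Imc \models \Tmc$. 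So the right-to-left implication of ($*$) holds unconditionally, and ($*$) reduces to its left-to-right half. Second, the bisimulation quantifier over $\Tmc \cup \Tmc'$ and over $\Tmc'$ agree here: this needs the standard fact that bisimilarity is preserved under disjoint union / taking a point-generated substructure together with reinterpreting non-$\Sigma$ symbols, so that any model of $\Tmc'$ bisimilar to $(\Imc,d)$ can be modified off $\Sigma$ to also satisfy $\Tmc$ — this is where I would reuse the kind of reasoning already illustrated in Example~\ref{ex:charact} (duplication of successors, reinterpretation of symbols outside $\Sigma$). The main obstacle, and the only place real care is needed, is this last point: one must argue that $(\Imc,d) \models \exists^\sim_{\overline\Sigma}.\Tmc'$ is equivalent to $(\Imc,d) \models \exists^\sim_{\overline\Sigma}.(\Tmc\cup\Tmc')$, i.e.\ that a $\Sigma$-bisimilar model of $\Tmc'$ can always be turned into a $\Sigma$-bisimilar model of $\Tmc \cup \Tmc'$; since $\Sigma = \mn{sig}(\Tmc)$ this is really a statement about $\Imc$ itself ($\Imc \models \Tmc$), and it goes through by taking the disjoint union of $\Imc$ with the given model of $\Tmc'$ and noting the bisimulation survives — but making this fully rigorous in the general (infinite outdegree) case is exactly what the automata machinery of Section~\ref{sect:automatastuff} underpinning Theorem~\ref{bisimuniform} is for, so I would lean on Theorem~\ref{bisimuniform} rather than redo it by hand.
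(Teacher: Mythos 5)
Your derivation is the right plan and matches the paper's stated intent (the paper presents Theorem~\ref{bisimCE} as a direct consequence of Theorem~\ref{bisimuniform} with no separate proof). The reduction of conservativity to uniform interpolation is correct, the instantiation of Theorem~\ref{bisimuniform} with TBox $\Tmc\cup\Tmc'$, signature $\Sigma=\mn{sig}(\Tmc)$, and $\Tmc_\Sigma:=\Tmc$ is legitimate, and your observation that the ``$\Leftarrow$'' half of ($*$) is vacuous is exactly right: if every $(\Imc,d)$ is $\Sigma$-bisimilar to a point of a model of $\Tmc\cup\Tmc'\models\Tmc$, then $\Imc\models\Tmc$ follows from Theorem~\ref{theorem:bisimchar}. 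Up to that point you have cleanly established
\[
\Tmc\cup\Tmc' \text{ is a CE of } \Tmc
\ \Longleftrightarrow\
\forall\Imc\models\Tmc\ \forall d\in\Delta^\Imc:\ (\Imc,d)\models\exists^\sim_{\overline\Sigma}.(\Tmc\cup\Tmc').
\]

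The gap is in the very last step you flag, and the disjoint-union argument you propose for it does not work. You want, for $\Imc\models\Tmc$, the equivalence of $\exists^\sim_{\overline\Sigma}.\Tmc'$ and $\exists^\sim_{\overline\Sigma}.(\Tmc\cup\Tmc')$. The implication $\exists^\sim_{\overline\Sigma}.(\Tmc\cup\Tmc')\Rightarrow\exists^\sim_{\overline\Sigma}.\Tmc'$ is trivial, but the converse is false in general: a model $\Jmc$ of $\Tmc'$ that is $\Sigma$-bisimilar to $(\Imc,d)$ may have elements that are not in the range of the bisimulation (reached only via roles outside~$\Sigma$), and at those elements $\Tmc$ may be violated in a way that cannot be repaired by reinterpreting non-$\Sigma$ symbols, because $\Tmc$ is a $\Sigma$-TBox. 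Taking a disjoint union of $\Imc$ and $\Jmc$ does not help: the $\Imc$-part need not satisfy $\Tmc'$ and the $\Jmc$-part need not satisfy $\Tmc$. A concrete witness: let $\Tmc=\{A\sqsubseteq\exists r.A\}$, so $\Sigma=\{A,r\}$, and $\Tmc'=\{\top\sqsubseteq\exists s.(A\sqcap\neg\exists r.A)\}$. For any model $\Imc$ of $\Tmc$ and any $d$, one has $(\Imc,d)\models\exists^\sim_{\overline\Sigma}.\Tmc'$: attach to every element a fresh infinite $s$-chain whose nodes lie in $A$ and have no $r$-successors; this is invisible to $\sim_\Sigma$. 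Yet $\Tmc\cup\Tmc'$ is unsatisfiable, so $(\Imc,d)\not\models\exists^\sim_{\overline\Sigma}.(\Tmc\cup\Tmc')$ for every $(\Imc,d)$. This also shows that the step cannot be salvaged by ``leaning on the automata machinery'' --- that machinery is there to remove the finite-outdegree restriction from ($*$), not to bridge $\Tmc'$ and $\Tmc\cup\Tmc'$. The clean resolution is simply to stop at the displayed characterization with $\Tmc\cup\Tmc'$ inside the bisimulation quantifier (which is what your derivation actually proves); under the standing convention $\Tmc'\supseteq\Tmc$ --- natural for conservative extensions and making $\Tmc\cup\Tmc'=\Tmc'$ --- this coincides with the statement as printed.
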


%NOTE: In ``logic paper'' we might want to add that the characterization actually 
%does not hold for infinite TBoxes. This indicates: we need the ``combinatorics'' of
%behind the automata characterization.
%

%%% Local Variables: 
%%% mode: latex
%%% TeX-master: "subuni"
%%% End: 

\section{Characterizing Existence of Interpolants}
\label{sect:charact}

If we admit TBoxes that are infinite, then uniform
$\Sigma$-interpolants always exist: for any TBox $\Tmc$ and
signature~$\Sigma$, the infinite TBox $\Tmc_\Sigma^\infty$ that
consists of all $\Sigma$-inclusions $C \sqsubseteq D$ with $\Tmc
\models C \sqsubseteq D$ is a uniform $\Sigma$-interpolant of \Tmc.
% This simple observation is the departure point on the way to a purely
% semantic characterization of the existence of uniform
% interpolants. 
To refine this simple observation, we define the \emph{role-depth}
$\mn{rd}(C)$ of a concept $C$ to be the nesting depth of existential
restrictions in $C$.  For every finite signature $\Sigma$ and $m\geq
0$, one can fix a \emph{finite} set $\mathcal{C}_{f}^{m}(\Sigma)$ of
$\Sigma$-concepts $D$ with $\mn{rd}(D)\leq m$ such that every
$\Sigma$-concept $C$ with $\mn{rd}(C)\leq m$ is equivalent to some
$D\in \mathcal{C}_{f}^{m}(\Sigma)$. Let
%$
\vspace*{-0.1cm}
$$ 
\Tmc_{\Sigma,m} = \{ C \sqsubseteq D \mid \mathcal{T} \models C \sqsubseteq D
\mbox{ and } C,D\in \mathcal{C}_{f}^{m}(\Sigma)\}.
\vspace*{-0.1cm}
$$
%$
Clearly, $\Tmc^\infty_\Sigma$ is equivalent to $\bigcup_{m \geq 0}
\Tmc_{\Sigma,m}$ suggesting that if a uniform interpolant exists,
it is one of the TBoxes $\Tmc_{\Sigma,m}$.
In fact, 
%.  Since every $\Sigma$-concept is
%equivalent to some concept in $\bigcup_{m\geq
%  0}\mathcal{C}_{f}^{m}(\Sigma)$, 
it is easy to see that the following
are equivalent (this is similar to the approximation of uniform
interpolants in \cite{DBLP:conf/semweb/WangWTPA09}):
%a
\begin{itemize}

\item[(a)] there does not exist a uniform $\Sigma$-interpolant of $\Tmc$;

\item[(b)] no $\Tmc_{\Sigma,m}$ is a uniform $\Sigma$-interpolant of $\Tmc$;

\item[(c)] for all $m\geq 0$ there is a $k>m$ such that 
$\Tmc_{\Sigma,m}\not\models \Tmc_{\Sigma,k}$.

\end{itemize}
Our characterization of the (non)-existence of uniform interpolants is
based on an analysis of the TBoxes $\Tmc_{\Sigma,m}$.  For an
interpretation \Imc, $d \in \Delta^\Imc$, and $m \geq 0$, we use
$\Imc^{\leq m}(d)$ to denote the \emph{m-segment generated by $d$ in
  $\Imc$}, i.e., the restriction of \Imc to those elements of
$\Delta^\Imc$ that can be reached from $d$ in at most $m$ steps in the
graph $(\Delta^\Imc, \bigcup_{r \in \NR} r^\Imc)$. Using the
definition of~$\Tmc_{\Sigma,m}$, Theorem~\ref{theorem:bisimchar}, and
the fact that every $m$-segment can be described up to bisimulation
using a concept of role-depth $m$, it can be shown that an
interpretation \Imc is a model of $\Tmc_{\Sigma,m}$ iff each of \Imc's
$m$-segments is $\Sigma$-bisimilar to an $m$-segment of a model of
\Tmc. Thus, if $\Tmc_{\Sigma,m}$ is \emph{not} a uniform interpolant,
then this is due to a problem that cannot be `detected' by
$m$-segments, i.e., some $\Sigma$-part of a model of \Tmc that is
located before an $m$-segment can pose constraints on $\Sigma$-parts
of the model after that segment, where `before' and `after' refer 
to reachability in $(\Delta^\Imc, \bigcup_{r \in \NR} r^\Imc)$.

The following result describes this in an exact way. Together with the
equivalence of (a) and (b) above, it yields a first characterization
of the existence of uniform interpolants.  
$\rho^{\Imc}$ denotes the root of a tree interpretation $\Imc$,
$\Imc^{\leq m}$ abbreviates $\Imc^{\leq m}(\rho^\Imc)$,
and a $\Sigma$-tree interpretation is a tree interpretations that only interprets
predicates from $\Sigma$. 
\vspace*{-0.1cm}
% by $\Imc^{\leq m}$ the restriction of $\Imc$ to 
%\{the set of all $d\in \Delta^{\Imc}$
%of depth $\leq n$ is denoted by $\Imc^{\leq n}$.
%
%
% 
%  
%We use $(\mathcal{I},d) \leadsto_{\Sigma} \mathcal{T}$ to denote that there exists a model 
%$\mathcal{J}$ of $\mathcal{T}$ and $d'\in \Delta^{\mathcal{J}}$ 
%such that $(\mathcal{I},d) \sim_{\Sigma} (\mathcal{J},d')$.
%
%Suppose a TBox $\mathcal{T}$ and a signature $\Sigma \subseteq {\sf sig}(\Tmc)$% are given.
%By $\mathcal{C}^{m}(\Sigma)$ we denote the set of all $\Sigma$-concepts of
%depth at most $m$. Modulo logical equivalence there are only finitely many such
%concepts. Denote by $\mathcal{C}_{f}^{m}(\Sigma)$ a fixed finite set
%of concepts in $\mathcal{C}^{m}(\Sigma)$ such that for each $C \in \mathcal{C}^%{m}(\Sigma)$
%there exists an equivalent $C'\in \mathcal{C}_{f}^{m}(\Sigma)$.
%We are going to make frequent use of the following TBoxes
%$$
%\Tmc_{\Sigma,m} = \{ \top \sqsubseteq C \mid \mathcal{T} \models \top \sqsubset%eq C
%\mbox{ and } C\in \mathcal{C}_{f}^{m}(\Sigma)\}.
%$$
%Clearly, there exists a $\Sigma$-uniform interpolant of $\mathcal{T}$ iff there% exists
%an $m\geq 0$ such that $\Tmc_{\Sigma,m} \models \Tmc_{\Sigma,k}$ for all $k\geq% m$. 
%If this happens to be the case for $m$, then $\Tmc_{\Sigma,m}$ is a $\Sigma$-un%iform interpolant of. 
%%
%Based on this observation, we present a semantic characterization for the 
%existence of uniform interpolants.
% 
\begin{theorem}\label{thm1}
Let $\Tmc$ be a TBox, $\Sigma\subseteq {\sf sig}(\Tmc)$, and
$m\geq 0$. Then $\Tmc_{\Sigma,m}$ is not a uniform $\Sigma$-interpolant of 
$\Tmc$ iff 

\smallskip

\noindent $(\ast_{m})$ there exist two $\Sigma$-tree interpretations, 
$\mathcal{I}_{1}$ and $\mathcal{I}_{2}$, of finite outdegree such that
\begin{enumerate}
\item $\mathcal{I}_{1}^{\leq m} \;=\;  \mathcal{I}_{2}^{\leq m}$;
\item $(\mathcal{I}_{1},\rho^{\mathcal{I}_{1}})\models  
\exists^\sim_{\overline{\Sigma}}. \Tmc$;
\item $(\mathcal{I}_{2},\rho^{\mathcal{I}_{2}}) \not\models 
\exists^\sim_{\overline{\Sigma}}. \Tmc$;
\item For all successors $d$ of $\rho^{\mathcal{I}_{2}}$:
$(\mathcal{I}_{2},d) \models  \exists^\sim_{\overline{\Sigma}}. \Tmc$.
\end{enumerate}
\end{theorem}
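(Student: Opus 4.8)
The plan is to reduce Theorem~\ref{thm1} to the semantic characterisation of uniform interpolants (Theorem~\ref{bisimuniform}), with the following terminology. Since $\Tmc_{\Sigma,m}$ is a $\Sigma$-TBox all of whose inclusions are entailed by $\Tmc$, Theorem~\ref{theorem:bisimchar} shows that for $\Tmc_{\Sigma,m}$ the implication ``if every element of $\Imc$ satisfies $\exists^\sim_{\overline{\Sigma}}.\Tmc$ then $\Imc\models\Tmc_{\Sigma,m}$'' holds automatically; hence, by Theorem~\ref{bisimuniform}, \emph{$\Tmc_{\Sigma,m}$ fails to be a uniform $\Sigma$-interpolant of $\Tmc$ iff there is an interpretation $\Imc\models\Tmc_{\Sigma,m}$ with an element $d_0$ such that $(\Imc,d_0)\not\models\exists^\sim_{\overline{\Sigma}}.\Tmc$}; call such a $d_0$ \emph{bad} and all other elements \emph{good}. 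For $\ell\geq 0$, call $d$ in a tree interpretation \emph{$\ell$-good} if the restriction of the subtree below $d$ to depth $\leq\ell$ is $\Sigma$-bisimilar to the corresponding restriction below some element of some model of $\Tmc$; equivalently, if the complete $\Sigma$-concept $t_\ell(d)$ of role depth $\ell$ describing that restriction up to $\sim_\Sigma$ is satisfiable relative to $\Tmc$. Write $\mn{bd}(d)$ for the least $\ell$ with $d$ not $\ell$-good ($\infty$ if $d$ is good). Two facts drive the proof: $\sim_\Sigma$ between pointed interpretations restricts to $\sim_\Sigma$ between their $m$-segments (so ``good'' implies ``$m$-good'', and in a tree a bisimulation witnessing goodness of $d$ propagates to all descendants of $d$); and, by the characterisation of models of $\Tmc_{\Sigma,m}$ recalled above Theorem~\ref{thm1}, a tree interpretation is a model of $\Tmc_{\Sigma,m}$ iff each of its elements is $m$-good.

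For the direction $(\ast_{m})\Rightarrow$ ``not a uniform interpolant'': given witnesses $\Imc_1,\Imc_2$, I would show $\Imc_2\models\Tmc_{\Sigma,m}$, which together with condition~3 and the reduction above finishes. Condition~2 makes $\rho^{\Imc_1}$ good, hence $m$-good, and condition~1 then makes $\rho^{\Imc_2}$ $m$-good. For any other element $d$ of $\Imc_2$, condition~4 gives a model $\Jmc$ of $\Tmc$ with $(\Imc_2,d')\sim_\Sigma(\Jmc,e')$ where $d'$ is the successor of $\rho^{\Imc_2}$ above $d$; since $\Imc_2$ is a tree this bisimulation extends down the path to $d$, so $d$ is good and a fortiori $m$-good. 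Thus every element of $\Imc_2$ is $m$-good, as required.

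For the converse: if $\Tmc_{\Sigma,m}$ is not a uniform interpolant then, since the only candidate is $\Tmc^\infty_\Sigma=\bigcup_k\Tmc_{\Sigma,k}$, there is a $k$ with $\Tmc_{\Sigma,m}\not\models\Tmc_{\Sigma,k}$. For each $k$ the set of elements that are not $k$-good is defined by a fixed $\Sigma$-concept $\neg D_k$ of role depth $k$ ($D_k$ the disjunction of all $\Tmc$-satisfiable complete $\Sigma$-types of role depth $k$; $d$ satisfies $D_k$ iff $t_k(d)$ is $\Tmc$-satisfiable, i.e.\ $d$ is $k$-good), so $\neg D_k$ is satisfiable relative to $\Tmc_{\Sigma,m}$; by the (finitely branching) tree-model property of $\ALC$ and passage to the $\Sigma$-reduct there is a $\Sigma$-tree interpretation $\Imc\models\Tmc_{\Sigma,m}$ of finite outdegree whose root is not $k$-good, hence bad. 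As $\Imc\models\Tmc_{\Sigma,m}$, every element is $m$-good, so $\mn{bd}(\cdot)>m$ everywhere. Now I would locate a bad element $u$, $b:=\mn{bd}(u)<\infty$, no successor of which is $(b-1)$-bad: if every bad element had a successor of strictly smaller $\mn{bd}$-value, iterating from the bad root would produce bad elements of $\mn{bd}$-values $\mn{bd}(\rho^\Imc),\mn{bd}(\rho^\Imc)-1,\dots$, contradicting $\mn{bd}(\cdot)>m$. For this $u$, $t_b(u)$ is unsatisfiable while $t_{b-1}(u)$ and every $t_{b-1}(u')$ (for $u'$ a successor of $u$) are satisfiable relative to $\Tmc$, so each $t_{b-1}(u')$ is realised by an element of a finitely branching tree model $\Jmc_{u'}$ of $\Tmc$. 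Let $\Imc_2$ have a root $\rho$ carrying the $\Sigma$-label of $u$ and, for each successor $u'$ of $u$, an edge (with the role of $u\to u'$) from $\rho$ to the root $f_{u'}$ of the $\Sigma$-reduct of the subtree of $\Jmc_{u'}$ below that element. Since this replacement leaves the $(b-1)$-types of the successors unchanged, the depth-$\leq b$ restriction of $\Imc_2$ below $\rho$ still realises $t_b(u)$, so $\rho$ is not $b$-good, i.e.\ bad (condition~3); each $f_{u'}$ is good since its subtree is $\Sigma$-bisimilar to an element of the model $\Jmc_{u'}$ of $\Tmc$ (condition~4); and $\rho$ is $(b-1)$-good, hence $m$-good (as $b-1\geq m$), so $\Imc_2^{\leq m}$ is $\Sigma$-bisimilar to the depth-$\leq m$ restriction below some element $e$ of a finitely branching tree model $\Jmc$ of $\Tmc$. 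Grafting the $\Sigma$-reducts of the subtrees of $\Jmc$ onto the leaves of $\Imc_2^{\leq m}$ along this bisimulation gives a $\Sigma$-tree interpretation $\Imc_1$ of finite outdegree with $\Imc_1^{\leq m}=\Imc_2^{\leq m}$ and $\rho^{\Imc_1}$ good (conditions~1 and~2), establishing $(\ast_{m})$.

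The hard part is this last direction, specifically the extraction of the ``gluing obstruction'' element $u$ together with the verification that replacing the successor subtrees of $u$ by genuine fragments of models of $\Tmc$ destroys neither the badness at the new root --- which is why it is the $(b-1)$-types, rather than the full subtrees, that must be preserved --- nor the $m$-consistency needed to build $\Imc_1$. The finite-outdegree requirement in $(\ast_{m})$ is also what forces the detour through the concepts $D_k$ and the tree-model property of $\ALC$, rather than the more direct compactness argument available for finitely branching interpretations only.
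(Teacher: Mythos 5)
Your ``if'' direction (from $(\ast_m)$ to non-interpolancy) is correct and is essentially the paper's argument: show $\Imc_2 \models \Tmc_{\Sigma,m}$ via the observation that a tree models $\Tmc_{\Sigma,m}$ iff every element is $m$-good (a consequence of Points~1, 2, 4), and then apply the bisimulation characterisation of interpolants together with Point~3.

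For the ``only if'' direction, your route through a minimal ``gluing obstruction'' node $u$ with $b=\mn{bd}(u)>m$ and all successors $(b-1)$-good is a genuinely different and rather elegant decomposition (the paper instead works from $\Tmc_{\Sigma,m'}\not\models\Tmc_{\Sigma,m'+1}$ and applies a segment-matching lemma twice). However, the very last step has a genuine gap. After building $\Imc_2$ you correctly note that $\rho^{\Imc_2}$ is $m$-good, i.e.\ $(\Imc_2,\rho)\sim_\Sigma^m(\Jmc,e)$ for a model $\Jmc$ of $\Tmc$. You then claim that \emph{grafting $\Sigma$-reducts of $\Jmc$-subtrees onto the leaves of $\Imc_2^{\leq m}$} yields an $\Imc_1$ with $\Imc_1^{\leq m}=\Imc_2^{\leq m}$ whose root is good. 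This inference --- from ``$\rho$ is $m$-good'' to ``there is a good $\Imc_1$ sharing \emph{exactly} the $m$-segment $\Imc_2^{\leq m}$'' --- is false in general. The problem is that in the $m$-bounded bisimulation a single leaf $d$ of $\Imc_2^{\leq m}$ may have to cover several $\Jmc$-elements $e_1,e_2,\dots$ that agree with $d$ to depth $0$ but carry incompatible deeper behaviour; grafting forces you to commit $d$ to one of them, and then the resulting $\Imc_1$ need not be $\Sigma$-bisimilar to \emph{any} model of $\Tmc$. Concretely, take $\Sigma=\{A,r\}$, $m=1$, and a TBox forcing every point to have both a ``$B_1$-successor'' (in $A$, all of whose $r$-successors are $\neg A$, at least one) and a ``$B_2$-successor'' (in $A$, all of whose $r$-successors are $A$, at least one), with $B_1\sqcap B_2\sqsubseteq\bot$. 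If $\Imc_2^{\leq 1}$ is $\rho\to d_1$ with $d_1\in A$, then $\rho$ is $1$-good, yet \emph{no} tree $\Imc_1$ with $\Imc_1^{\leq 1}=\Imc_2^{\leq 1}$ has a good root: in any $\Jmc'$ with $(\Imc_1,\rho)\sim_\Sigma(\Jmc',e')$, both the $B_1$- and the $B_2$-successor of $e'$ would have to be bisimilar to $(\Imc_1,d_1)$, forcing $d_1$'s successors to be simultaneously all-$A$ and all-$\neg A$ and nonempty. So $(\ast_1)$ \emph{cannot} be witnessed with your $\Imc_2$; the common $m$-segment must be enlarged.

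The missing ingredient is precisely the paper's Lemma~\ref{lem2}: from $(\Imc_2,\rho)\sim_\Sigma^m(\Jmc,e)$ one builds \emph{new} trees $\Jmc_1,\Jmc_2$ whose shared $m$-segment is assembled from \emph{pairs} $(d,e')$ of $\sim_\Sigma^{m-k}$-related points, with $\Jmc_1\sim_\Sigma\Imc_2$ and $\Jmc_2\sim_\Sigma\Jmc$. In the example above this enlarges the root to have two $r$-successors, one matching $e_1$ and one matching $e_2$. You would then take $\Imc_2':=\Jmc_1$ and $\Imc_1':=\Jmc_2$; Conditions~1 and~2 hold by construction, Condition~3 transfers along $\Jmc_1\sim_\Sigma\Imc_2$, and Condition~4 transfers because the bisimulation maps each son of $\rho^{\Jmc_1}$ to a son of $\rho^{\Imc_2}$, which you already showed to be good. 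With this repair your obstruction-based argument goes through; as written, the grafting step does not.
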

%
%\vspace*{-0.2cm}
Intuitively, Points~1 and~2 ensure that $\mathcal{I}_{1}^{\leq m}
\;=\; \mathcal{I}_{2}^{\leq m}$ is an $m$-segment of a model of
$\Tmc_{\Sigma,m}$, Points~2 and~3 express that in models of \Tmc, the
$\Sigma$-part after the $m$-segment is constrained in some way, and Point~4
says that this is due to $\rho^{\Imc_1}$ and $\rho^{\Imc_2}$, i.e.,
the constraint is imposed `before' the $m$-segment.  The following example
demonstrates how Theorem~\ref{thm1} can be used to prove non-existence
of uniform interpolants.
\vspace*{-0.1cm}
\begin{example}
{\em Let $\Tmc_3=\{A \sqsubseteq B, B \sqsubseteq \exists r.B\}$ and
$\Sigma_{3}=\{A,r\}$ as in Example~\ref{ex1}(iii). We show that $(\ast_{m})$ holds for all $m$
and thus, there is no uniform $\Sigma_3$-interpolant of $\Tmc_3$.
Example~\ref{ex1}(iv) is treated in the long version.

Let $m\geq 0$.  Set
$\mathcal{I}_{1}=(\{0,1,\ldots\},A^{\mathcal{I}_{1}},r^{\mathcal{I}_{1}})$,
where $A^{\mathcal{I}_{1}}=\{0\}$ and $r^{\mathcal{I}_{1}} = \{(n,n+1)
\mid n\geq 0\}$, and let $\mathcal{I}_{2}$ be the restriction of
$\mathcal{I}_{1}$ to $\{0,\ldots,m\}$. Then (1)~$\Imc_{1}^{\leq
  m}=\Imc_{2}^{\leq m}$;
(2)~$(\mathcal{I}_{1},0)\models\exists^{\sim}_{\overline{\Sigma}_{3}}
\Tmc_{3}$ as the expansion of $\mathcal{I}_{1}$ by
$B^{\mathcal{I}_{1}} = \{0,1,\ldots\}$ is a model of
$\mathcal{T}_{3}$; (3)~$(\mathcal{I}_{2},0)\not\models
\exists^{\sim}_{\overline{\Sigma}_{3}}\Tmc_3$ as there is no infinite
$r$-sequence in $\Imc_2$ starting at $0$; and
(4)~$(\mathcal{I}_{2},1)\models
\exists^{\sim}_{\overline{\Sigma}_{3}}\Tmc_3$ as the restriction of
$\Imc_2$ to $\{1,\dots,m\}$ is a model of $\Tmc_3$.  }
\end{example}
The next example illustrates another use of Theorem~\ref{thm1} by
identifying a class of signatures for which uniform interpolants
always exist. Details are given in the long version.

\begin{example}[Forgetting stratified concept names] 
\label{ex:forget}{\em 
% The
%     \emph{level} of an occurrence of a concept name $A$ in a concept
%     is the number of existential restrictions inside which it is
%     nested. A concept name is \emph{stratified} if all occurrences of
%     $A$ in concepts from $\mn{conc}(\Tmc)=\{C,D \mid C \sqsubseteq D
%     \in \Tmc \}$ are on the same level. 
  A concept name $A$ is \emph{stratified in $\Tmc$} if all occurrences of $A$ 
  in concepts from $\mn{conc}(\Tmc)=\{C,D
  \mid C \sqsubseteq D \in \Tmc \}$ are exactly in nesting depth $n$
  of existential restrictions, for some $n\geq 0$. Let $\Tmc$ be a TBox and $\Sigma$ a
  signature such that $\mn{sig}(\Tmc) \setminus \Sigma$ consists of
  stratified concept names only, i.e., we want to \emph{forget} a set of
  stratified concept names. Then the existence of a uniform
  $\Sigma$-interpolant of \Tmc is guaranteed; moreover,
  $\Tmc_{\Sigma,m}$ is such an interpolant, where $m=\max\{ {\sf
    rd}(C) \mid C \in \mn{conc}(\Tmc) \}$.  }
\end{example} 
To turn Theorem~\ref{thm1} into a decision procedure for the existence
of uniform interpolants, we prove that rather than testing $(\ast_m)$
for all $m$, it suffices to consider a single number $m$. This yields
the final characterization of the existence of uniform interpolants.
We use $|\Tmc|$ to denote the \emph{length} of a TBox \Tmc, i.e., the
number of symbols needed to write it.
\begin{theorem}\label{fixm}
  Let $\Tmc$ be a TBox and $\Sigma\subseteq {\sf sig}(\Tmc)$. Then
  there does not exist a uniform $\Sigma$-interpolant of $\Tmc$
\vspace*{-1mm}
  iff $(\ast_{M_{\Tmc}^{2}+1})$ from Theorem~\ref{thm1} holds, where
  $M_{\Tmc}:=2^{2^{|\Tmc|}}$.
\end{theorem}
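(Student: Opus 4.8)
The plan is to show that $(\ast_m)$ is monotone in a suitable sense, so that testing it at the single threshold $m = M_{\Tmc}^2+1$ suffices. By the equivalence of (a) and (b) from the discussion preceding Theorem~\ref{thm1}, no uniform $\Sigma$-interpolant of $\Tmc$ exists iff $\Tmc_{\Sigma,m}$ fails to be a uniform interpolant for every $m$, i.e.\ iff $(\ast_m)$ holds for all $m$. So one direction is trivial: if there is no uniform interpolant, then in particular $(\ast_{M_{\Tmc}^2+1})$ holds. The substantive direction is the converse: from a single witness for $(\ast_{M_{\Tmc}^2+1})$ we must manufacture witnesses for $(\ast_m)$ at \emph{all} $m$, or equivalently rule out the existence of a uniform interpolant directly.

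First I would set up the relevant automata-theoretic objects from Section~\ref{sect:automatastuff}. The key semantic predicates $(\Imc,d)\models\exists^\sim_{\overline\Sigma}.\Tmc$ and its negation should be recognizable by alternating parity tree automata running over $\Sigma$-labelled trees; call them $\Amc_\Tmc$ (accepting pointed $\Sigma$-trees that are $\Sigma$-bisimilar to a pointed model of $\Tmc$) and its complement $\overline{\Amc}_\Tmc$. The number $M_{\Tmc}=2^{2^{|\Tmc|}}$ is exactly the order of magnitude of the state set of the nondeterministic automaton obtained by combining these (one exponential for the subset/type construction underlying $\Amc_\Tmc$, a second for the nondeterminization of the alternating automaton), so $M_{\Tmc}^2$ is the size of a product automaton that simultaneously tracks membership in $\Amc_\Tmc$ along one branch and in $\overline{\Amc}_\Tmc$ along another. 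I would then reformulate $(\ast_m)$ in automata terms: we need a $\Sigma$-tree $\Imc_2$ with a root whose run lies in $\overline{\Amc}_\Tmc$ but each of whose successor subtrees is accepted by $\Amc_\Tmc$, together with a sibling-extension $\Imc_1$ that agrees with $\Imc_2$ on the first $m$ levels and whose root run lies in $\Amc_\Tmc$. The conjunction of these conditions is itself recognizable by a tree automaton $\Bmc_m$ of size polynomial in $|\Amc_\Tmc|\cdot|\overline{\Amc}_\Tmc|$ and in $m$ (the $m$ being encoded as a finite counter controlling when the two trees may diverge).

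The heart of the argument is a pumping lemma for accepting runs of $\Bmc_m$. Suppose $(\ast_M)$ holds with $M=M_{\Tmc}^2+1$, witnessed by trees $\Imc_1,\Imc_2$ (which by the automata characterization and Theorem~\ref{theorem:bisimchar} may be taken of finite, even bounded, outdegree). Along the common initial segment $\Imc_1^{\leq M}=\Imc_2^{\leq M}$, look at the sequence of combined automaton states; since $M>M_{\Tmc}^2$ exceeds the number of such states, on the path from the root down to the point of divergence there are two levels $i<j$ carrying the same state. The segment between levels $i$ and $j$ can then be \emph{repeated} arbitrarily often, and also \emph{contracted}, while preserving: Point~1 (the two trees still agree up to the relevant depth, which only grows or stays the same), Point~2 (the root of $\Imc_1$ still has an accepting $\Amc_\Tmc$-run), Point~3 (the root of $\Imc_2$ still has an accepting $\overline{\Amc}_\Tmc$-run — here one must be slightly careful that the parity/acceptance condition is respected under pumping, which is the standard reason the repeated block is chosen to dominate parity-wise, or one argues via the finitary $m$-segment characterization of $\Tmc_{\Sigma,m}$ so that only safety-type conditions matter on the finite part), and Point~4 (successors of $\rho^{\Imc_2}$ are untouched, or are likewise copies of good subtrees). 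Iterating the repetition gives, for every $m\geq M$, a witness for $(\ast_m)$; for $m<M$ one simply truncates, noting that $(\ast_m)$ for small $m$ follows from $(\ast_M)$ because an $M$-segment witness restricts to an $m$-segment witness (the `before the $m$-segment' phenomenon only becomes easier to realize as $m$ shrinks, using condition~4 to relocate the constraint). Hence $(\ast_m)$ holds for all $m$, so by (a)$\Leftrightarrow$(b) no uniform $\Sigma$-interpolant exists.

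The main obstacle I anticipate is the pumping step under the parity acceptance condition of $\overline{\Amc}_\Tmc$: naive repetition of a block between two equal states need not preserve acceptance of an infinite run, since the maximal priority seen infinitely often can change. The clean fix is to choose $i,j$ not merely with equal automaton state but equal state \emph{and} equal maximal priority on the block, which is still possible because enlarging the constant from $M_{\Tmc}^2$ to $M_{\Tmc}^2\cdot k$ (with $k$ the number of priorities, a constant depending only on $|\Tmc|$) keeps the bound of the form $2^{2^{O(|\Tmc|)}}$ — and the statement's $M_{\Tmc}=2^{2^{|\Tmc|}}$ is generous enough to absorb this; alternatively, one routes everything through the finite TBoxes $\Tmc_{\Sigma,m}$ and the $m$-segment characterization stated before Theorem~\ref{thm1}, so that the conditions to be preserved on the pumped finite segment are purely combinatorial (bisimulation of $m$-segments) and no infinitary acceptance subtlety arises on that part, isolating the genuinely infinitary reasoning to conditions~2--4 which concern fixed trees that are not themselves pumped.
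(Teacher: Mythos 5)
Your high-level strategy is right: pigeonhole along a long path, find two positions with matching `states', and pump the block between them to stretch the divergence point arbitrarily deep. This is indeed the engine of the paper's proof. But the route you take differs from the paper's in ways that matter, and your sketch leaves a genuine gap at exactly the step where the paper does its real work.

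The paper does not phase through automata at all in this proof. Instead of automaton states, it uses the \emph{extension sets} $\mn{Ext}^\Imc(d)$ and the derived equivalence $(\Imc_1,d_1)\sim_e(\Imc_2,d_2)$ iff $\mn{Ext}^{\Imc_1}(d_1)=\mn{Ext}^{\Imc_2}(d_2)$. The pigeonhole is then over pairs of $\sim_e$-classes, one for $\Imc_1$ and one for $\Imc_2$, which is where the $M_\Tmc^2$ comes from. The crucial device you are missing is the analogue of a pumping lemma: the paper's Lemma~\ref{lem3}, which asserts that replacing a subtree $\Imc(d)$ by any $\sim_e$-equivalent tree $\Jmc$ preserves $\models\exists^\sim_{\overline\Sigma}.\Tmc$ (in both directions). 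You simply assert that "the segment between levels $i$ and $j$ can be repeated … while preserving Points 1--4," but this preservation is exactly what must be proved, and it is nontrivial: one needs to know that after surgery the root can still be matched, in both directions, to a model of $\Tmc$. Lemma~\ref{lem3} supplies that, and its proof constructs the new bisimulations explicitly. Without it, your pumping lemma for $\Bmc_m$ is a promise, not an argument.

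Because the paper works semantically, the parity worry you (correctly) flag for an automata approach never arises: the replacement is justified by a direct bisimulation argument, not by manipulating accepting runs. Your two proposed fixes (track priorities, or route through the $m$-segment characterization) are not what the paper does. Also, the paper's surgery is not "repeat the block arbitrarily often": it replaces $\Imc_1(d_j)$ by $\Imc_1(d_i)$ (and $\Imc_2(d_j)$ by $\Imc_2(d_i)$), which effects \emph{one} insertion of the $[d_i,d_j]$-segment together with all side-branches hanging off it, and then argues that the set $D$ of depth-$m$ divergence nodes strictly shrinks, iterating until $D=\emptyset$. This is subtler than generic pumping because the divergence can sit on several branches simultaneously. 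Finally, your "simply truncate" remark for $m<M$ is unnecessary and a bit misleading; the reason $(\ast_m)$ for small $m$ comes for free is monotonicity of "$\Tmc_{\Sigma,m}$ is not a uniform interpolant" in $m$ (the weaker TBox fails whenever the stronger one does), which is why the paper only proves the implication for $m\ge M_\Tmc^2+1$.
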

It suffices to show that $(\ast_{M_{\Tmc}^{2}+1})$ implies
$(\ast_{m})$ for all $m\geq M_{\Tmc}^{2}+1$.  The proof idea
%The intuition behind this doubly exponential upper bound (that 
%will be proved optimal in Theorem~\ref{??} below) 
is as follows.  Denote by $\mn{cl}(\Tmc)$ the closure under single
negation and subconcepts of $\mn{conc}(\Tmc)$.
%
%in typical model constructions for $\ALC$ 
%
%
%A \emph{type for \Tmc} is a set $t \subseteq 2^{\mn{cl}(\Tmc)}$ such
%that the following conditions are satisfied:
%
%\begin{enumerate}
%
%\item $C \in t$ iff $\neg C \notin t$, for all $\neg C \in
%  \mn{cl}(\Tmc)$;
%
%\item $C \sqcap D \in t$ iff $\{ C,D\} \subseteq t$, for all
%  $C \sqcap D \in t$;
%
%\item $C \sqcup D \in t$ iff $\{ C,D\} \cap t \neq \emptyset$, for all
%  $C \sqcup D \in t$;
%
%\item $C_\Tmc \in t$.
%
%\end{enumerate}
%
%A type $t$ is \emph{realizable} if there is a model \Imc of \Tmc such
%that for some $d \in \Delta^\Imc$, we have $d \in C^\Imc$ iff $d \in
%t$, for all $C \in \mn{cl}(\Tmc)$. 
The \emph{type} of some $d\in \Delta^{\Imc}$ in an interpretation
$\Imc$ is %defined as
$$ \mn{tp}^\Imc(d) := \{ C\in {\mn{cl}(\Tmc)} \mid d \in C^\Imc\}.  $$
Many constructions for $\ALC$ (such as blocking in tableaux,
filtrations of interpretations, etc.)  exploit the fact that the
relevant information about any element $d$ in an interpretation is
given by its type. This can be exploited e.g.\ to prove \ExpTime upper
bounds as there are `only' exponentially many distinct types.  In the
proof of Theorem~\ref{fixm}, we make use of a `pumping lemma' that
enables us to transform any pair $\Imc_{1},\Imc_{2}$ witnessing
$(\ast_{M_{\Tmc}^{2}+1})$ into a witness $\Imc_{1}',\Imc_{2}'$ for
$(\ast_{m})$ when $m\geq M_{\Tmc}^{2}+1$. The construction depends on
the relevant information about elements of $\Delta^{\Imc_1}$ and
$\Delta^{\Imc_2}$; in contrast to standard constructions, however,
types are not sufficient and must be replaced by \emph{extension sets}
${\sf Ext}^{\Imc}(d)$, defined as
$$
{\sf Ext}^{\Imc}(d) = \{ \mn{tp}^{\mathcal{I}}(d') \mid \exists \Jmc:
\mathcal{J}\models \mathcal{T} \text{ and } 
(\mathcal{I},d) \sim_{\Sigma} (\mathcal{J},d')\}
$$
and capturing all ways in which the restiction of $\mn{tp}^\Imc(d)$ to
$\Sigma$-concepts can be extended to a full type in models of
$\Tmc$. As the number of such extension sets is double exponential in
$|\Tmc|$ and we have to consider pairs $(d_1,d_2) \in
\Delta^{\Imc_1} \times \Delta^{\Imc_2}$, we (roughly) obtain a
$M_{\Tmc}^{2}$ bound. Details are in the long version.

%$We set $(\mathcal{I},d) \sim^{e} (\mathcal{I}',d')$ if 
%${\sf Ext}(\mathcal{I},d) = {\sf Ext}(\mathcal{I}',d')$.
%Note that the number of distinct distinct $\sim^{e}$-equivalence classes
%is bounded by $M_{\Tmc}:=2^{2^{|\Tmc|}}$.
%
%\begin{lemma}\label{lem3}
%Let $\Imc$ be a tree interpretation and $d\in \Delta^{\mathcal{I}}$.
%Assume $(\mathcal{I}(d),d) \sim^{e} (\mathcal{J},\rho^{\Jmc})$ for a tree inter%pretation
%$\mathcal{J}$.
%Replace $\mathcal{I}(d)$ by $\mathcal{J}$ in $\mathcal{I}$ and denote the resul%ting 
%tree interpretation by $\Kmc$. Then $\mathcal{I}\models  \exists^\sim_{\Sigma}.% \Tmc$ if, and only if,
%$\mathcal{K} \models  \exists^\sim_{\Sigma}. \Tmc$.
%\end{lemma}
%
%\begin{theorem}\label{thm2}
%Let $\Tmc$ be a TBox and $\Sigma\subseteq {\sf sig}(\Tmc)$.
%The following conditions are equivalent:
%\begin{enumerate}
%\item there exists $k>M_{\Tmc}^{2}+1$ such that $\Tmc_{\Sigma,M_{\Tmc}^{2}+1}\n%ot\models \Tmc_{\Sigma,k}$;
%\item Condition~2 from Theorem~\ref{thm1} holds for $m= M_{\Tmc}^{2}+1$;
%\item Condition~2 from Theorem~\ref{thm1} holds for all $m>0$;
%\item there does not exist a $\Sigma$-uniform interpolant of $\mathcal{T}$.
%\end{enumerate}
%\end{theorem}
%
%We obtain the following corollary.
%
%\begin{theorem}\label{boundcriterion}
%A $\Sigma$ uniform interpolant of $\Tmc$ exists if, and only if,
%$\Tmc_{\Sigma,M_{\Tmc}^{2}+1}$ is a $\Sigma$-uniform interpolant of $\Tmc$.
%\end{theorem}

We note that, by Theorem~\ref{fixm}, the uniform
$\Sigma$-interpolant of a TBox \Tmc exists iff $\Tmc\cup \Tmc_{\Sigma,M_\Tmc^{2}+1}$ is a
conservative extension of $\Tmc_{\Sigma,M_{\Tmc}^{2}+1}$. With the
decidability of conservative extensions proved in
\cite{KR06}, this yields decidablity of the existence of uniform
interpolants. However, the size of $\Tmc_{\Sigma,M_{\Tmc}^{2}+1}$ is
non-elementary, and so is the running time of the resulting algorithm.
We next show how to improve this.
%We show next that this can be much improved.
\vspace*{-0.2cm}
%%% Local Variables: 
%%% mode: latex
%%% TeX-master: "subuni"
%%% End: 

\section{Automata Constructions / Complexity}
\label{sect:automatastuff}

We develop a worst-case optimal algorithm for deciding the existence
of uniform interpolants in \ALC, exploiting Theorem~\ref{fixm} and
making use of alternating automata. As a by-product, we prove the
fundamental characterization of uniform interpolants in terms of
bisimulation stated as Theorem~\ref{bisimuniform} without the initial
restriction to interpretations of finite outdegree. We also obtain a
representation of uniform interpolants as automata and a novel, more
transparent proof of the 2-\ExpTime upper bound for deciding
conservative extensions originally established in~\cite{KR06}.

We use amorphous alternating parity tree automata in the style of
Wilke \cite{Wilke-Automata}, which
%. Similarly to the non-deterministic
%automata of Janin and Walukiewicz \cite{WaluJaninBRICSNotes}, 
run
on unrestricted interpretations rather than on trees, only. We 
call them \emph{tree} automata as they are in the tradition of more
classical forms of such automata. In particular, a run of an automaton
is tree-shaped, even if the input interpretation is not.
\begin{definition}[APTA]
\label{def:apta}
  An \emph{alternating parity tree automaton (APTA)} is a tuple $\Amc
  = (Q,\Sigma_N,\Sigma_E,q_0, \delta, \Omega)$, where $Q$ is a finite
  set of \emph{states}, $\Sigma_N \subseteq \NC$ is the finite
  \emph{node alphabet}, $\Sigma_E \subseteq \NR$ is the finite
  \emph{edge alphabet}, $q_0 \in Q$ is the \emph{initial state},
  $
    \delta: Q  \rightarrow \mn{mov}(\Amc),
  $
  is the transition function with
  $
  %$$
  %\begin{array}{l@{}l}
    \mn{mov}(\Amc)=\{ \mn{true}, \mn{false}, A, \neg A, q, q \wedge q',
    q \vee q', \langle r \rangle q, [r] q \mid 
     A \in \Sigma_N, q,q' \in Q, r \in \Sigma_E \}
  %\end{array}
  %$$
  $
  the set of \emph{moves} of the automaton, and $\Omega:Q
  \rightarrow \Nbbm$ is the \emph{priority function}.
\end{definition}
Intuitively, the move $q$ means that the automaton sends a copy of
itself in state $q$ to the element of the interpretation that it is
currently processing, $\auf r \zu q$ means that a copy in state $q$ is
sent to an $r$-successor of the current element, and $[r] q$ means
that a copy in state $q$ is sent to every $r$-successor.

It will be convenient to use arbitrary modal formulas in
negation normal form when specifying the transition function of
APTAs. The more restricted form required by Definition~\ref{def:apta}
can then be attained by introducing intermediate
states. In subsequent constructions that involve APTAs, we will not
describe those additional states explicitly.  However, we will
(silently) take them into account when stating size bounds for
automata.

% We measure the size of an APTA primarily in the number of states. To
% define a more fine-grained measure, we use $||\Amc||$ to denote the
% \emph{size} of \Amc, i.e., $|Q|+{|\Sigma_N|}+|\Sigma_E|$. Note that
% the size of (the representation of) all other components of the
% automaton is bounded polynomially in $||\Amc||$. In particular, we can
% w.l.o.g.\ assume that the values in $\Omega$ are bounded by $2|Q|$.

In what follows, a \emph{$\Sigma$-labelled tree} is a pair $(T,\ell)$
with $T$ a tree and $\ell:T \rightarrow \Sigma$ a node
labelling function. A \emph{path} $\pi$ in a tree $T$ is a subset of
$T$ such that $\varepsilon \in \pi$ and for each $x \in \pi$ that is
not a leaf in $T$, $\pi$ contains one son of $x$.
\begin{definition}[Run]
\label{def:altrun}
Let $(\Imc,d_0)$ be a pointed $\Sigma_N \cup \Sigma_E$-interpretation
and $\Amc= (Q,\Sigma_N,\Sigma_E,q_0, \delta, \Omega)$ an APTA. A
\emph{run} of \Amc on $(\Imc ,d_0)$ is a $Q \times \Delta^\Imc$-labelled tree
$(T,\ell)$ such that $\ell(\varepsilon)=(q_0,d_0)$ and for every
$x \in T$ with $\ell(x)=(q,d)$:
  \begin{itemize}

  \item $\delta(q) \neq \mn{false}$;
    
  \item if $\delta(q) = A$ ($\delta(q) = \neg A$), then $d \in A^\Imc$ ($d \notin A^\Imc$);

  \item if $\delta(q) = q' \wedge q''$, then there are
    sons $y,y'$ of $x$ with $\ell(y)=(q',d)$ and $\ell(y')=(q'',d)$;

  \item if $\delta(q) = q' \vee q''$, then there is
    a son $y$ of $x$ with $\ell(y)=(q',d)$ or $\ell(y')=(q'',d)$;

  \item if $\delta(q) =\langle r \rangle q'$, then
    there is a $(d,d') \in r^\Imc$ and a son $y$ of $x$ with
    $\ell(y)=(q',d')$;

  \item if $\delta(q) =[ r ] q'$ and $(d,d') \in
    r^\Imc$, then there is a son $y$ of $x$ with $\ell(y)=(q',d')$.

  \end{itemize}
  A run $(T,\ell)$ is \emph{accepting} if for every path $\pi$ of $T$,
  the maximal $i \in \Nbbm$ with $\{ x \in \pi \mid \ell(x)=(q,d)
  \text{ with } \Omega(q)=i \}$ infinite is even.  We use $L(\Amc)$ to
  denote the language accepted by \Amc, i.e., the set of pointed $\Sigma_N
  \cup \Sigma_E$-interpretations $(\Imc,d)$ such that there is an
  accepting run of \Amc on $(\Imc,d)$.
\end{definition}
%
%Our automaton model is identical to Wilke's, except that he considers
%only the unimodal case. 
Using the fact that runs are always tree-shaped, it is easy to prove
that the languages accepted by APTAs are closed under ${\Sigma_N \cup
  \Sigma_E}$-bisimulations. It is this property that makes this
automaton model particularly useful for our purposes. APTAs can be
complemented in polytime in the same way as other alternating tree
automata, and for all APTAs $\Amc_1$ and $\Amc_2$, one can construct
in polytime an APTA that accepts $L(\Amc_1) \cap L(\Amc_2)$. Wilke
shows that the emptiness problem for APTAs is \ExpTime-complete
\cite{Wilke-Automata}.

We now show that uniform $\Sigma$-interpolants of \ALC-TBoxes can be \emph{represented as
APTAs}, in the sense of the following theorem and
of Theorem~\ref{bisimuniform}.
\begin{theorem}
\label{thm:bisimauto}
Let \Tmc be a TBox and $\Sigma \subseteq \mn{sig}(\Tmc)$ a
signature. Then there exists an APTA $\Amc_{\Tmc,\Sigma}= (Q,\Sigma
\cap \NC,\Sigma \cap \NR,q_0, \delta, \Omega)$ with $|Q| \in
2^{\Omc(|\Tmc|)}$ such that $L(\Amc_{\Tmc,\Sigma})$ consists of all
\vspace*{-0.5mm}
pointed $\Sigma$-interpretations $(\Imc,d)$ with $(\Imc,d)
\models \exists^\sim_{\overline\Sigma}. \Tmc$. 
$\Amc_{\Tmc,\Sigma}$ can be constructed in time $2^{p(|\Tmc|)}$, $p$ a
polynomial.
\end{theorem}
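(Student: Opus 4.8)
The plan is to construct $\Amc_{\Tmc,\Sigma}$ so that, on a pointed $\Sigma$-interpretation $(\Imc,d)$, an accepting run encodes a witness model $\Jmc$ of $\Tmc$ together with a $\Sigma$-bisimulation between $(\Imc,d)$ and $(\Jmc,d')$ for some $d'$. The key observation is that, since models of $\Tmc$ may be assumed tree-shaped (unraveling preserves being a model of an \ALC-TBox and preserves bisimilarity, by Theorem~\ref{theorem:bisimchar}), we may think of the witness as built incrementally along the run: at each element $e$ of $\Imc$ that the automaton visits, the state records a \emph{type} $t \in \mn{cl}(\Tmc)$ that is intended to be realised by the bisimilar partner $e'$ in $\Jmc$. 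The transition function then has to enforce, locally, that (i) $t$ is \emph{propositionally consistent} and contains exactly the $\Sigma$-concept names true at $e$ in $\Imc$ (bisimulation condition~1 together with a guess of the non-$\Sigma$ literals), (ii) $t$ is \emph{saturated} with respect to $\Tmc$, i.e., for each $C \sqsubseteq D \in \Tmc$ either $\neg C \in t$ or $D \in t$, and for each $\exists r.C \in t$ there is a matching successor type and for each $\neg \exists r.C \in t$ all successors of the relevant kind omit $C$, and (iii) the two halves of each bisimulation condition are respected: for every $\Sigma$-role $r$ and every existential $\exists r.C \in t$, the automaton must find an $r$-successor $e$ of the current element in $\Imc$ at which it continues with an appropriate successor type (this matches forth in $\Jmc$ against a step in $\Imc$, i.e.\ condition~3), and for every $r$-successor $e$ of the current element in $\Imc$ the automaton sends, via a $[r]$-move, a copy that continues at $e$ with some successor type containing the needed concepts (this is the forth direction in $\Imc$ matched in $\Jmc$, i.e.\ condition~2).

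The technical subtlety is that conditions~(ii) and~(iii) interact: the successor type chosen for a $[r]$-move in $\Imc$ must simultaneously be a legal successor in $\Jmc$ for \emph{all} the existentials and universals currently active in $t$, and conversely a single element $e$ of $\Imc$ may have to serve as the $\Jmc$-witness for several existentials at once. The standard fix, familiar from the satisfiability automaton for \ALC and from the modal $\mu$-calculus, is to bundle: at a node with type $t$, for each $\Sigma$-role $r$ the automaton guesses a finite set of "demanded" successor types (one per existential $\exists r.C \in t$, each containing the corresponding $C$ and all of $\{D : \forall r.D \in t\}$, i.e.\ all $D$ with $\neg\exists r.\neg D \in t$), and discharges the existential directions with $\langle r\rangle$-moves to elements of $\Imc$, and for each actual $r$-successor of the current $\Imc$-element it discharges the universal direction with a $[r]$-move that continues with \emph{some} $r$-saturated type extending $\{D : \forall r.D \in t\}$ and agreeing with that successor's $\Sigma$-literals. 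I would let the state set be (essentially) $\mn{cl}(\Tmc)$-types, possibly tagged with a bounded amount of bookkeeping, plus the polynomially many intermediate states needed to bring the modal transition formulas into the form of Definition~\ref{def:apta}; this gives $|Q| \in 2^{\Omc(|\Tmc|)}$ and construction time $2^{p(|\Tmc|)}$ since enumerating types and writing the transitions is a single exponential.

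The parity (acceptance) condition is where the genuine work lies, and I expect it to be the main obstacle. The automaton as described can always keep guessing and never actually ``close off'' a witness model of $\Tmc$; we need an acceptance condition guaranteeing that the tree of guessed types really \emph{is} (the $\Tmc$-relevant part of) a model of $\Tmc$. Since \ALC has no fixpoint operators, every existential obligation $\exists r.C$ must be fulfilled \emph{immediately} by a son in the run, not eventually; so the bad behaviour to rule out is not an unfulfilled eventuality but simply an infinite run in which some branch violates nothing locally yet corresponds to no genuine model — and in fact, because all obligations are local, a locally consistent infinite run \emph{does} yield a model (build $\Jmc$ by reading off the types along the run-tree, using the $\langle r\rangle$- and $[r]$-moves to define $r^\Jmc$). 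Hence I would use the trivial parity condition: $\Omega \equiv 0$ (every state priority $0$, so every infinite branch is accepting), i.e.\ a looping/safety automaton, and the real content is the soundness and completeness argument. \textbf{Completeness:} given $\Jmc \models \Tmc$ tree-shaped and a $\Sigma$-bisimulation $S$ with $(d,d') \in S$, define a run of $\Amc_{\Tmc,\Sigma}$ on $(\Imc,d)$ by labelling, at each $\Imc$-node $e$ reached while tracking $\Jmc$-element $e'$, with $\mn{tp}^\Jmc(e')$; the bisimulation conditions give exactly the successors the $\langle r\rangle$- and $[r]$-moves demand. \textbf{Soundness:} from an accepting (i.e.\ any) run, read the $\Jmc$ off the run-tree as above; local saturation of the labelled types makes $\Jmc$ a model of $\Tmc$ (a routine induction on concept structure shows $C \in \mn{tp}(x) \iff x \in C^\Jmc$ for $C \in \mn{cl}(\Tmc)$, the existential case using the sons produced by $\langle r\rangle$-moves, the $\forall$-case using the $[r]$-moves), and the projection of the run-tree onto $\Imc$-components is the required $\Sigma$-bisimulation. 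The one point needing care is that a single $\Imc$-element may be visited by several copies carrying different guessed types — this is fine, since each copy spawns its own subtree of $\Jmc$, and bisimulations need not be functional; we just take the union over copies when reading off $\Jmc$, or, more cleanly, keep the run-tree itself (rather than $\Imc$) as the domain of $\Jmc$. Finally, a remark: that APTA languages are closed under $\Sigma$-bisimulation (noted just before the theorem) is exactly what makes $L(\Amc_{\Tmc,\Sigma})$ well-defined as ``the set of $(\Imc,d)$ with $(\Imc,d) \models \exists^\sim_{\overline\Sigma}.\Tmc$'', and it dovetails with soundness because $\exists^\sim_{\overline\Sigma}.\Tmc$ is itself bisimulation-invariant by construction.
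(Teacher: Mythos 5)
Your overall design is the same as the paper's: states are (essentially) types over $\mn{cl}(\Tmc)$ plus an initial state, transitions enforce $\Sigma$-literal agreement and both directions of the bisimulation via $[r]$- and $\langle r\rangle$-moves for $\Sigma$-roles $r$, and the parity condition is trivial ($\Omega \equiv 0$) for exactly the reason you give. The completeness direction (from a bisimilar model of $\Tmc$, label the run along the bisimulation with the types of the $\Jmc$-partner) is also the paper's argument.

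There is, however, a genuine gap in your soundness argument, and it traces back to which types you allow as states. You take the state set to be types that are ``propositionally consistent and saturated w.r.t.\ $\Tmc$'' with only a one-step successor check, and then read $\Jmc$ off the run-tree, claiming a routine induction shows $C \in \mn{tp}(x) \iff x \in C^\Jmc$ ``using the sons produced by $\langle r\rangle$-moves'' for the existential case. But the automaton only ever performs $\langle r\rangle$- and $[r]$-moves for $r \in \Sigma$, since the input is a $\Sigma$-interpretation; if $\exists r.C \in t$ with $r \in \mn{sig}(\Tmc)\setminus\Sigma$, the run-tree has no son discharging it, and your read-off $\Jmc$ simply fails to satisfy $\exists r.C$ at that node. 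The induction breaks, and $\Jmc$ need not be a model of $\Tmc$. Worse, a one-step-locally-saturated type need not be realizable in \emph{any} model of $\Tmc$ at all (the full type-elimination fixpoint, not a single iteration, is needed), so one cannot even fix the read-off by grafting on witness subtrees for the non-$\Sigma$-existentials.

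The paper avoids both problems with one move: the state set is taken to be $\mn{TP}(\Tmc)$, the set of types \emph{actually realized in some model of $\Tmc$} (computable in single-exponential time via a satisfiability check, so the size and time bounds are unchanged). Then the transition function is exactly as you describe but \emph{only} constrains $\Sigma$-roles and silently ignores $\exists r.C$ with $r \notin \Sigma$; and in the soundness proof, after building the skeleton $\Jmc_0$ from the run-tree, one explicitly extends it by hanging, at each node $x$ and each $\exists r.C \in \ell_1(x)$ with $r \notin \Sigma$, a fresh model $\Imc_{x,\exists r.C}$ of $\Tmc$ whose root satisfies $C$ and all $D$ with $\forall r.D \in \ell_1(x)$. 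Such models exist precisely because $\ell_1(x) \in \mn{TP}(\Tmc)$. This extension is the missing step in your proof; without it (or without the stronger state-set condition that licenses it) soundness does not go through.
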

The construction of the automaton $\Amc_{\Tmc,\Sigma}$ from
Theorem~\ref{thm:bisimauto} resembles the construction of uniform
interpolants in the  $\mu$-calculus using non-deterministic
automata described in \cite{Hollenberg-Agostino}, but is transferred to
TBoxes and alternating automata.  

Fix a TBox \Tmc and a signature $\Sigma$ and assume w.l.o.g.\ that
\Tmc has the form $\{ \top \sqsubseteq C_\Tmc \}$, with $C_\Tmc$ in
negation normal form \cite{Baader-et-al-03b}.
%
% We introduce a syntactic characterization of types, which were only
% introduced semantically in Section~\ref{sect:charact}.
% A \emph{type for} \Tmc is a set $t \subseteq 2^{\mn{cl}(\Tmc)}$ such
% that the following conditions are satisfied:
% %
% \begin{enumerate}
%
% \item $C \in t$ iff $\neg C \notin t$, for all $\neg C \in
%   \mn{cl}(\Tmc)$;
%
% \item $C \sqcap D \in t$ iff $\{ C,D\} \subseteq t$, for all
%   $C \sqcap D \in t$;
%
% \item $C \sqcup D \in t$ iff $\{ C,D\} \cap t \neq \emptyset$, for all
%   $C \sqcup D \in t$;
%
% \item $C_\Tmc \in t$.
%
% \end{enumerate}
% %
% Note that for an interpretation \Imc and $d \in \Delta^\Imc$, the
% type $\mn{tp}^\Imc(d)$ of $d$ in \Imc as defined in
% Section~\ref{sect:charact}
% is a type in the above, syntactic sense.  We call a type $t$ 
% \emph{\Tmc-realizable} if $t= \mn{tp}^\Imc(d)$, for some model \Imc of
% \Tmc and $d\in \Delta^{\Imc}$.  We use $\mn{TP}(\Tmc)$ to denote the
% set of all \Tmc-realizable types for \Tmc, which can be computed in
% exponential time given that in \ALC, satisfiability w.r.t.\ TBoxes is
% in \ExpTime \cite{Handbook}. 
%
Recall the notion of a type introduced in Section~\ref{sect:charact}.
We use $\mn{TP}(\Tmc)$ to denote the set of all types realized in
some model of \Tmc, i.e.,
$
  \mn{TP}(\Tmc)= \{ \mn{tp}^\Imc(d) \mid \Imc \text{ model of } \Tmc,
  d \in \Delta^\Imc \}.
$
Note that $\mn{TP}(\Tmc)$ can be computed in time exponential in the
size of \Tmc since concept satisfiability w.r.t.\ TBoxes is
\ExpTime-complete in \ALC \cite{Baader-et-al-03b}.  Given $t,t' \in
\mn{TP}(\Tmc)$ and $r\in \Sigma$, we write $t \leadsto_{r} t'$ if
$C\in t'$ implies $\exists r . C\in t$ for all $\exists r.C\in
{\mn{cl}(\Tmc)}$.  Now define the automaton $\Amc_{\Tmc,\Sigma} :=
(Q,\Sigma_N,\Sigma_E,q_0, \delta, \Omega)$, where
$$
\begin{array}{r@{\;}c@{\;}l}
Q &=& \mn{TP}(\Tmc) \uplus \{ q_0 \} \quad
\Sigma_N = \Sigma \cap \NC \quad
\Sigma_E = \Sigma \cap \NR \\[1mm]
\delta(q_0)&=&\displaystyle \bigvee \mn{TP}(\Tmc) 
%\qquad\qquad\qquad
\\[1mm]
\delta(t)&=&\displaystyle \bigwedge_{A \in t \cap \NC \cap \Sigma} A
\wedge \bigwedge_{A \in (\NC \cap \Sigma)\setminus t} \neg A \\[5mm]
&&\wedge \; \displaystyle  \bigwedge_{r \in \Sigma \cap \NR} \!
  [r] \bigvee \{ t' \in \mn{TP}(\Tmc) \mid t \leadsto_{r} t'\} 
\\[5mm]
% \end{array}
% $$
% $$
% \begin{array}{r@{\;}c@{\;}l}
&& \wedge \; \displaystyle
\bigwedge_{\exists r . C \in t, r \in \Sigma} \!\!\!\!\!\! \langle r \rangle \bigvee \{ t' \in \mn{TP}(\Tmc) \mid C \in t \wedge t \leadsto_{r} t' \} 
\\[1mm]
\Omega(q)&=&0 \text{ for all } q  \in Q
\end{array}
$$
Here, the empty conjunction represents \mn{true} and the empty disjunction
represents \mn{false}.  The acceptance condition of the automaton is
trivial, which (potentially) changes when we complement it
subsequently.
%We
%need at most $2^{\Omc(|\Tmc|)}$ fresh intermediary states to bring the
%transition relation into the form required by
%Definition~\ref{def:apta}. Thus, the following lemma establishes
%Theorem~\ref{thm:bisimauto}.
We prove in the appendix that this automaton satisfies the
conditions in Theorem~\ref{thm:bisimauto}. 

\medskip We now develop a decision procedure for the existence of
uniform interpolants by showing that 
%
% Theorem~\ref{thm:bisimauto} allows us to prove
% Theorem~\ref{bisimuniform} without the restriction to finite
% outdegree.  Essentially, it suffices to note that
% $L(\Amc_{\Tmc,\Sigma}) \neq \emptyset$ iff $L(\Amc_{\Tmc,\Sigma})$
% contains an interpretation of finite outdegree, a property that
% actually holds for all APTAs.
% %
% \begin{theorem}
%   \label{thm:uniintupper}
%   Given a TBox \Tmc and a signature $\Sigma \subseteq \mn{sig}(\Tmc)$,
%   it can be decided in time $2^{2^{p(|\Tmc|)}}$ whether there exists a
%   $\Sigma$-uniform interpolant of \Tmc, for some polynomial $p$.
% \end{theorem}
% %
% The main ingredient to the proof of Theorem~\ref{thm:uniintupper} is
% to show 
%
the characterization of the existence of uniform interpolants
provided by Theorem~\ref{fixm} can be captured by APTAs, in the following
sense.
\begin{theorem}
\label{prop:cetocons}
Let \Tmc be a TBox, $\Sigma \subseteq \mn{sig}(\Tmc)$ a signature, and
$m \geq 0$. Then there is an APTA
$\Amc_{\Tmc,\Sigma,m}=(Q,\Sigma_N,\Sigma_E,q_0, \delta, \Omega)$ such
that $L(\Amc) \neq \emptyset$ iff Condition~($*_m$) from
Theorem~\ref{thm1} is satisfied.  
% there are $\Sigma$-tree
% interpretations $(\Imc_1,d_1)$ and $(\Imc_2,d_2)$ such that
% %
% \begin{enumerate}
%
% \item $\Imc_1^{\leq m}=\Imc_2^{\leq m}$;
%
% \item $(\Imc_1,\rho^{\Imc_1}) \models \exists^\sim_{\overline{\Sigma}} . \Tmc$;
%
% \item $(\Imc_2,\rho^{\Imc_2}) \not\models \exists^\sim_{\overline{\Sigma}} . \Tmc$;
%
% \item for all sons $d$ of $\rho^{\Imc_2}$, we have $(\Imc_2,d) \models \exists^\sim_{\overline{\Sigma}} . \Tmc$.
%
% \end{enumerate}
% %
Moreover, $|Q| \in \Omc(2^{\Omc(n)}+\log^2 m)$ and $|\Sigma_N|,|\Sigma_E| \in \Omc(n+\log m)$,
where $n=|\Tmc|$.
\end{theorem}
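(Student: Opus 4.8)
The plan is to build the automaton $\Amc_{\Tmc,\Sigma,m}$ as a conjunction of several component APTAs, each enforcing one of the four conditions of $(\ast_m)$, run simultaneously on a single input interpretation that encodes the pair $\Imc_1,\Imc_2$. First I would set up the encoding: the input to $\Amc_{\Tmc,\Sigma,m}$ is a single $\Sigma'$-interpretation $\Imc$ (for a slightly enlarged signature $\Sigma'$) whose elements carry, via fresh concept names, (a) a marker distinguishing the "shared $m$-segment" region from the two "tails" belonging to $\Imc_1$ and to $\Imc_2$, (b) a binary-encoded counter of depth up to $m$ using $\Oc(\log m)$ concept names, so that an automaton can detect when depth $m$ has been reached and the two tails branch off, and (c) a flag on the single successor of the root that is used for Condition~4. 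Conditions~1 and the tree/finite-outdegree shape are then local syntactic conditions checkable by an APTA with $\Oc(\log^2 m)$ states (the counter has to be maintained and its correct incrementing verified, which is where the $\log^2 m$ rather than $\log m$ comes from, as in standard binary-counter automata).

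Next I would realize Conditions~2, 3 and~4 using the automaton $\Amc_{\Tmc,\Sigma}$ from Theorem~\ref{thm:bisimauto}, which accepts exactly the pointed $\Sigma$-interpretations $(\Jmc,e)$ with $(\Jmc,e)\models \exists^\sim_{\overline\Sigma}.\Tmc$ and has $2^{\Oc(n)}$ states. For Condition~2, I run $\Amc_{\Tmc,\Sigma}$ (restricted to the $\Imc_1$-part: the shared segment plus the first tail, with the relevant markers projected away) started at the root; note here we must be careful that $\Amc_{\Tmc,\Sigma}$ only sees the genuine $\Sigma$-structure, so the component automaton first "erases" the auxiliary markers — formally this is handled by treating the auxiliary concept names as not belonging to the input alphabet of $\Amc_{\Tmc,\Sigma}$, or by composing with a trivial relabelling. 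For Condition~3 I take the complement $\overline{\Amc_{\Tmc,\Sigma}}$ (polytime, possibly introducing a nontrivial parity condition) and run it at the root of the $\Imc_2$-part. For Condition~4 I run $\Amc_{\Tmc,\Sigma}$ starting from the unique successor of the root in the $\Imc_2$-part, again projected appropriately. Taking the conjunction of all these component automata (APTAs are closed under intersection in polytime) and of the shape-checking automaton yields $\Amc_{\Tmc,\Sigma,m}$ with $|Q|\in\Oc(2^{\Oc(n)}+\log^2 m)$ and $|\Sigma_N|,|\Sigma_E|\in\Oc(n+\log m)$ as claimed, the $+\log m$ coming from the auxiliary markers and counter bits.

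The correctness argument then has two directions. If there is an accepting run, the input $\Imc$ decodes into $\Sigma$-tree interpretations $\Imc_1,\Imc_2$ of finite outdegree; the shape-checking automaton guarantees Conditions~1 and the tree/finite-outdegree requirements, and the remaining three components guarantee Conditions~2–4 by Theorem~\ref{thm:bisimauto}. Conversely, given $\Imc_1,\Imc_2$ witnessing $(\ast_m)$, one combines them into a single interpretation with the appropriate auxiliary labelling; since $\Imc_1^{\leq m}=\Imc_2^{\leq m}$ the gluing along the shared $m$-segment is well-defined and yields a tree, and the component automata all accept by Theorem~\ref{thm:bisimauto}, so $L(\Amc_{\Tmc,\Sigma,m})\neq\emptyset$. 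Strictly speaking the witnesses in Theorem~\ref{thm1} need not be finitely branching, but finite outdegree is already part of $(\ast_m)$, and in any case acceptance by APTAs is invariant under $\Sigma$-bisimulation, which lets us pass to finitely branching (indeed tree-shaped) representatives.

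The main obstacle I expect is the interface between the generic automaton $\Amc_{\Tmc,\Sigma}$, which is designed to run on honest $\Sigma$-interpretations, and the composite input interpretation, which is a $\Sigma$-interpretation decorated with many auxiliary predicates and, crucially, whose "$\Imc_1$-part" and "$\Imc_2$-part" are subinterpretations rather than the whole thing. Making precise how a single APTA can (i) navigate only within the $\Imc_1$-part for Condition~2 and only within the $\Imc_2$-part for Conditions~3 and~4 — using the region markers to block transitions across the boundary in the wrong direction — while (ii) still being the honest automaton $\Amc_{\Tmc,\Sigma}$ on that restricted structure, and (iii) keeping the state count additive rather than multiplicative, is the delicate bookkeeping step. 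The binary counter that enforces "the two tails branch exactly at depth $m$" and the verification that it is incremented correctly along every edge is routine but is the source of the $\log^2 m$ term and must be done carefully so that it does not interact badly with the finite-outdegree requirement.
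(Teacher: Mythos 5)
Your proposal follows essentially the same strategy as the paper: build $\Amc_{\Tmc,\Sigma,m}$ as an intersection of component APTAs over an alphabet that lets a single interpretation encode the pair $\Imc_1,\Imc_2$, using a $\Oc(\log m)$-bit depth counter to pin down where the shared prefix ends, modified copies of $\Amc_{\Tmc,\Sigma}$ for Conditions~2 and~4, and its complement for Condition~3, with the stated additive bound on the state set. The one genuine difference is the encoding. You mark \emph{nodes} with region labels (``shared'', ``tail~1'', ``tail~2''), so that each copy of $\Amc_{\Tmc,\Sigma}$ must be prevented from descending into the wrong tail; you correctly flag this restriction-to-a-subinterpretation as the delicate bookkeeping step, but leave it open. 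The paper instead tags \emph{edges}: it runs over the node alphabet $((\Sigma\cap\NC)\times\{1,2\})\cup\{c_1,\dots,c_k\}$ and edge alphabet $(\Sigma\cap\NR)\times\{1,2,12\}$, recovering $\Imc_i$ by letting $A^{\Imc_i}=(A,i)^\Imc$ and $r^{\Imc_i}=(r,i)^\Imc\cup(r,12)^\Imc$ on the common domain. With this choice, the modification of $\Amc_{\Tmc,\Sigma}$ for track~$i$ is purely syntactic---replace $A$ by $(A,i)$, $\langle r\rangle q$ by $\langle(r,i)\rangle q\vee\langle(r,12)\rangle q$, and $[r]q$ by $[(r,i)]q\wedge[(r,12)]q$---and a run of the modified automaton never has the opportunity to wander into the wrong tail, so the bookkeeping difficulty you anticipate simply does not arise. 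Condition~1 is then enforced by a fourth automaton $\Amc_4$ which, up to counter value $m$, forbids $(r,1)$- and $(r,2)$-edges outright and requires $(A,1)\leftrightarrow(A,2)$ at every node. One small inaccuracy in your sketch: Condition~4 quantifies over \emph{all} successors of $\rho^{\Imc_2}$, not a single flagged one; the paper handles this by giving the Condition~4 automaton a fresh initial state $q_0'$ with $\delta(q_0')=\bigwedge_{r}[r]q_0$ that launches $\Amc_{\Tmc,\Sigma}$ (on track~2) at every successor.
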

The size of $\Amc_{\Tmc,\Sigma,m}$ is exponential in $|\Tmc|$ and
logarithmic in~$m$.
% (and independent of $\Sigma$ since we can assume
%$|\Sigma| \leq |\Tmc|$). 
By Theorem~\ref{fixm}, we can set
$m=2^{2^{|\Tmc|}}$, and thus the size of $\Amc_{\Tmc,\Sigma,m}$ is
exponential in~$|\Tmc|$.  Together with the \ExpTime emptiness test for
APTAs, we obtain a 2-\ExpTime decision procedure for the existence of
uniform interpolants.  We construct $\Amc_{\Tmc,\Sigma,m}$ as an
intersection of four APTAs, each ensuring one of the conditions
of ($*_m$); building the automaton for Condition~2 involves
complementation. The automaton
$\Amc_{\Tmc,\Sigma,m}$ runs over an extended alphabet that allows to
encode both of the interpretations $\Imc_1$ and $\Imc_2$ mentioned in
($*_m$), plus a `depth counter' for enforcing Condition~1 of
($*_m$).

\medskip

A similar, but simpler construction can be used to reprove the
2-\ExpTime upper bound for deciding conservative extensions
established in \cite{KR06}. The construction only depends on
Theorem~\ref{bisimCE}, but not on the material in
Section~\ref{sect:charact} and is arguably more transparent than
the original one.
\begin{theorem}
  \label{thm:ceupper}
  Given TBoxes \Tmc and $\Tmc'$, it can be decided in time
  $2^{p(|\Tmc| \cdot 2^{|\Tmc'|})}$ whether $\Tmc \cup \Tmc'$ is a
  conservative extension of $\Tmc$, for some polynomial $p()$.
\end{theorem}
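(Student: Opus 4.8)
The plan is to reduce Theorem~\ref{thm:ceupper} to the emptiness problem for an APTA, in the same spirit as the (more involved) construction behind Theorem~\ref{prop:cetocons}, but driven only by the bisimulation characterization of conservative extensions in Theorem~\ref{bisimCE}. Recall that $\Tmc \cup \Tmc'$ is a conservative extension of $\Tmc$ iff every model $\Imc$ of $\Tmc$ has the property that $(\Imc,d) \models \exists^\sim_{\overline{\Sigma}}.\Tmc'$ for all $d \in \Delta^\Imc$, where $\Sigma = \mn{sig}(\Tmc)$. Hence $\Tmc \cup \Tmc'$ is \emph{not} a conservative extension of $\Tmc$ iff there exists a pointed $\Sigma$-interpretation $(\Imc,d)$ such that (a)~$\Imc \models \Tmc$ when $\Imc$ is viewed over its full signature, and (b)~$(\Imc,d) \not\models \exists^\sim_{\overline{\Sigma}}.\Tmc'$. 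Because $\Tmc'$ may use symbols outside $\Sigma$, we cannot expect the witness $\Imc$ itself to be a $\Sigma$-interpretation; but the condition ``$\Imc \models \Tmc$'' only constrains the $\Sigma$-reduct, so we may build an automaton that runs over $\Sigma$-interpretations and guesses, on the fly, the interpretation of the non-$\Sigma$ symbols of $\Tmc$. The closure under $\Sigma$-bisimulations of APTA-recognizable languages is exactly what lets us work with $\Sigma$-interpretations as inputs.

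Concretely, I would construct an APTA $\Bmc$ over node alphabet $\Sigma \cap \NC$ and edge alphabet $\Sigma \cap \NR$ as the intersection of two APTAs. The first, $\Bmc_1$, checks condition~(a): it accepts a pointed $\Sigma$-interpretation $(\Imc,d)$ iff the interpretation of the non-$\Sigma$ concept and role names of $\Tmc$ can be extended so that $\Tmc$ holds everywhere reachable from $d$. This is a standard $\ALC$-satisfiability-style automaton: its states are the types in $\mn{TP}(\Tmc)$ together with a ``root'' state, exactly analogous to the construction of $\Amc_{\Tmc,\Sigma}$ preceding Theorem~\ref{thm:bisimauto} (indeed $\Bmc_1$ is essentially $\Amc_{\Tmc,\Sigma}$ itself, since that automaton already accepts precisely those $(\Imc,d)$ with $(\Imc,d) \models \exists^\sim_{\overline\Sigma}.\Tmc$, and for a TBox the ``exists a bisimilar model'' property propagates along all $\Sigma$-successors). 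So $|\Bmc_1| \in 2^{\Omc(|\Tmc|)}$, built in time $2^{p(|\Tmc|)}$. The second, $\Bmc_2$, must accept $(\Imc,d)$ iff $(\Imc,d) \not\models \exists^\sim_{\overline{\Sigma}'}.\Tmc'$, where $\Sigma' = \mn{sig}(\Tmc) \cap \mn{sig}(\Tmc')$ is the common signature; we obtain $\Bmc_2$ by taking the APTA $\Amc_{\Tmc',\Sigma'}$ of Theorem~\ref{thm:bisimauto} (with $|Q| \in 2^{\Omc(|\Tmc'|)}$) and complementing it, which costs only polynomial time. Here the asymmetry is essential and it is where the $2^{|\Tmc'|}$ in the exponent comes from: $\Amc_{\Tmc',\Sigma'}$ has $2^{\Omc(|\Tmc'|)}$ states, so its complement does too. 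The product $\Bmc = \Bmc_1 \cap \Bmc_2$ has $2^{\Omc(|\Tmc|)} \cdot 2^{\Omc(|\Tmc'|)}$ states; running the \ExpTime emptiness test on $\Bmc$ takes time $2^{p(|\Tmc|+|\Tmc'|)}$ — which is within the claimed $2^{p(|\Tmc|\cdot 2^{|\Tmc'|})}$ bound — and $L(\Bmc) \neq \emptyset$ iff $\Tmc \cup \Tmc'$ fails to be a conservative extension of $\Tmc$. (The somewhat generous bound $2^{p(|\Tmc| \cdot 2^{|\Tmc'|})}$ in the statement leaves room for a cruder size estimate; one can alternatively phrase $\Bmc_2$ so that it directly checks ``some model of $\Tmc'$ is $\Sigma'$-bisimilar to $(\Imc,d)$'' via a subset-construction-flavoured automaton with $2^{|\Tmc'|}$ states whose complement one then feeds in, still landing comfortably in the bound.)

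The correctness argument has two directions. For soundness of the reduction, suppose $L(\Bmc) \neq \emptyset$ with witness $(\Imc,d)$. From the accepting run of $\Bmc_1$ we read off an extension $\Imc^+$ of the $\Sigma$-interpretation $\Imc$ to $\mn{sig}(\Tmc)$ that is a model of $\Tmc$ (at least on the part reachable from $d$; one then completes it trivially on the rest, or restricts attention to the reachable submodel, which still witnesses non-conservativity). Since $\Bmc_2$ accepts $(\Imc,d)$, by Theorem~\ref{thm:bisimauto} and complementation there is no model of $\Tmc'$ that is $\Sigma'$-bisimilar to $(\Imc^+,d)$ at $d$; but if $\Tmc \cup \Tmc'$ were a conservative extension, $\Imc^+ \models \Tmc$ would force such a bisimilar model to exist by Theorem~\ref{bisimCE}. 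Conversely, if $\Tmc \cup \Tmc'$ is not a conservative extension, Theorem~\ref{bisimCE} hands us a model $\Imc^+$ of $\Tmc$ and a point $d$ with $(\Imc^+,d) \not\models \exists^\sim_{\overline{\Sigma}}.\Tmc'$; its $\Sigma$-reduct $(\Imc,d)$ is then accepted by both $\Bmc_1$ (via the types realized in $\Imc^+$) and $\Bmc_2$, so $L(\Bmc) \neq \emptyset$. The main obstacle — and the only genuinely delicate point — is the careful bookkeeping of signatures: $\Tmc'$ may contain symbols shared with $\Tmc$, symbols fresh to $\Tmc'$, and the conservativity signature $\Sigma = \mn{sig}(\Tmc)$ sits between $\Sigma' = \mn{sig}(\Tmc)\cap\mn{sig}(\Tmc')$ and the full vocabulary; one must verify that running everything over $\Sigma$-interpretations (rather than $\Sigma'$-interpretations) is harmless, which follows because the fresh-to-$\Tmc'$ symbols in $\Sigma$ are simply ignored by $\Bmc_2$ and the extra $\Sigma$-structure is invisible to the $\Sigma'$-bisimulation quantifier. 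Everything else — the APTA constructions, intersection, complementation, and the \ExpTime emptiness check — is off-the-shelf from Wilke~\cite{Wilke-Automata} and the earlier sections of the paper.
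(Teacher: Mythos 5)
Your overall plan—intersect an automaton that recognizes models of $\Tmc$ with the complement of the bisimulation‑quantifier automaton for $\Tmc'$ and test emptiness—is exactly the paper's plan, and your reduction of correctness to Theorem~\ref{bisimCE} plus Lemma~\ref{lem:aptabisimclosed} is also on the mark. But there is a genuine gap in the complexity analysis, and it traces back to your choice of $\Bmc_1$.

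You take $\Bmc_1$ to be the automaton $\Amc_{\Tmc,\Sigma}$ of Theorem~\ref{thm:bisimauto}, whose states are the (realizable) types of $\Tmc$, so $|\Bmc_1|\in 2^{\Omc(|\Tmc|)}$. The intersection $\Bmc=\Bmc_1\cap\Bmc_2$ then has $2^{\Omc(|\Tmc|)}+2^{\Omc(|\Tmc'|)}$ states, and Wilke's emptiness test runs in time $2^{q(\lVert\Bmc\rVert)}$ — \emph{exponential in the automaton size}, not in $|\Tmc|+|\Tmc'|$. So the actual running time of your construction is $2^{q\bigl(2^{\Omc(|\Tmc|)}+2^{\Omc(|\Tmc'|)}\bigr)}$, which is doubly exponential in $|\Tmc|$. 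This is not bounded by the theorem's claim $2^{p(|\Tmc|\cdot 2^{|\Tmc'|})}$: fix $|\Tmc'|=\Omc(1)$ and let $|\Tmc|=n$; the claimed bound collapses to $2^{p(n)}$ (single exponential in $n$), while yours is $2^{q(2^{\Omc(n)})}$ (double exponential in $n$). The intermediate line ``running the \ExpTime\ emptiness test on $\Bmc$ takes time $2^{p(|\Tmc|+|\Tmc'|)}$'' silently replaces the automaton size by the TBox sizes and is where the bookkeeping goes wrong.

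The paper avoids this by not using the type‑based automaton of Theorem~\ref{thm:bisimauto} for the ``is a model of $\Tmc$'' side. Because $\Sigma=\mn{sig}(\Tmc)$, checking $(\Imc,d)\models\exists^\sim_{\overline\Sigma}.\Tmc$ is just checking that $\Tmc$ holds everywhere reachable from $d$; this admits a direct alternating automaton whose states are the subformulas of $C_\Tmc$ (with $\delta(q_0)=C_\Tmc\wedge\bigwedge_{r\in\Sigma\cap\NR}[r]q_0$ and the obvious transitions for Boolean and modal operators), hence only $\Omc(|\Tmc|)$ states. The exponential blow‑up is thus confined to the $\Tmc'$ side, giving $\lVert\Bmc\rVert\in\Omc(|\Tmc|)+2^{\Omc(|\Tmc'|)}$ and an overall running time of $2^{p(|\Tmc|\cdot 2^{|\Tmc'|})}$ as required. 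Everything else in your argument — complementing $\Amc_{\Tmc',\cdot}$, intersection via Lemma~\ref{lem:compl}, closure of APTA languages under $\Sigma$‑bisimulations, and the two directions of correctness via Theorem~\ref{bisimCE} — is fine and matches the paper. (Your discussion of whether to use $\Sigma=\mn{sig}(\Tmc)$ or $\Sigma'=\mn{sig}(\Tmc)\cap\mn{sig}(\Tmc')$ in $\Bmc_2$ is also a legitimate point, but not where the problem lies; it can be handled by either choice.)
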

A 2-\ExpTime lower bound was also established in \cite{KR06}, thus the
upper bound stated in Theorem~\ref{thm:ceupper} is tight. This lower
bound transfers to the existence of uniform interpolants: one can
show that $\Tmc' = \Tmc \cup \{\top \sqsubseteq C\}$ is a
conservative extension of \Tmc iff there is a uniform ${\sf sig}(\Tmc)\cup \{r\}$-interpolant 
of 
$$
\Tmc_{0} = \Tmc \cup \{ \neg C \sqsubseteq A, A\sqsubseteq \exists r.A\} \cup 
\{\exists s.A \sqsubseteq A\mid s\in {\sf sig}(\Tmc')\},
$$
with $r,A$ are fresh. This yields the main result of this
section.
\begin{theorem}
  \label{thm:unifint2expupper}
  It is 2-\ExpTime-complete to decide, given a TBox \Tmc and
  a signature $\Sigma \subseteq \mn{sig}(\Tmc)$, whether there exists a
  uniform $\Sigma$-interpolant of \Tmc.
\end{theorem}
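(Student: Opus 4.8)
The plan is to prove Theorem~\ref{thm:unifint2expupper} by establishing the upper and lower bounds separately, both of which should follow from material already developed. For the \textbf{upper bound}, the strategy is to combine Theorem~\ref{fixm} with the automata-theoretic machinery: by Theorem~\ref{fixm}, a uniform $\Sigma$-interpolant of $\Tmc$ fails to exist iff $(\ast_{M_\Tmc^2+1})$ holds, where $M_\Tmc = 2^{2^{|\Tmc|}}$, so it suffices to check Condition~$(\ast_m)$ of Theorem~\ref{thm1} for the single value $m = M_\Tmc^2 + 1$. Theorem~\ref{prop:cetocons} then yields an APTA $\Amc_{\Tmc,\Sigma,m}$ whose non-emptiness is equivalent to $(\ast_m)$, and whose number of states is in $\Omc(2^{\Omc(n)} + \log^2 m)$ with $n = |\Tmc|$. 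For $m = 2^{2^{n}}$ we have $\log m \in 2^{\Omc(n)}$, hence $\log^2 m \in 2^{\Omc(n)}$, so $|Q| \in 2^{\Omc(n)}$, and the alphabets have size $\Omc(n + \log m) \in 2^{\Omc(n)}$ as well. Since the emptiness problem for APTAs is decidable in exponential time (Wilke~\cite{Wilke-Automata}), running the emptiness test on $\Amc_{\Tmc,\Sigma,m}$ takes time $2^{2^{\Omc(n)}}$, i.e.\ double exponential in $|\Tmc|$. This gives the $2$-\ExpTime upper bound.

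For the \textbf{lower bound}, the plan is to reduce the conservative extension problem --- known to be $2$-\ExpTime-hard by~\cite{KR06} --- to the existence of uniform interpolants, using the reduction already sketched in the excerpt. Given the instance consisting of $\Tmc$ and $\Tmc' = \Tmc \cup \{\top \sqsubseteq C\}$, one forms the TBox
$$
\Tmc_0 \;=\; \Tmc \;\cup\; \{\, \neg C \sqsubseteq A,\; A \sqsubseteq \exists r.A \,\} \;\cup\; \{\, \exists s.A \sqsubseteq A \mid s \in \mn{sig}(\Tmc') \,\},
$$
with $r$ and $A$ fresh concept/role names not occurring in $\Tmc'$, and claims that $\Tmc'$ is a conservative extension of $\Tmc$ iff there is a uniform $\mn{sig}(\Tmc) \cup \{r\}$-interpolant of $\Tmc_0$. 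I would prove this biconditional via the bisimulation characterizations: by Theorem~\ref{bisimCE}, $\Tmc' = \Tmc \cup \{\top \sqsubseteq C\}$ is a conservative extension of $\Tmc$ iff every model $\Imc$ of $\Tmc$ is such that every $(\Imc,d)$ is $\mn{sig}(\Tmc)$-bisimilar to some pointed model of $\{\top \sqsubseteq C\}$, i.e.\ (since $\{\top \sqsubseteq C\}$ forces $C$ everywhere) iff every model of $\Tmc$ can be reinterpreted on the non-$\mn{sig}(\Tmc)$ symbols so that $C$ becomes universally true. The extra axioms in $\Tmc_0$ are designed so that the $\Sigma$-part of $\Tmc_0$ with $\Sigma = \mn{sig}(\Tmc) \cup \{r\}$ ``sees'' exactly whether $C$ can be made universally true: $A$ marks the elements where $\neg C$ is (forced to be) true, the axiom $A \sqsubseteq \exists r.A$ forces an infinite $r$-chain out of any such element, and $\exists s.A \sqsubseteq A$ propagates the mark backwards along all $\mn{sig}(\Tmc')$-roles, so that if $\neg C$ holds anywhere reachable then the root must start an infinite $r$-chain --- a condition that, exactly as in Example~\ref{ex1}(iii), is not $\ALC$-expressible and hence destroys the existence of a uniform interpolant. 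Conversely, if $C$ can always be made universally true, $A$ is never forced and the interpolant exists. Since $|\Tmc_0|$ is polynomial in $|\Tmc| + |\Tmc'|$, this is a polynomial reduction, transferring $2$-\ExpTime-hardness.

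The main obstacle I anticipate is the correctness argument for this reduction, specifically the direction showing that a uniform interpolant \emph{fails} to exist when $\Tmc'$ is a conservative extension --- wait, the polarity is the reverse: one must show the interpolant exists precisely when $\Tmc'$ is \emph{not} conservative, i.e.\ when some model of $\Tmc$ genuinely needs $\neg C$ somewhere. The delicate point is that the non-expressibility phenomenon (needing an infinite $r$-chain, with all intermediate nodes labelled by the non-$\Sigma$ name $A$ in the $\Sigma$-bisimulation-reduct) must be carefully matched against the $(\ast_m)$-criterion of Theorem~\ref{thm1}: one needs $\Sigma$-tree interpretations $\Imc_1, \Imc_2$ agreeing on the $m$-segment, with $\Imc_1$ extendible to a model of $\Tmc_0$ (an infinite chain) and $\Imc_2$ not (a finite truncation), while all proper successors of $\rho^{\Imc_2}$ remain extendible. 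This requires verifying that the fresh $A$ can be reinstated appropriately in the bisimilar model of $\Tmc_0$ and that the backward-propagation axioms $\exists s.A \sqsubseteq A$ do not cause unintended clashes in the $\Tmc$-part --- which is where the freshness of $A$ and $r$, and the precise choice of $\Sigma$, are essential. Once this bookkeeping is done, both bounds close and $2$-\ExpTime-completeness follows.
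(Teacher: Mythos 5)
Your upper-bound argument follows the paper exactly: Theorem~\ref{fixm} reduces the problem to checking $(\ast_m)$ for the single value $m = M_\Tmc^2+1$ with $M_\Tmc = 2^{2^{|\Tmc|}}$, Theorem~\ref{prop:cetocons} captures this by an APTA $\Amc_{\Tmc,\Sigma,m}$ whose size is exponential in $|\Tmc|$ (since $\log^2 m \in 2^{\Omc(|\Tmc|)}$), and Wilke's exponential-time emptiness test closes the bound at $2^{2^{\Omc(|\Tmc|)}}$. The lower-bound reduction via $\Tmc_0$ is also the paper's, and most of your informal justification for it is right.

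However, in your last paragraph you invert the polarity of the reduction: you write that ``one must show the interpolant exists precisely when $\Tmc'$ is \emph{not} conservative.'' This is backwards and contradicts both the paper and your own earlier sentence (``if $C$ can always be made universally true, $A$ is never forced and the interpolant exists''). The correct biconditional, which the paper proves, is: $\Tmc' = \Tmc \cup \{\top \sqsubseteq C\}$ is a conservative extension of $\Tmc$ \emph{iff} a uniform $(\mn{sig}(\Tmc) \cup \{r\})$-interpolant of $\Tmc_0$ exists. When $\Tmc'$ is conservative, Theorem~\ref{bisimCE} gives that every pointed model of $\Tmc$ is $\mn{sig}(\Tmc)$-bisimilar to a model of $\Tmc'$; since $C$ holds there everywhere, one can set $A^\Jmc = \emptyset$, so $\Tmc$ itself serves as the interpolant. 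When $\Tmc'$ is \emph{not} conservative, one takes a $\mn{sig}(\Tmc)$-concept $C_0$ satisfiable w.r.t.\ $\Tmc$ but not $\Tmc'$, uses the backward-propagation axioms to show $\Tmc_0 \models C_0 \sqsubseteq A$, hence $\Tmc_0$ forces unboundedly long $r$-chains below $C_0$-elements, and then verifies $(\ast_m)$ of Theorem~\ref{thm1} for every $m$ by taking an infinite $r$-chain for $\Imc_1$ and a finite truncation for $\Imc_2$ --- exactly the ``delicate point'' you flag, just attached to the wrong direction. With the polarity fixed, your argument coincides with the paper's and is sound.
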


%%% Local Variables: 
%%% mode: latex
%%% TeX-master: "subuni"
%%% End: 

\section{Computing Interpolants / Interpolant Size}
\label{sect:lower}

We show how to compute smaller uniform interpolants than 
the non-elementary $\Tmc_{\Sigma,M_{\Tmc}^{2}+1}$ and establish a matching upper bound
on their size. Let $C$ be a concept and
$\Sigma\subseteq {\sf sig}(C)$ a signature. A concept $C'$ is called a
\emph{concept uniform $\Sigma$-interpolant of $C$} if ${\sf sig}(C)
\subseteq \Sigma$, $\emptyset \models C \sqsubseteq C'$, and
$\emptyset \models C' \sqsubseteq D$ for every concept $D$ such that
${\sf sig}(D) \subseteq \Sigma$ and $\emptyset
\models C \sqsubseteq D$.  The following result is proved in
\cite{CCMV-KR06}.
\begin{theorem}\label{conceptuniform}
For every concept $C$ and signature $\Sigma\subseteq {\sf sig}(C)$
one can effectively compute a concept uniform $\Sigma$-interpolant $C'$ of 
$C$ of at most exponential size in $C$.
\end{theorem}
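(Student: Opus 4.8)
The plan is to give a direct type-based construction of a concept uniform $\Sigma$-interpolant, exploiting the one feature that separates the concept case from the TBox case: since $C$ has finite role depth $d:=\mn{rd}(C)$, the ``bisimulation quantifier'' over the symbols outside $\Sigma$ only inspects a neighbourhood of depth $d$, and is therefore itself expressible in $\ALC$. First I would fix the semantic target: the interpolant $C'$ should define, in every interpretation $\Imc$, the set of $x\in\Delta^\Imc$ for which $(\Imc,x)\sim_\Sigma(\Jmc,e)$ for some pointed interpretation $(\Jmc,e)$ with $e\in C^\Jmc$. Using $\mn{rd}(C)=d$ together with Theorem~\ref{theorem:bisimchar}, this condition is equivalent to the statement that the $\Sigma$-reduct of the $d$-segment of $\Imc$ at $x$ can be expanded by an interpretation of the symbols outside $\Sigma$ (plus, if needed, extra non-$\Sigma$-role edges, which leave $\Sigma$-bisimilarity untouched) so that $C$ holds at $x$. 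I would record this equivalence and its two easy consequences before doing anything else: $\emptyset\models C\sqsubseteq C'$ (use the identity expansion of $\Imc$ itself), and, for any $\Sigma$-concept $D$ with $\emptyset\models C\sqsubseteq D$, also $\emptyset\models C'\sqsubseteq D$ (the truth of $D$ at $x$ depends only on the $\Sigma$-reduct, so an expansion witnessing $x\in(C')^\Imc$ also witnesses $x\in D^\Imc$). Hence any $C'$ with this extension is a concept uniform $\Sigma$-interpolant, and it remains to build such a $C'$ in $\ALC$ of exponential size.

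I would then construct $C'$ by a layered ``characteristic concept'' recursion --- in effect the depth-$d$ unrolling of the automaton $\Amc_{\Tmc,\Sigma}$ of Theorem~\ref{thm:bisimauto} specialised to $\Tmc=\{\top\sqsubseteq C\}$, which is legitimate here precisely because the relevant runs have length at most $d$. Let $\mn{cl}(C)$ be the closure of $\{C\}$ under subconcepts and single negation, let a \emph{type} be a maximal propositionally consistent subset of $\mn{cl}(C)$ (treating concepts $\exists r.E$ as atoms), and for $0\le n\le d$ let $\mn{Real}_n$ be the set of types realized at the root of some $\mn{sig}(C)$-tree interpretation of depth at most $n$; the $\mn{Real}_n$ are computed bottom-up by a routine satisfiability argument (at depth $0$ the realizable types are exactly those containing no concept of the form $\exists r.E$). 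With $t\leadsto_r t'$ the compatibility relation of the paper ($E\in t'$ implies $\exists r.E\in t$ for all $\exists r.E\in\mn{cl}(C)$), define $\Sigma$-concepts $\chi^n_t$ for $t\in\mn{Real}_n$ by recursion on $n$: put $\chi^0_t := \bigwedge\{A\mid A\in t\cap\NC\cap\Sigma\}\sqcap\bigwedge\{\neg A\mid A\in(\NC\cap\Sigma)\cap\mn{cl}(C),\ A\notin t\}$ and, for $n>0$,
\[
  \chi^n_t \,:=\, \chi^0_t \,\sqcap\!\! \bigwedge_{r\in\Sigma\cap\NR}\!\! \forall r.\,\psi^{n}_{t,r} \,\sqcap\!\!\!\! \bigwedge_{\exists r.E\in t,\, r\in\Sigma}\!\!\!\! \exists r.\,\psi^{n}_{t,r,E},
\]
where $\psi^{n}_{t,r}=\bigvee\{\chi^{n-1}_{t'}\mid t'\in\mn{Real}_{n-1},\, t\leadsto_r t'\}$ and $\psi^{n}_{t,r,E}=\bigvee\{\chi^{n-1}_{t'}\mid t'\in\mn{Real}_{n-1},\, E\in t',\, t\leadsto_r t'\}$ (an empty $\bigwedge$ reading as $\top$, an empty $\bigvee$ as $\bot$); finally $C':=\bigvee\{\chi^d_t\mid t\in\mn{Real}_d,\ C\in t\}$. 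One may w.l.o.g.\ restrict every role name mentioned above to $\mn{sig}(C)$, so that $\mn{sig}(C')\subseteq\Sigma$.

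The heart of the proof is a correctness lemma, by induction on $n$: for every interpretation $\Imc$, every $x\in\Delta^\Imc$, and every $t\in\mn{Real}_n$, $x\in(\chi^n_t)^\Imc$ holds iff the $\Sigma$-reduct of the $n$-segment of $\Imc$ at $x$ can be expanded --- after unravelling, to cope with $\Imc$ not being a tree --- to a $\mn{sig}(C)$-interpretation whose root realizes exactly the type $t$ over $\mn{cl}(C)$. The ``if'' direction reads the truth of $\chi^n_t$ directly off the expansion. The ``only if'' direction is the real work: from the successor types chosen by the $\forall r$- and $\exists r$-disjuncts one has to assemble a \emph{globally coherent} interpretation of the hidden symbols over the whole $n$-segment, and the relation $\leadsto_r$ is precisely what makes every existential restriction of a subconcept of $C$ that should be true at a node get a witness while every one that should be false gets none. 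Getting $\mn{Real}_n$ and $\leadsto_r$ exactly right so that this assembly never gets stuck --- in particular, the insight that the interpolant cannot be obtained by an independent guess at each layer --- is where I expect the main difficulty; the rest is bookkeeping. Applying the lemma with $n=d$ and combining with the semantic target from the first step shows $C'$ is a concept uniform $\Sigma$-interpolant of $C$. For the size bound, $|\mn{cl}(C)|=\Omc(|C|)$ gives $|\mn{Real}_n|\le 2^{\Omc(|C|)}$, there are $d+1\le|C|+1$ layers, and each $\chi^n_t$ is a $\sqcap$/$\sqcup$-combination of $\Omc(|C|)$ disjunctions over at most $2^{\Omc(|C|)}$ concepts from layer $n-1$; hence $|C'|\le 2^{\Omc(|C|)\cdot d}=2^{\Omc(|C|^2)}$ as a plain $\ALC$ concept (and $2^{\Omc(|C|)}$ with structure sharing) --- in either case at most exponential in $C$, as claimed.
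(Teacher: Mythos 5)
The paper does not actually prove this theorem; it simply cites it to ten Cate, Conradie, Marx and Venema (the reference~\cite{CCMV-KR06}). So there is nothing internal to the paper to compare your argument against, and the question is whether your reconstruction is self-contained and correct.

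Your construction of $C'$ via $\mn{Real}_n$, $\leadsto_r$ and the layered characteristic concepts $\chi^n_t$ is a sensible depth-bounded unrolling of the $\Amc_{\Tmc,\Sigma}$ construction of Theorem~\ref{thm:bisimauto}, and the size analysis is fine (exponential either as a plain concept or as a DAG). But the semantic statement you place everything on --- both the ``first step'' equivalence and the correctness lemma for $\chi^n_t$ --- is wrong, and this is a genuine gap, not bookkeeping. You claim that $(\Imc,x)\models\exists^\sim_{\overline{\Sigma}}.C$ holds iff the $\Sigma$-reduct of the (unravelled) $d$-segment of $\Imc$ at $x$ can be \emph{expanded} by an interpretation of the symbols outside $\Sigma$ (plus extra non-$\Sigma$-role edges) so that $C$ holds at $x$. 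This is false. Take $C=\exists r.A\sqcap\exists r.\neg A$ with $\Sigma=\{r\}$ and $A\notin\Sigma$, and let $\Imc$ consist of $x$ with exactly one $r$-successor $y$. Then $(\Imc,x)\models\exists^\sim_{\overline{\Sigma}}.C$: letting $\Jmc$ have root $e$ with two $r$-successors, exactly one in $A^\Jmc$, we have $e\in C^\Jmc$ and $(\Imc,x)\sim_\Sigma(\Jmc,e)$ via the bisimulation that relates $y$ to both successors. But no expansion of the $\Sigma$-reduct of $\{x,y\}$ with the single edge $(x,y)\in r$ by symbols outside $\Sigma$ can make $C$ true at $x$: since $r\in\Sigma$, $y$ remains the only $r$-successor of $x$, and $y$ cannot be both in and out of $A$. (Adding non-$\Sigma$-role edges is irrelevant here.) Note that your own recipe computes $\chi^1_{t}=\exists r.\top$ for the relevant $t$, which \emph{is} the right interpolant; but the ``only if'' direction of your correctness lemma then asserts an expansion that does not exist, and that is exactly the direction you label ``the real work'' and do not carry out --- if you tried, it would not go through as stated.

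What a $\Sigma$-bisimulation buys you beyond an expansion is precisely the freedom to \emph{duplicate} $\Sigma$-successors, so that each $\exists r.E$ conjunct can be witnessed by a different copy of the same successor carrying a different assignment of the hidden symbols. The correctness invariant for $\chi^n_t$ must therefore be stated in terms of bounded $\Sigma$-bisimulation, for instance: $x\in(\chi^n_t)^\Imc$ iff $(\Imc,x)\sim^n_\Sigma(\Jmc,\rho^\Jmc)$ for some $\mn{sig}(C)$-tree interpretation $\Jmc$ of depth at most $n$ with $\mn{tp}^\Jmc(\rho^\Jmc)=t$. Proving the nontrivial direction then requires duplicating subtrees (in the spirit of Lemma~\ref{lem2} of the appendix, or of the run construction in Lemma~\ref{lem:aptabisimclosed}), not merely expanding the fixed $n$-segment. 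With the invariant corrected, your syntactic recipe, the use of $\mn{Real}_n$ and $\leadsto_r$, the top-level disjunction over types containing $C$, and the exponential size bound are all on track.
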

This result can be lifted to (TBox)
uniform interpolants by `internalization' of the TBox. This is very
similar to what is attempted in \cite{DBLP:conf/ecai/WangWTZ10}, but
we use different bounds on the role depth of the internalization
concepts. More specifically, let $\Tmc= \{\top \sqsubseteq C_{\Tmc}\}$
have a uniform $\Sigma$-interpolant and $R$ denote the set of role
names in $\Tmc$. For a concept $C$, define inductively
$$
\forall R^{\leq 0}.C = C,\quad \forall R^{\leq n+1}.C = C \sqcap \bigsqcap_{r\in R}\forall r.
\forall R^{\leq n}.C
$$ 
It can be shown using Theorem~\ref{fixm} that for $m = 2^{2^{|C_\Tmc|+1}} + 2^{|C_\Tmc|} +2$
and $C$ a concept uniform $\Sigma$-interpolant of $\forall R^{\leq
  m}.C_\Tmc$, the TBox $\Tmc'= \{ \top \sqsubseteq
C\}$ is a uniform $\Sigma$-interpolant of $\Tmc$. A close
inspection of the construction underlying the proof of
Theorem~\ref{conceptuniform} applied to $\forall R^{\leq m}.C_\Tmc$
reveals that ${\sf rd}(C)\leq {\sf rd}(\forall R^{\leq m}.C_{\Tmc})$ and that the size of $C$ is
at most triple exponential in $|\Tmc|$. This yields the following upper bound.
\begin{theorem}
\label{thm:internalization}
Let \Tmc be an \ALC-TBox and $\Sigma\subseteq {\sf sig}(\Tmc)$. If there is a uniform
$\Sigma$-interpolant of \Tmc, then there is one of size at most $2^{2^{2^{p(|\Tmc|)}}}$,
$p$ a polynomial.
\end{theorem}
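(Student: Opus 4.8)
The strategy is to reduce the TBox-level uniform interpolation problem to the
concept-level version via \emph{internalization}, so that
Theorem~\ref{conceptuniform} can be applied, and then to carefully bound the
size and role-depth of the resulting concept. The starting point is the
standard normal form $\Tmc = \{\top \sqsubseteq C_\Tmc\}$, which we may assume
by folding all inclusions into a single one and pushing negations inward.
Because every model of $\Tmc$ is one in which $C_\Tmc$ holds \emph{everywhere}
reachable from a given point, restricting attention to $m$-segments suggests
that the concept $\forall R^{\leq m}.C_\Tmc$ captures enough of $\Tmc$ to
recover its uniform interpolant, provided $m$ is large enough.

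\textbf{Key steps, in order.}
First, I would fix $m = 2^{2^{|C_\Tmc|+1}} + 2^{|C_\Tmc|} + 2$ and show,
using Theorem~\ref{fixm} (the ``fix $m$'' characterization via
$(\ast_{M_\Tmc^2+1})$) together with the semantic characterization of
$\Tmc_{\Sigma,m}$ in terms of $\Sigma$-bisimilarity of $m$-segments to
$m$-segments of models of $\Tmc$, that if $C$ is a concept uniform
$\Sigma$-interpolant of $\forall R^{\leq m}.C_\Tmc$ then
$\Tmc' = \{\top \sqsubseteq C\}$ is a uniform $\Sigma$-interpolant of $\Tmc$;
the hypothesis that a uniform $\Sigma$-interpolant of $\Tmc$ exists at all is
what makes this go through, since it guarantees that the ``problem depth''
does not exceed what an $m$-segment can see. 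Second, I would appeal to
Theorem~\ref{conceptuniform} to obtain such a $C$ of size at most exponential
in $|\forall R^{\leq m}.C_\Tmc|$. Third, I would inspect the construction
behind Theorem~\ref{conceptuniform} to extract the finer fact that
${\sf rd}(C) \leq {\sf rd}(\forall R^{\leq m}.C_\Tmc)$, i.e., the
concept-interpolation procedure does not increase role depth --- this is
needed to keep the blow-up from stacking an extra exponential. Finally, I
would assemble the size bounds: $|\forall R^{\leq m}.C_\Tmc|$ is roughly
$m \cdot |C_\Tmc| \cdot |R|$, hence double exponential in $|\Tmc|$, so the
exponential blow-up from Theorem~\ref{conceptuniform} yields a concept $C$ of
size at most $2^{2^{2^{p(|\Tmc|)}}}$ for a polynomial $p$, giving the claimed
triple-exponential upper bound.

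\textbf{Main obstacle.}
The crux is the first step: proving that internalizing $\Tmc$ by
$\forall R^{\leq m}.C_\Tmc$ for this particular $m$ is \emph{sound and
complete} for computing the uniform interpolant. Soundness (every
$\Sigma$-consequence of $\Tmc'$ is a $\Sigma$-consequence of $\Tmc$) is
comparatively routine. Completeness is the delicate part: one must show that
truncating $\Tmc$ to depth $m$ loses no $\Sigma$-inclusion that survives in
the interpolant, and this is exactly where the bound
$m = 2^{2^{|C_\Tmc|+1}} + 2^{|C_\Tmc|} + 2 \geq M_\Tmc^2 + 1$ matters --- it
must dominate the threshold from Theorem~\ref{fixm}, so that the
extension-set pumping argument underlying Theorem~\ref{fixm} applies and any
``long-range constraint'' that could be imposed in a model of $\Tmc$ already
manifests within depth $m$. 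A secondary subtlety, flagged in the excerpt
itself, is that the role-depth bound ${\sf rd}(C) \leq {\sf rd}(\forall
R^{\leq m}.C_\Tmc)$ requires a genuine look into the proof of
Theorem~\ref{conceptuniform} from \cite{CCMV-KR06}, rather than using that
theorem as a black box; this is where the present argument diverges from the
(flawed) bounds used in \cite{DBLP:conf/ecai/WangWTZ10}.
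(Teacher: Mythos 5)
Your overall strategy matches the paper's almost exactly: normalize $\Tmc$ to $\{\top\sqsubseteq C_\Tmc\}$, internalize to depth $m = 2^{2^{|C_\Tmc|+1}} + 2^{|C_\Tmc|} + 2$, prove via Theorem~\ref{fixm} that a concept uniform $\Sigma$-interpolant $C$ of $\forall R^{\leq m}.C_\Tmc$ yields $\{\top\sqsubseteq C\}$ as a TBox uniform interpolant, and then bound $|C|$. The first step is also argued in essentially the paper's way (the paper shows $\emptyset\models\forall R^{\leq m}.C_\Tmc\sqsubseteq F$ for the known depth-bounded interpolant $F=\Tmc_{\Sigma,M_\Tmc^2+1}$ by a filtration/folding of a tree counterexample at depth $2^{2^{|C_\Tmc|+1}}$, a ``global-to-local'' reduction; your sketch is compatible with this).

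The gap is in the final size calculation. You write that $|\forall R^{\leq m}.C_\Tmc|$ is ``roughly $m\cdot|C_\Tmc|\cdot|R|$, hence double exponential,'' but that is the \emph{DAG} size. The paper's $\forall R^{\leq m}.C$ is defined by $\forall R^{\leq n+1}.C = C\sqcap\bigsqcap_{r\in R}\forall r.\forall R^{\leq n}.C$, so as a syntax tree (which is how $|\cdot|$ is measured) it branches by $|R|$ at every level: for $|R|\geq 2$ its size is $\Theta(|R|^m\cdot|C_\Tmc|)$, which is already \emph{triple} exponential in $|\Tmc|$. Applying Theorem~\ref{conceptuniform} as a black box then gives a bound exponential in the input size, i.e., \emph{quadruple} exponential --- one exponential too many. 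This is exactly why the paper does not use Theorem~\ref{conceptuniform} as a black box; instead it says ``a close inspection of the construction underlying the proof of Theorem~\ref{conceptuniform} applied to $\forall R^{\leq m}.C_\Tmc$ reveals that $\mn{rd}(C)\leq\mn{rd}(\forall R^{\leq m}.C_\Tmc)$ and that the size of $C$ is at most triple exponential.'' The point of that inspection is that the output of the ten Cate et al.\ construction is controlled not by the syntactic input size but by (role depth)\,$\times$\,(number of relevant types), and the types for $\forall R^{\leq m}.C_\Tmc$ are determined by $\mn{cl}(C_\Tmc)$ together with a depth counter up to $m$, of which there are only about $2^{|C_\Tmc|}\cdot m$ (double exponential). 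Combined with the role-depth bound $\mn{rd}(C)\leq m+\mn{rd}(C_\Tmc)$ this yields a tree of double-exponential branching and double-exponential depth, hence triple-exponential size. You flag the role-depth bound as ``needed to keep the blow-up from stacking an extra exponential,'' which is the right instinct, but your Step~4 does not actually use it --- it falls back on the black-box exponential blow-up applied to an underestimated input size. To close the gap you would need to make the DAG/type-counting argument explicit rather than rely on the black-box bound from Theorem~\ref{conceptuniform}.
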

A matching lower bound on the size of uniform interpolants can be obtained
by transferring a lower bound on the size of so-called witness concepts for
(non-)conservative extensions established in \cite{KR06}:
\begin{theorem}
\label{thm:trplexpsize}
There exists a signature $\Sigma$ of cardinality $4$
and a family of TBoxes $(\Tmc_n)_{n  > 0 }$ such that, for all $n > 0$,
\begin{itemize}

%\item[(i)] $\Tmc_n \cup
%\Tmc'_n$ is not a conservative extension of $\Tmc_n$,

\item[(i)] $|\Tmc_n| \in \Omc(n^2)$ and

\item[(ii)] every uniform $\Sigma$-interpolant $\{ \top \sqsubseteq
  C_\Tmc \}$ for $\Tmc_n$ is of size at least $2^{(2^n \cdot
    2^{2^n})-2}$.

\end{itemize}
\end{theorem}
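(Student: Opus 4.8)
The plan is to transfer the known lower bound on \emph{witness concepts} for non-conservative extensions from \cite{KR06} into a lower bound on the size of uniform interpolants, using the reduction from conservative extensions to uniform interpolation already sketched before Theorem~\ref{thm:unifint2expupper}. Recall from \cite{KR06} that there is a family of TBox pairs $(\Tmc_n, \Tmc_n')$ with $\Tmc_n' \supseteq \Tmc_n$, $|\Tmc_n'| \in \Omc(n^2)$, such that $\Tmc_n'$ is \emph{not} a conservative extension of $\Tmc_n$, but every witness concept~$C$ (i.e.\ every $\mn{sig}(\Tmc_n)$-concept with $\Tmc_n' \models \top \sqsubseteq C$ but $\Tmc_n \not\models \top \sqsubseteq C$, or the analogous witnessing inclusion) must have size at least triple exponential in $n$. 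First I would fix this family and massage it so that the separating witness is a single concept inclusion of the required form.

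Next I would apply the reduction: set $\Tmc_n' = \Tmc_n \cup \{\top \sqsubseteq C\}$ appropriately, form
$$
\widehat\Tmc_n \;=\; \Tmc_n \;\cup\; \{\neg C \sqsubseteq A,\; A \sqsubseteq \exists r.A\} \;\cup\; \{\exists s.A \sqsubseteq A \mid s \in \mn{sig}(\Tmc_n')\},
$$
with $r, A$ fresh, and take $\Sigma = \mn{sig}(\Tmc_n) \cup \{r\}$. By the argument preceding Theorem~\ref{thm:unifint2expupper}, a uniform $\Sigma$-interpolant of $\widehat\Tmc_n$ exists; call it $\Tmc_{\Sigma}$, which we may assume has the form $\{\top \sqsubseteq C_\Tmc\}$ by conjoining inclusions and internalizing. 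The core of the argument is then to show that from such an interpolant~$C_\Tmc$ one can \emph{extract} a separating witness concept for the pair $(\Tmc_n, \Tmc_n')$ whose size is polynomially (indeed linearly, up to the internalization using $\forall R^{\leq k}$ for a polynomially bounded~$k$) related to the size of $C_\Tmc$. Since the extracted witness must be triple-exponentially large by \cite{KR06}, $C_\Tmc$ must be triple-exponentially large as well, giving the bound $2^{(2^n \cdot 2^{2^n}) - 2}$ after bookkeeping the exact constants from the \cite{KR06} construction. For the signature cardinality claim, I would check that the \cite{KR06} family can be arranged to separate over a signature of size~$3$ (or exhibit the specific small-signature variant) so that adding the fresh role~$r$ keeps $|\Sigma| = 4$; if the published construction uses more symbols, I would either cite the sharpened version or encode the extra predicates into the two base names plus roles in the standard way, verifying that this does not blow up $|\widehat\Tmc_n|$ beyond $\Omc(n^2)$.

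The main obstacle I anticipate is the witness-extraction step — making precise \emph{how} a uniform $\Sigma$-interpolant of $\widehat\Tmc_n$ encodes a separating witness for the original conservative-extension problem, and controlling the size overhead in that translation. Concretely, the interpolant captures all $\Sigma$-consequences of $\widehat\Tmc_n$, and the gadget $\{\neg C \sqsubseteq A, A \sqsubseteq \exists r.A, \exists s.A \sqsubseteq A\}$ is designed precisely so that non-conservativity of $\Tmc_n'$ over $\Tmc_n$ forces the interpolant to "say something about $r$-reachability" that can only be expressed by re-deriving the large witness concept; I would need to show that any $\Sigma$-concept doing this job has size at least that of a witness concept minus a polynomial (in particular, I must rule out that clever reuse of the fresh predicate $r$ and Boolean structure yields an exponential compression). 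A secondary, more routine obstacle is tracking the precise exponents: Theorem~\ref{thm:internalization} already shows the upper bound is $2^{2^{2^{p(|\Tmc|)}}}$, so here I only need the matching lower-order behavior, but pinning down the exact figure $2^{(2^n \cdot 2^{2^n})-2}$ requires carefully composing the \cite{KR06} bound with the $\Omc(n^2)$ size of $\widehat\Tmc_n$ — this is bookkeeping rather than conceptual difficulty. I would relegate the detailed verification of both steps to the long version, as the excerpt already announces for its other size results.
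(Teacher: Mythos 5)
Your plan breaks at the very first substantive step. You propose to run the reduction that feeds the complexity lower bound for Theorem~\ref{thm:unifint2expupper}: given $(\Tmc_n,\Tmc'_n)$, build
$\widehat\Tmc_n = \Tmc_n \cup \{\neg C \sqsubseteq A,\ A \sqsubseteq \exists r.A\} \cup \{\exists s.A \sqsubseteq A \mid s \in \mn{sig}(\Tmc'_n)\}$
and then look at the size of the uniform $\Sigma$-interpolant of $\widehat\Tmc_n$. But that reduction is precisely engineered so that a uniform $\Sigma$-interpolant of $\widehat\Tmc_n$ exists \emph{iff} $\Tmc'_n$ \emph{is} a conservative extension of $\Tmc_n$. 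In the lower-bound family from \cite{KR06}, $\Tmc'_n$ is \emph{not} a conservative extension of $\Tmc_n$ (that is what makes the witness concepts large in the first place), so $\widehat\Tmc_n$ has no uniform $\Sigma$-interpolant at all. You cannot then assume one exists and bound its size; the object you want to measure does not exist. The existence reduction is not transferable to a size bound.

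The correct setup, which is also what the paper does, sidesteps the gadget entirely: take the family in the theorem to be $\Tmc_n \cup \Tmc'_n$ (not $\widehat\Tmc_n$) and $\Sigma = \{A,B,r,s\} = \mn{sig}(\Tmc_n) \cap \mn{sig}(\Tmc'_n)$. The paper first exhibits an explicit interpolant $\Tmc_{\Sigma,n} = \Tmc^-_n \cup \{A \sqsubseteq \neg K_1 \sqcup \neg K_2^{(m)}\}$ (so an interpolant is guaranteed to exist, unlike with $\widehat\Tmc_n$), then argues that for any uniform $\Sigma$-interpolant $\{\top \sqsubseteq C_\Tmc\}$, the concept $\neg C_\Tmc$ \emph{is itself} a witness concept: it is unsatisfiable relative to $\Tmc_n \cup \Tmc'_n$ because $\Tmc_n \cup \Tmc'_n \models \top \sqsubseteq C_\Tmc$, and it is satisfiable relative to $\Tmc_n$ because $\{\top \sqsubseteq C_\Tmc\}$ is equivalent to $\Tmc_{\Sigma,n}$, which has models of $\Tmc^-_n$ refuting it, and Lemma~\ref{thm:modelstuff}(iv) extends these to models of $\Tmc_n$. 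The ``witness-extraction step'' you flag as the main obstacle and worry might lose compression is therefore a one-line negation with no blow-up at all, but only because the right family and signature were chosen; the $\widehat\Tmc_n$ route does not admit any such extraction because it has nothing to extract from.
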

%

%%% Local Variables: 
%%% mode: latex
%%% TeX-master: "subuni"
%%% End: 

\section{Conclusions}

We view the characterizations, tools, and results obtained in this
paper as a general foundation for working with uniform interpolants in
expressive DLs. In fact, we believe that the established framework can
be extended to other expressive DLs such as \ALC extended with number
restrictions and/or inverse roles without too many hassles: the main
modifications required should be a suitable modification of the notion
of bisimulation and (at least in the case of number restrictions) a
corresponding extension of the automata model. Other extensions,
such as with nominals, require more efforts.

\enlargethispage*{8mm}

In concrete applications, what to do when the desired uniform
$\Sigma$-interpolant does not exist? In applications such as
\emph{ontology re-use} and \emph{ontology summary}, one option is to
extend the signature $\Sigma$, preferably in a minimal way, and then
to use the interpolant for the extended signature. We believe that
Theorem~\ref{thm1} can be helpful to investigate this further, loosely
in the spirit of Example~\ref{ex:forget}. In applications such as
\emph{predicate hiding}, an extension of $\Sigma$ might not be
acceptable. It is then possible to resort to a more expressive DL in
which uniform interpolants always exist. In fact,
Theorem~\ref{thm:bisimauto} and the fact that APTAs have the same
expressive power as the $\mu$-calculus \cite{Wilke-Automata} point the
way towards the extension of \ALC with fixpoint operators.

\medskip
\noindent
{\bf Acknowledgments.}\ \ This work was supported by the DFG SFB/TR 8 ``Spatial Cognition''.

%\bibliographystyle{plain}
%\bibliography{plus}

{\small 

}

\cleardoublepage

\appendix

\section{Proofs for Section~\ref{sect:prelim}}

{\bf Proof sketch for Example~\ref{ex1} (ii)}
Recall that $\Tmc_{2}=\{A \equiv B \sqcap \exists r.B\}$ and $\Sigma_{2}=\{A,r\}$.
We show that $\Tmc'_{2}=\{A \sqsubseteq \exists r.(A \sqcup \neg \exists r.A)\}$
is a uniform $\Sigma_{2}$-interpolant of~$\Tmc_2$.
To this end, we prove the criterion of Theorem~\ref{bisimuniform}.

Let $\Imc$ be a model of $A \sqsubseteq \exists r.(A \sqcup \neg \exists r.A)$.
Then every $(\Imc,d)$, $d\in \Delta^{\Imc}$, is bisimilar to a tree-interpretation
$(\Jmc,\rho^{\Jmc})$ that is a model of $A \sqsubseteq \exists r.(A \sqcup \neg \exists r.A)$.
We define a new interpretation $\Jmc'$ that coincides with $\Jmc$ except that $B$ is interpreted as follows:
for every $e\in \Delta^{\Jmc}$ with $e\in A^{\Jmc}$ let 
$e\in B^{\Jmc'}$ and, if there does not exist an $r$-successor of $e$ in $A^{\Jmc}$, then take an $r$-successor
$e'$ of $e$ with $e'\not\in (\exists r.A)^{\Jmc}$ and let $e'\in B^{\Jmc'}$ as well. Such an $e'$ exists since
$\Jmc$ is a model of $A \sqsubseteq \exists r.(A \sqcup \neg \exists r.A)$. It is readily checked that
$\Jmc'$ is a model of $\Tmc_{2}$. Thus, for all $d\in \Delta^{\Imc}$, 
$(\Imc,d)\models \exists^{\sim}_{\overline{\Sigma}_{2}}.\Tmc_{2}$.

Conversely, let $(\Imc,d)\models \exists^{\sim}_{\overline{\Sigma}_{2}}.\Tmc_{2}$, for all $d\in \Delta^{\Imc}$.
Let $d\in A^{\Imc}$ and assume $d\not\in \exists r.(A \sqcup \neg \exists r.A)$.
Then, no $r$-successor of $d$ is in $A^{\Imc}$ and all $r$-successors of $d$ have an $r$-successor in $A^{\Imc}$.
Let $(\Imc,d)\sim_{\Sigma_{2}} (\Jmc,d')$ with $\Jmc$ a model of $\Tmc_{2}$.
Then $d'\in A^{\Jmc}$, no $r$-successor of $d'$ is in $A^{\Jmc}$, and all $r$-successors of $d'$
have an $r$-successor in $A^{\Jmc}$.
We have $d'\in (B\sqcap \exists r.B)^{\Jmc}$ and so $d'\in B^{\Jmc}$ and there exists an $r$-successor 
$d''$ of $d'$ such that $d''\in B^{\Jmc}$.
Since $d''\not\in A^{\Jmc}$, there does not exist an $r$-successor of $d''$ that is in $B^{\Jmc}$.
But then no $r$-successor of $d''$ is in $A^{\Jmc}$ and we have derived a contradiction.

\medskip

To prove Theorem~\ref{bisimuniform} in its full generality, we will rely on the
automata-theoretic machinery introduced in
Section~\ref{sect:automatastuff}. For now, we only establish a
modified version where ``for all interpretation \Imc'' is replaced
with ``for all interpretations \Imc of finite outdegree''.

\noindent
{\bf Theorem~\ref{bisimuniform} (Modified Version)}
  Let $\Tmc$ be a TBox and $\Sigma\subseteq {\sf sig}(\Tmc)$. A
  $\Sigma$-TBox $\Tmc_\Sigma$ is a uniform $\Sigma$-interpolant of $\Tmc$
  iff for all interpretations~$\Imc$ of finite outdegree,
  \begin{equation*}
\Imc \models \Tmc_\Sigma \quad \Leftrightarrow \quad \text{for all }d \in \Delta^{\Imc},\
(\Imc,d) \models \exists^\sim_{\overline{\Sigma}} . \Tmc.
\tag{$*$}
  \end{equation*}
\begin{proof}
``if''. Assume that ($*$) is satisfied for all interpretations \Imc
   of finite outdegree. We have to show that for all
   $\Sigma$-inclusions $C \sqsubseteq D$,
   $$
     \Tmc_\Sigma \models C \sqsubseteq D \text{ iff } \Tmc \models C \sqsubseteq D.
   $$
   Let $\mn{sig}(C \sqsubseteq D) \subseteq \Sigma$ and assume
   $\Tmc_\Sigma \not\models C \sqsubseteq D$. Then $C \sqcap \neg D$
   is satisfiable w.r.t.\ $\Tmc_\Sigma$, i.e., there is a pointed
   model $(\Imc,d)$ of $\Tmc_\Sigma$ with finite outdegree and $d \in
   (C \sqcap \neg D)^\Imc$. By ($*$), there is thus a pointed model
   $(\Jmc,e)$ of $\Tmc$ with $(\Imc,d) \sim_\Sigma (\Jmc,e)$.
   Together with $d \in (C \sqcap \neg D)^\Imc$, the latter implies $e
   \in (C \sqcap \neg D)^\Jmc$. Thus, $C \sqcap \neg D$ is satisfiable
   w.r.t.\ $\Tmc$, implying $\Tmc \not\models C \sqsubseteq
   D$. Conversely, let $\Tmc \not\models C \sqsubseteq D$. Then there
   is a pointed model ($\Imc,d)$ of \Tmc with finite outdegree and $d
   \in (C \sqcap \neg D)^\Imc$. Trivially, \Imc satisfies the right-hand
   side of ($*$), whence $\Imc \models \Tmc_\Sigma$ and we are done.

  \smallskip
  \noindent
  
  For the ``only if '' direction, we first need a preliminary. An
  interpretation \Imc is \emph{modally saturated} iff it satisfies the
  following condition, for all $r \in \NR$: if $d \in \Delta^\Imc$ and
  $\Gamma$ is a (potentially infinite) set of concepts such that, for
  all finite $\Psi \subseteq \Gamma$, there is a $d'$ with $(d,d') \in
  r^\Imc$ and $d' \in \Psi^\Imc$, then there is an $e$ with $(d,e) \in
  r^\Imc$ and $e \in \Gamma^\Imc$. The most important facts about
  modally saturated interpretations we need here are (i)~every
  (finite) or infinite set of concepts that is satisfiable w.r.t.\ a
  TBox \Tmc is satisfiable in a modally saturated model of \Tmc; (ii)~every inte
rpretation with finite outdegree is modally saturated; and
  (iii)~Point~2 of Theorem~\ref{theorem:bisimchar} can be generalized from
  interpretations of finite outdegree to modally saturated
  interpretations \cite{GorankoOtto}.

  \smallskip

  ``only if''. Assume that $\Tmc_\Sigma$ is a uniform $\Sigma$-interpolant 
  of \Tmc. First assume that \Imc is an interpretation
  that satisfies the right-hand side of ($*$). By Point~1 of
  Theorem~\ref{theorem:bisimchar}, $\Imc \models \Tmc$ and since $\Tmc
  \models \Tmc_\Sigma$, also $\Imc \models \Tmc_\Sigma$. Now assume
  that \Imc is a model of $\Tmc_\Sigma$ of finite outdegree and let $d
  \in \Delta^\Imc$. Define $\Gamma$ to be the set of all
  $\Sigma$-concepts $C$ with $d \in C^\Imc$. Clearly, every finite
  subset $\Gamma' \subseteq \Gamma$ is satisfiable w.r.t.\
  $\Tmc_\Sigma$. Since $\Tmc_\Sigma$ is a uniform $\Sigma$-interpolant
  of $\Tmc$, every such $\Gamma'$ is also satisfiable w.r.t.\ $\Tmc$
  (since $\Tmc_\Sigma \not\models \top \sqsubseteq \neg \bigsqcap
  \Gamma'$ implies $\Tmc \not\models \top \sqsubseteq \neg \bigsqcap
  \Gamma'$). By compactness of \ALC, $\Gamma$ is satisfiable w.r.t.\
  $\Tmc$. By~(i), there thus exists a modally saturated pointed model
  $(\Jmc,e)$ of $\Tmc$ such that $e \in \Gamma^\Jmc$. Since $d \in
  \Gamma^\Imc$ and $e \in \Gamma^\Jmc$ (and since, by definition,
  $\Gamma$ contains each $\Sigma$-concept or its negation), we have $d
  \in C^\Imc$ iff $e \in C^\Jmc$ for all \ALC-concepts $C$
  over~$\Sigma$. By (ii) and (iii), this yields $\Imc \sim_\Sigma
  \Jmc$ and we are done.  \qed
\end{proof}

\section{Proofs for Section~\ref{sect:charact}}

For a tree interpretation $\Imc$ and $d\in
\Delta^{\Imc}$ we denote by $\Imc(d)$ the tree interpretation induced
by the subtree generated by $d$ in $\Imc$.

Besides of $\ALC_{\Sigma}$-equivalence, we now also require a characterization of
$\ALC_{\Sigma}$-equivalence for concepts of roles depth bounded by some $m$. 

%ecall that the
%emph{role-depth} $r(C)$ of a concept $C$ is the maximal number of 
%estings of $\exists r$ in $C$. More precisely, 
%begin{eqnarray*}
%(A) & = & 0\\
%(C\sqcap D) & = & \max\{r(C),r(D)\}\\
%(\neg C) & = & r(C)\\
%(\exists r.C) & = & r(C)+1
%end{eqnarray*}
Two pointed interpretations are
\emph{$\ALC_{\Sigma}$-m-equivalent}, in symbols 
$(\Imc_{1},d_{1})\equiv_{\Sigma}^{m}(\Imc_{2},d_{2})$,
if, and only if, for all $\Sigma$-concepts $C$ with ${\sf rd}(C)\leq m$, 
$d_{1}\in C^{\Imc_{1}}$ iff $d_{2}\in C^{\Imc_{2}}$.

The corresponding model-theoretic notion is that of $m$-bisimilarity, which is defined
inductively as follows: $(\Imc_{1},d_{1})$ and $(\Imc_{2},d_{2})$ are
\begin{itemize}
\item \emph{$(\Sigma,0)$-bisimilar},
in symbols $(\Imc_{1},d_{1}) \sim_{\Sigma}^{0}(\Imc_{2},d_{2})$, if $d_{1}\in A^{\Imc_{1}}$
iff $d_{2} \in A^{\Imc_{2}}$ for all $A \in \Sigma\cap \NC$.
\item \emph{$(\Sigma,n+1)$-bisimilar},
in symbols $(\Imc_{1},d_{1}) \sim_{\Sigma}^{n+1}(\Imc_{2},d_{2})$, if 
$(\Imc_{1},d_{1}) \sim_{\Sigma}^{0}(\Imc_{2},d_{2})$ and
\begin{itemize}
\item for all $(d_{1},e_{1}) \in r^{\Imc_{1}}$ there exists $e_{2}\in \Delta^{\mathcal{I}_{2}}$ 
such that $(d_{2},e_{2}) \in 
r^{\Imc_{2}}$ and $(\Imc_{1},e_{1}) \sim_{\Sigma}^{n} (\Imc_{2},e_{2})$, for all $r \in \Sigma$;
\item for all $(d_{2},e_{2}) \in r^{\Imc_{2}}$ there exists $e_{1} \in \Delta^{{\mathcal{I}}_{1}}$
such that $(d_{1},e_{1}) 
\in r^{\Imc_{1}}$ and $(\Imc_{1},e_{1}) \sim_{\Sigma}^{n} (\Imc_{2},e_{2})$, for all 
$r \in \Sigma$.
\end{itemize}
\end{itemize}
The following characterization is straightforward to prove and can be 
found in \cite{GorankoOtto}.
\begin{lemma}\label{lem1}
For all pointed interpretations $(\mathcal{I}_{1},d_{1})$ and $(\mathcal{I}_{2},d_{2})$,
all finite signatures $\Sigma$, and all $m\geq 0$:
$(\Imc_{1},d_{1})\equiv_{\Sigma}^{m}(\Imc_{2},d_{2})$ if, and only if,
$(\Imc_{1},d_{1})\sim_{\Sigma}^{m}(\Imc_{2},d_{2})$.
\end{lemma}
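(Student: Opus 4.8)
The plan is to prove Lemma~\ref{lem1} by induction on $m$, establishing the two implications together, since the $\Rightarrow$ direction at level $m+1$ will need the $\Leftarrow$ direction at level $m$. As a preliminary, observe (again by an easy induction on $m$) that for a finite signature $\Sigma$ there are only \emph{finitely many} $\sim_\Sigma^m$-classes of pointed interpretations: a $\sim_\Sigma^0$-class is fixed by the subset of $\Sigma\cap\NC$ that holds at the point, and a $\sim_\Sigma^{m+1}$-class is fixed by the $\sim_\Sigma^0$-class of the point together with, for each $r\in\Sigma\cap\NR$, the set of $\sim_\Sigma^m$-classes realized among its $r$-successors, which is a finite set by the induction hypothesis. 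This finiteness is the one non-routine ingredient, and the place where it bites is keeping the distinguishing concept in the $\Rightarrow$ direction finite even though a point may have infinitely (indeed uncountably) many successors.

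For the base case $m=0$, a $\Sigma$-concept of role-depth $0$ is a Boolean combination of $\top$ and concept names from $\Sigma\cap\NC$, and whether such a concept holds at a point depends only on which names of $\Sigma\cap\NC$ hold there; hence $\sim_\Sigma^0$ (agreement on all such atoms) coincides with $\equiv_\Sigma^0$. For the inductive step, direction $\Leftarrow$: assuming the lemma at level $m$ and $(\Imc_1,d_1)\sim_\Sigma^{m+1}(\Imc_2,d_2)$, I would show $d_1\in C^{\Imc_1}$ iff $d_2\in C^{\Imc_2}$ for every $\Sigma$-concept $C$ with $\mn{rd}(C)\le m+1$ by a routine structural induction on $C$; the Boolean cases (including negation, which is why we may work with the syntax $\top,A,\neg C,C\sqcap D,\exists r.C$) are immediate, and for $C=\exists r.D$ with $r\in\Sigma$ and $\mn{rd}(D)\le m$ the forth, respectively back, clause of $\sim_\Sigma^{m+1}$ produces a matching $r$-successor that is $\sim_\Sigma^m$-related to the witness, which by the induction hypothesis on $m$ agrees on $D$.

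For the step in direction $\Rightarrow$, I would argue contrapositively. If $(\Imc_1,d_1)\not\sim_\Sigma^{m+1}(\Imc_2,d_2)$, then either they disagree on some $A\in\Sigma\cap\NC$, and then $A$ already separates them at role-depth $0\le m+1$; or, by symmetry assuming the failure is on the forth side, there are $r\in\Sigma\cap\NR$ and $(d_1,e_1)\in r^{\Imc_1}$ such that no $r$-successor $e_2$ of $d_2$ satisfies $(\Imc_1,e_1)\sim_\Sigma^m(\Imc_2,e_2)$. By the preliminary fact the $r$-successors of $d_2$ fall into finitely many $\sim_\Sigma^m$-classes; choosing one representative $e_2$ per class, the induction hypothesis ($\Rightarrow$ at level $m$) yields for each a $\Sigma$-concept $D_{e_2}$ with $\mn{rd}(D_{e_2})\le m$, $e_1\in D_{e_2}^{\Imc_1}$ and $e_2\notin D_{e_2}^{\Imc_2}$ (negating $D_{e_2}$ if necessary to orient the disagreement). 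Put $D=\bigsqcap_{e_2}D_{e_2}$, a \emph{finite} conjunction. Then $e_1\in D^{\Imc_1}$, while every $r$-successor $e'$ of $d_2$ is $\sim_\Sigma^m$-equivalent to some representative $e_2$ and hence, by the already established $\Leftarrow$ direction at level $m$ applied to $D_{e_2}$, satisfies $e'\notin D_{e_2}^{\Imc_2}\supseteq D^{\Imc_2}$. Thus $\exists r.D$ is a $\Sigma$-concept of role-depth at most $m+1$ holding at $d_1$ but not at $d_2$, so $(\Imc_1,d_1)\not\equiv_\Sigma^{m+1}(\Imc_2,d_2)$, completing the induction. (Alternatively one may package the positive half of this argument into characteristic concepts $\mathit{con}^m_{\Imc,d}$ of role-depth $\le m$ with $e\in(\mathit{con}^m_{\Imc,d})^{\Jmc}$ iff $(\Jmc,e)\sim_\Sigma^m(\Imc,d)$, using the finiteness fact to keep them finite; this is essentially the presentation in \cite{GorankoOtto}.)
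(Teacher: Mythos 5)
Your proof is correct; the paper gives no proof of Lemma~\ref{lem1}, merely remarking that it is standard and citing \cite{GorankoOtto}, and your argument (a mutual induction on $m$, with the finiteness of $\sim^m_\Sigma$-classes over a finite $\Sigma$ used to keep the distinguishing conjunction finite in the $\Rightarrow$ direction) is exactly the textbook argument that reference presents. In particular you correctly identified the one non-trivial ingredient and also correctly noted that proving $\Rightarrow$ at level $m+1$ needs $\Leftarrow$ at level $m$, which is why the two directions must be carried along together.
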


%signature $\Sigma$, we use ${\cal C}(\Gamma)$ to denote the set of
%${\cal ALC}$-concepts using only concept and role names from
%$\Gamma$.
%
% Either drop this footnote or explain why this is interesting here
% Can't remember myself.
%
%\footnote{Note that, for every $\Gamma$, we have $\{
%  \top,\bot\} \subseteq \Cmc(\Gamma)$.}
%

%
%  A \emph{$\Sigma$-interpretation} \Imc is a pair
%  $(\Delta^\Imc,\cdot^\Imc)$ where $\Delta^\Imc$ is a non-empty set
%  called the \emph{domain}, $\cdot^\Imc$ is a function mapping concept
%  names $A \in \Sigma$ to subsets $A^\Imc$ of $\Delta^\Imc$ and role
%  names $r \in \Sigma$ to binary relations $r^\Imc$ over
%  $\Delta^\Imc$. When $\Sigma=\NC \cup \NR$, we simply speak of
%  an \emph{interpretation}.

%For a subset $D$ of $\Delta^{\mathcal{I}}$ we denote
%by $\mathcal{I}_{\restriction D}$ the restriction of $\mathcal{I}$ to $D$. 

\begin{lemma}\label{lem2}
Let $(\Imc_{1},d_{1})$ and $(\Imc_{2},d_{2})$ be pointed $\Sigma$-interpretations
such that $(\mathcal{I}_{1},d_{1}) \sim_{\Sigma}^{m} (\mathcal{I}_{2},d_{2})$.
Then there exist $\Sigma$-tree interpretations $\mathcal{J}_{1}$
and $\mathcal{J}_{2}$ such that 
\begin{itemize}
\item $(\mathcal{J}_{1},\rho^{\Jmc_{1}})\sim_{\Sigma} (\mathcal{I}_{1},d_{1})$;
\item $(\mathcal{J}_{2},\rho^{\Jmc_{2}})\sim_{\Sigma} (\mathcal{I}_{2},d_{2})$; 
\item $\mathcal{J}_{1}^{\leq m} =  \mathcal{J}_{2}^{\leq m}$.
\end{itemize}
Moreover, if $\mathcal{I}_{1}$ and $\mathcal{I}_{2}$ have finite outdegree, then 
one can find such $\mathcal{J}_{1}$ and $\mathcal{J}_{2}$ that have 
finite outdegree.
\end{lemma}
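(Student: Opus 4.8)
The plan is to glue one common top part onto tree-unravellings of $\Imc_1$ and $\Imc_2$. Concretely, I would first build a single $\Sigma$-tree $T$ of depth $m$ that simultaneously witnesses the $(\Sigma,m)$-bisimilarity of the two pointed interpretations, and then obtain $\Jmc_i$ by replacing each depth-$m$ leaf of $T$ by the full tree-unravelling of the corresponding point of $\Imc_i$. Since $\Jmc_1$ and $\Jmc_2$ then share the same first $m$ levels, their $m$-segments coincide automatically, while below level $m$ each $\Jmc_i$ is an unravelling of $\Imc_i$ glued on compatibly, which is what makes it $\Sigma$-bisimilar to $(\Imc_i,d_i)$.

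I would let the nodes of $T$ at level $k\in\{0,\dots,m\}$ be (codes of) pairs $(e_1,e_2)\in\Delta^{\Imc_1}\times\Delta^{\Imc_2}$ with $(\Imc_1,e_1)\sim^{m-k}_\Sigma(\Imc_2,e_2)$, the root being the pair $(d_1,d_2)$ supplied by the hypothesis. For $k<m$, a level-$k$ node $(e_1,e_2)$, and each $r\in\Sigma\cap\NR$ I add fresh $r$-children: (a) for every $r$-successor $f_1$ of $e_1$ in $\Imc_1$, use the forth clause of $\sim^{m-k}_\Sigma$ to pick an $r$-successor $f_2$ of $e_2$ with $(\Imc_1,f_1)\sim^{m-k-1}_\Sigma(\Imc_2,f_2)$ and add the child $(f_1,f_2)$; (b) symmetrically, for every $r$-successor $f_2$ of $e_2$, use the back clause to pick a matching $r$-successor $f_1$ of $e_1$ and add the child $(f_1,f_2)$. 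I label a node $(e_1,e_2)$ by the set $\{A\in\Sigma\cap\NC\mid e_1\in A^{\Imc_1}\}$, which equals $\{A\in\Sigma\cap\NC\mid e_2\in A^{\Imc_2}\}$ because $\sim^{m-k}_\Sigma$ entails $\sim^0_\Sigma$. Finally, I obtain $\Jmc_1$ from $T$ by identifying each depth-$m$ leaf $(e_1,e_2)$ with the root of the tree-unravelling of $(\Imc_1,e_1)$ and appending that unravelling, and $\Jmc_2$ analogously using the unravelling of $(\Imc_2,e_2)$.

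The three required properties then follow by routine checking. Both $\Jmc_i$ are $\Sigma$-tree interpretations since $T$ and the unravellings are trees whose edges carry a single $\Sigma$-role label and which interpret only $\Sigma$-symbols. The equality $\Jmc_1^{\leq m}=\Jmc_2^{\leq m}$ holds because both sides are just $T$ restricted to levels $\leq m$ (the edges leaving depth-$m$ nodes lead to depth $m+1$ and hence disappear in the $m$-segment). For $(\Jmc_1,\rho^{\Jmc_1})\sim_\Sigma(\Imc_1,d_1)$, take the relation pairing each $T$-node with the first component of its pair and each unravelling node with the last element of its sequence: the atom condition holds by the labelling; the back condition because every edge added into $T$ has first component an $r$-successor of the corresponding $\Imc_1$-point (in case (b) this is exactly how $f_1$ was chosen); and the forth condition because in case (a) we added a child for \emph{every} $\Imc_1$-successor, while below level $m$ it is the standard fact that unravelling preserves bisimilarity and the join is seamless since the unravelling is rooted at $e_1$ itself. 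The argument for $\Jmc_2$ is symmetric. If moreover $\Imc_1,\Imc_2$ have finite outdegree, each node of $T$ acquires finitely many children and unravellings of finite-outdegree interpretations have finite outdegree, so $\Jmc_1,\Jmc_2$ do too.

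The only real subtlety — and the point I would be most careful about — is that $T$ must serve two masters at once: at each node one has to create a child not just for every $\Imc_1$-successor (needed for the forth clause of the bisimulation towards $\Imc_1$) but also for every $\Imc_2$-successor (needed for the forth clause towards $\Imc_2$), while still ensuring that each child codes an $r$-successor \emph{pair}, so that the back clauses of both bisimulations hold. The two symmetric cases (a) and (b) are exactly what achieves this, and the freedom to duplicate children (which trees tolerate) is what makes it harmless that one $\Imc_2$-successor may be matched by several children, or vice versa.
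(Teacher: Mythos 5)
Your construction is correct and is essentially the same as the paper's: both build the common $m$-segment out of pairs of $(\Sigma,m{-}k)$-bisimilar points at level $k$ rooted at $(d_1,d_2)$, and then graft the tree-unravelling of $\Imc_i$ onto the $i$-th coordinate of each level-$m$ leaf, with the projection-to-coordinate map giving the required bisimulations. The only inessential difference is that the paper takes \emph{all} compatible successor pairs at each level of the common part (so no choices are made), whereas you select one matching partner per successor via the back/forth clauses; both yield the stated properties and preserve finite outdegree.
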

\begin{proof}
Assume $(\Imc_{1},d_{1})$ and $(\Imc_{2},d_{2})$ are given.
For $j=1,2$, we define $\Jmc_{j}$ as follows.
The domain $\Delta^{\Jmc_{j}}$ consists of all words
$$
v_{0}r_{0}v_{1}r_{1}\cdots v_{n}
$$
such that
\begin{itemize}
\item $v_{0}= (d_{1},d_{2})$;
\item for all $i\leq m$ there are $e_{i},f_{i}$ such that $v_{i}= (e_{i},f_{i})$ and 
$(\Imc_{1},e_{i}) \sim_{\Sigma}^{m-i} (\Imc_{2},f_{i})$;
\item for all $i<m$: if $v_{i}= (e_{i},f_{i})$ and $v_{i+1}= (e_{i+1},f_{i+1})$,
then $(e_{i},e_{i+1})\in r_{i}^{\Imc_{1}}$ and $(f_{i},f_{i+1}) \in r_{i}^{\Imc_{2}}$;
\item for all $i > m$: $v_{i}\in \Delta^{\Jmc_{j}}$; 
\item for all $v_{m}=(e_{m},f_{m})$: if $j=1$ then $(e_{m},v_{m+1})\in r_{m}^{\Imc_{1}}$; and 
if $j=2$ then $(f_{m},v_{m+1})\in r_{m}^{\Imc_{2}}$;
\item for all $i>m$: $(v_{i},v_{i+1})\in r_{i}^{\Jmc_{j}}$.
\end{itemize}
For all concept names $A$, we set
\begin{eqnarray*}
A^{\Jmc_{j}} & = & \{ v_{0}\cdots v_{n}\in \Delta^{\Jmc_{1}}\mid n> m, v_{n}\in A^{\Imc_{j}}\}
\cup \\
& & \{(e_{0},f_{0})\cdots (e_{n},f_{n})\in \Delta^{\Jmc_{1}}\mid n\leq m,e_{n}\in A^{\Imc_{1}}\},
\end{eqnarray*}
and for all role names $r$, we define
$$
r^{\Jmc_{j}}= \{ (w,wrv) \mid w,wrv \in \Delta^{\Jmc_{j}},v\in (\Delta^{\Imc_{j}} \cup \Delta^{\Imc_{1}}
\times \Delta^{\Imc_{2}})\}
$$
It is straightforward to prove that 
\begin{eqnarray*}
S_{1} & = & \{(e,w(e,f) \mid e\in \Delta^{\Imc_{1}}, w(e,f)\in \Delta^{\Jmc_{1}}\} \cup \\
& & \{(e,we) \mid e \in \Delta^{\Imc_{1}},we \in \Delta^{\Jmc_{1}}\}
\end{eqnarray*}
is a $\Sigma$-bisimulation between $(\Imc_{1},\rho^{\Imc_{1}})$ and $(\Jmc_{1},\rho^{\Jmc_{1}})$.
A $\Sigma$-bisimulation $S_{2}$ between $(\Imc_{2},\rho^{\Imc_{2}})$ and $(\Jmc_{2},\rho^{\Jmc_{2}})$
can be constructed in the same way. Clearly, $\Jmc_{1}^{\leq m} = \Jmc_{2}^{\leq m}$. 
\end{proof}

\medskip
\noindent
{\bf Theorem~\ref{thm1}.}
Let $\Tmc$ be a TBox, $\Sigma\subseteq {\sf sig}(\Tmc)$ a signature,
and $m\geq 0$. Then $\Tmc_{\Sigma,m}$ is not a uniform $\Sigma$-interpolant of $\Tmc$
iff

\medskip

\noindent
($\ast_{m}$) there exist two $\Sigma$-tree interpretations, 
$\mathcal{I}_{1}$ and $\mathcal{I}_{2}$, of finite outdegree such that
\begin{enumerate}
\item $\mathcal{I}_{1}^{\leq m} \;=\;  \mathcal{I}_{2}^{\leq m}$;
\item $(\mathcal{I}_{1},\rho^{\mathcal{I}_{1}})\models  \exists^\sim_{\overline{\Sigma}}. \Tmc$;
\item $(\mathcal{I}_{2},\rho^{\mathcal{I}_{2}}) \not\models \exists^\sim_{\overline{\Sigma}}. \Tmc$;
\item For all sons $d$ of $\rho^{\mathcal{I}_{2}}$:
$(\mathcal{I}_{2},d) \models  \exists^\sim_{\Sigma}. \Tmc$.
\end{enumerate}

\begin{proof}
Assume first that $\Tmc_{\Sigma,m}$ is not a uniform $\Sigma$-interpolant of $\Tmc$.
We show ($\ast_{m}$). There exists
$m'\geq m$ such that $\Tmc_{\Sigma,m'}\not\models \Tmc_{\Sigma,m'+1}$.
There exists a $\Sigma$-tree interpretation $\mathcal{I}$ of finite outdegree 
such that $\mathcal{I}\models \Tmc_{\Sigma,m'}$ and $\rho^{\mathcal{I}}\not\in C^{\mathcal{I}}$ for some 
$C$ with $\top \sqsubseteq C \in \Tmc_{\Sigma,m'+1}$.

As $\mathcal{I}\models \Tmc_{\Sigma,m'}$, the concept
$$
D = \bigsqcap_{E\in \mathcal{C}_{f}^{m'}(\Sigma),\rho^{\mathcal{I}}\in E^\Imc}E
$$
is satisfiable w.r.t.~$\Tmc$. There exists a $\Sigma$-tree interpretation $\mathcal{J}$ of finite outdegree 
that is a model of $\Tmc$ such that $(\Imc,\rho^{\Imc}) \equiv^{m'}_{\Sigma} (\Jmc,\rho^{\Jmc})$.
By Lemma~\ref{lem1}, we have 
$(\mathcal{I},\rho^{\mathcal{I}}) \sim_{\Sigma}^{m'} (\mathcal{J},\rho^{\Jmc})$.
%Let $\Gamma$ be the set of concepts of depth not exceeding $m$ satisfied in $r$. We may assume that $\Gamma$
%is finite. We know that $\Gamma$ is satisfied in a model satisfying $T_{m}$. By definition of $T_{m}$, 
%$\neg \bigwedge\Gamma = \top \not\in T_{m}$ implies that $\Gamma$ is consistent with $T^{\Sigma}$.
%It follows that $\Gamma$ is satisfiable relative to $T^{\Sigma}$. 
%Hence $\Gamma$ is satisfied in a model $(\Imc',r)$ of $T^{\Sigma}$.
%By the fact above, $(\Imc,r)$ is $m$-bisimilar to $(\Imc',r)$.
By Lemma~\ref{lem2}, and closure under composition of (m)-bisimulations,
we can assume that $\Jmc$ is a $\Sigma$-tree interpretation of finite outdegree with  
\begin{itemize}
\item $\mathcal{I}^{\leq m'} \;=\;  \mathcal{J}^{\leq m'}$.
\item $(\mathcal{J},\rho^{\Jmc}) \models  \exists^\sim_{\Sigma}. \Tmc$.
\item $\mathcal{I}\models \Tmc_{\Sigma,m'}$.
\item $\rho^{\Imc}\not\in C^{\mathcal{I}}$ for some $(\top \sqsubseteq C)\in \Tmc_{\Sigma,m'+1}$.
\end{itemize}
For every son $d$ of $\rho^{\Imc}$, as $\mathcal{I}(d)$ is a model of $\Tmc_{\Sigma,m'}$ we can argue 
as above and find a $\Sigma$-tree interpretation $\mathcal{K}_{d}$ of finite outdegree such that
$$
(\mathcal{I}(d),d) \sim_{\Sigma}^{m'} (\mathcal{K}_{d},\rho^{\mathcal{K}_{d}}) 
\models  \exists^\sim_{\Sigma}. \Tmc.
$$
We have
$$
(\mathcal{J}(d),d) \sim_{m'-1}^{\Sigma} (\mathcal{K}_{d},\rho^{\mathcal{K}_{d}}).
$$
Thus, by Lemma~\ref{lem2}, we find $\Sigma$-tree interpretations
$\Jmc_{d}$ and $\mathcal{M}_{d}$ of finite outdegree such that
\begin{itemize}
\item $(\mathcal{J}_{d},\rho^{\Jmc_{d}}) \sim_{\Sigma} (\mathcal{J}(d),d)$;
\item $(\mathcal{K}_{d},\rho^{\Kmc_{d}}) \sim_{\Sigma} (\mathcal{M}_{d},\rho^{\mathcal{M}_{d}})$;
\item $\Jmc_{d}^{\leq m'-1} \;= \;  \mathcal{M}_{d}^{\leq m'-1}$.
\end{itemize}
Now define $\mathcal{I}_{1}$ by replacing, for every son $d$ of $\rho^{\Jmc}$,
$\mathcal{J}(d)$ by $\mathcal{J}_{d}$ in $\Jmc$. Define $\mathcal{I}_{2}$ by replacing, for every
son $d$ of $\rho^{\Imc}$, $\mathcal{I}(d)$ by $\mathcal{M}_{d}$ in $\Imc$. 
It is readily checked that $\mathcal{I}_{1}$ and $\mathcal{I}_{2}$ are
as required:
\begin{itemize}
\item $\mathcal{I}_{1}^{\leq m} \;=\;  \mathcal{I}_{2}^{\leq m}$: since
$m' \geq m$ it is sufficient to show
$\mathcal{I}_{1}^{\leq m'} \;=\;  \mathcal{I}_{2}^{\leq m'}$.
But since $\mathcal{I}^{\leq 1} \;=\;  \mathcal{J}^{\leq 1}$ this follows 
from $\Jmc_{d}^{\leq m'-1} \;= \;  \mathcal{M}_{d}^{\leq m'-1}$ for every son
$d$ of $\rho^{\Imc}$.
\item $(\mathcal{I}_{1},\rho^{\mathcal{I}_{1}}) \models  \exists^\sim_{\Sigma}. \Tmc$
follows from $(\Jmc,\rho^{\Jmc})\models  \exists^\sim_{\overline{\Sigma}}. \Tmc$ and
$(\Imc_{1},\rho^{\Imc_{1}})\sim_{\Sigma} (\Jmc,\rho^{\Jmc})$. 
\item $(\mathcal{I}_{2},\rho^{\mathcal{I}_{2}}) \not\models  \exists^\sim_{\overline{\Sigma}}. \Tmc$
follows if $\rho^{\Imc_{2}}\not\in C^{\Imc_{2}}$ for some $\Sigma$-concept $C$ such that 
$\Tmc \models \top \sqsubseteq C$. By construction,
there exists $C\in \mathcal{C}_{f}^{m'+1}(\Sigma)$ such that
$\rho^{\Imc}\not\in C^{\mathcal{I}}$ and $(\top \sqsubseteq C)\in \Tmc_{\Sigma,m'+1}$.
Thus, by Lemma~\ref{lem1}, it is sufficient to show
$(\Imc_{2},\rho^{\Imc_{2}}) \sim_{\Sigma}^{m'+1} (\Imc,\rho^{\Imc})$.
This follows if $(\Imc(d),d) \sim_{\Sigma}^{m'} (\mathcal{M}_{d},\rho^{\mathcal{M}_{d}})$,
for every son $d$ of $\rho^{\Imc}$.
But this follows from
$$
(\mathcal{I}(d),d) \sim_{\Sigma}^{m'} (\mathcal{K}_{d},\rho^{\mathcal{K}_{d}})
\sim_{\Sigma} (\mathcal{M}_{d},\rho^{\mathcal{M}_{d}}).
$$
\item For all sons $d$ of $\rho^{\mathcal{I}_{2}}$:
$(\mathcal{I}_{2},d) \models  \exists^\sim_{\Sigma}. \Tmc$. This follows from
$$
(\Imc_{2}(d),d) \;= \;(\mathcal{M}_{d},\rho^{\mathcal{M}_{d}}) \sim_{\Sigma}
(\mathcal{K}_{d},\rho^{\Kmc_{d}})\models  \exists^\sim_{\Sigma}. \Tmc.
$$
\end{itemize}
Now assume that $\mathcal{I}_{1}$ and $\mathcal{I}_{2}$ satisfy ($\ast_{m}$).
Then $\mathcal{I}_{2}$ is a model of $\Tmc_{\Sigma,m}$. For assume this is not the case.
Then $d\not\in C^{\Imc_{2}}$ for some
$C$ with $\top \sqsubseteq C \in \Tmc_{\Sigma,m}$. If $d=\rho^{\Imc_{2}}$, then $d\not\in C^{\Imc_{1}}$
because of Point~1. This contradicts $\Imc_{1} \models  \exists^\sim_{\Sigma}. \Tmc$.
If $d\not=\rho^{\Imc_{2}}$, then $d\in \mathcal{I}(d')$ for some son $d'$ of $\rho^{\Imc_{2}}$.
Hence $d\not\in C^{\Imc(d')}$, which contradicts Point~4.  

Now assume that $\Tmc_{\Sigma,m}$ is a uniform $\Sigma$-interpolant of $\mathcal{T}$.
As $\mathcal{I}_{2}$ is a model of $\Tmc_{\Sigma,m}$ and has finite outdegree,
we obtain from the modified version of Theorem~\ref{bisimuniform} proved above
that $(\mathcal{I}_{2},\rho^{\Imc_{2}}) \models  \exists^\sim_{\overline{\Sigma}}. \Tmc$, which contradicts
Point~3.
\end{proof}

We show Example~\ref{ex1} (iv): for $\Tmc_4$ consisting of
\begin{enumerate}
\item $A \sqsubseteq \exists r.B$;
\item $A_0 \sqsubseteq \exists r.(A_1 \sqcap B)$;
\item $E \equiv A_1 \sqcap B \sqcap \exists r.(A_2 \sqcap B)\}$;
\end{enumerate}
and $\Sigma_{4}=\{A,r,A_{0},A_{1},E\}$, there is no uniform $\Sigma_{4}$-interpolant of 
$\Tmc_4$. 

\begin{proof}
It is sufficient to show $(\ast_{m})$ for all $m>0$.
Let $\Imc_{1} = (\{0,\ldots,m+1,(a,2),\ldots,(a,m+1)\},\cdot^{\Imc_{1}})$, where
\begin{eqnarray*}
A^{\Imc_{1}} & = &\{1,\ldots,m\}\\
E^{\Imc_{1}} & =  &\emptyset\\
r^{\Imc_{1}} &=  &  \{(i,i+1) \mid 0\leq i \leq m\}\cup\\
           &   &  \{(i,(a,i+1))\mid 1\leq i \leq m\}\\
A_{1}^{\Imc_{1}} & = & \{1,\ldots,m+1\}\\
A_{2}^{\Imc_{1}} & = & \{(a,2),\ldots,(a,m+1)\}\\
A_{0}^{\Imc_{1}} & = & \{0\}
\end{eqnarray*}
Then $(\Imc_{1},0)\models \exists^{\sim}_{\overline{\Sigma}_{4}}.\Tmc_{4}$
because the expansion of $\Imc_{1}$ by 
$B^{\Imc_{1}}=\{1,\ldots,m+1\}$ is a model of $\Tmc_{4}$.

Define $\Imc_{2}$ as the restriction of $\Imc_{1}$ to $\Delta^{\Imc_{1}}\setminus\{m+1\}$.
By definition, $\Imc_{1}^{\leq m}\;=\;\Imc_{2}^{\leq m}$, 

\medskip

\noindent
Claim 1. $(\Imc_{2},0)\not\models \exists^{\sim}_{\overline{\Sigma}_{4}}.\Tmc_{4}$.

\medskip

\noindent Assume $(\Imc_{2},0)\models \exists^{\sim}_{\overline{\Sigma}_{4}}\Tmc_{4}$.
Take $(\Imc_{2},0)\sim_{\Sigma} (\Jmc,0')$ with $\Jmc$ a model of $\Tmc_{4}$. Let $S$ be the
$\Sigma$-bisimulation with $(0,0')\in S$. By inclusion (2.) there exists $1'$ with
$(1,1')\in S$ such that $1'\in (\neg E \sqcap A \sqcap A_{1}\sqcap B)^{\Jmc}$. By inclusion (1.) there
exists an $r$-successor $2'$ of $1'$ that is in $B^{\Jmc}$. We have $(2,2')\in S$ or $((a,2),2')\in S$.
But $((a,2),2')\not\in S$ because otherwise $2'\in (A_{2}\sqcap B)^{\Jmc}$ which, since $1'\in (A_{1}\sqcap B)^{\Jmc}$,
would imply, by inclusion (3.), that $1'\in E^{\Jmc}$, a contradiction. Thus, $(2,2')\in S$ and so
$2'\in (\neg E \sqcap A \sqcap A_{1}\sqcap B)^{\Jmc}$. One can now show in same way by induction
that there is a $m'$ with $(m,m')\in S$ such that $m'\in (\neg E \sqcap A \sqcap A_{1}\sqcap B)^{\Jmc}$.
All $r$-successors of $m'$ are in $A_{2}^{\Jmc}$ since all $r$-successors of $m$ are in $A_{2}^{\Jmc}$.
By inclusions (2.) and (3.) and since $m'\not\in E^{\Jmc}$ this leads to a contradiction.
  
\medskip

As $1$ is the only $r$-successor of $0$ in $\Imc_{2}$, it remains to show that
$(\Imc_{2},1)\models \exists^{\sim}_{\overline{\Sigma}_{4}}.\Tmc_{4}$. Let $\Imc_{2}'$
be the restriction of $\Imc_{2}$ to $\Delta^{\Imc_{2}}\setminus \{0\}$.
Then $(\Imc_{2},1)\models \exists^{\sim}_{\overline{\Sigma}_{4}}.\Tmc_{4}$ follows from
the observation that the expansion of $\Imc_{2}'$ by setting $B^{\Imc_{2}'}=\{(a,2),\ldots,(a,m+1)\}$
is a model of $\Tmc_{4}$.
\end{proof}
In a tree interpretation $\Imc$, we set ${\sf dist}(\rho^{\Imc},d)=k$ and say that the \emph{depth of $d$ in $\Imc$}
is $k$ iff $d$ can be reached from $\rho^{\Imc}$ in exactly $k$ steps.

\medskip

\noindent
{\bf Example~\ref{ex:forget}}
% The
%     \emph{level} of an occurrence of a concept name $A$ in a concept
%     is the number of existential restrictions inside which it is
%     nested. A concept name is \emph{stratified} if all occurrences of
%     $A$ in concepts from $\mn{conc}(\Tmc)=\{C,D \mid C \sqsubseteq D
%     \in \Tmc \}$ are on the same level. 
  Let $\Tmc$ be a TBox and $\Sigma$ a
  signature such that $\mn{sig}(\Tmc) \setminus \Sigma$ consists of
  stratified concept names only, i.e., we want to \emph{forget} a set of
  stratified concept names. Then the existence of a uniform
  $\Sigma$-interpolant of \Tmc is guaranteed; moreover,
  $\Tmc_{\Sigma,m}$ is such an interpolant, where $m=\max\{ {\sf
    rd}(C) \mid C \in \mn{conc}(\Tmc) \}$.  

\medskip
\begin{proof}
Assume $\Imc_{1},\Imc_{2}$ satisfy $(\ast_{m})$.
There exists a tree-interpretation $\Jmc_{1}$ that is a model of $\Tmc$ such that
$(\Imc_{1},\rho^{\Imc_{1}})\sim_{\Sigma} (\Jmc_{1},\rho^{\Jmc_{1}})$,
and for every $r$-successor $d_{r}$ of $\rho^{\Imc_{2}}$ there exists a tree-interpretation 
$\Jmc_{d_{r}}$ that is a model of $\Tmc$ with $(\Imc_{2},d_{r})\sim_{\Sigma} (\Jmc_{d_{r}},\rho_{d_{r}})$. We may assume that
$\Imc_{1}$ is the $\Sigma$-reduct of $\Jmc_{1}$ and that every $\Imc_{2}(d_{r})$ coincides 
with the $\Sigma$-reduct of $\Jmc_{d_{r}}$. Now expand $\Imc_{2}$ to an interpretation $\Jmc_{2}$ as 
follows: for every $B\in {\sf sig}(\Tmc)\setminus\Sigma$ of level $k\leq m$ set
\begin{eqnarray*}
B^{\Jmc_{2}} & = & \{ d\mid d\in B^{\Jmc_{1}} \wedge {\sf dist}(\rho^{\Imc_{2}},d)=k\}\cup\\
          &  &  \{ d\mid d\in B^{\Jmc_{d_{r}}} \wedge {\sf dist}(\rho^{\Imc_{2}},d)\not=k\}.
\end{eqnarray*}
We show that $\Jmc_{2}$ is a model of $\Tmc$; and have derived a contradiction as
$\Imc_{2}\not\models \exists^{\sim}_{\overline{\Sigma}}.\Tmc$.
Let $C\sqsubseteq D\in \Tmc$ and assume that $d\in C^{\Jmc_{2}}\setminus D^{\Jmc_{2}}$.
Let ${\sf dist}(\rho^{\Jmc_{2}},d)=l$. If $l=0$, then $d\in X^{\Jmc_{2}}$ iff $d\in X^{\Imc_{1}}$ for
all concepts $X$ in $\Tmc$, by the definition of the expansion. Thus, $d\in C^{\Imc_{1}}\setminus D^{\Imc_{1}}$
which contradicts that $\Imc_{1}$ is a model of $\Tmc$.
If $l>0$, then $d$ is in the domain of some $\Imc_{2}(d_{r})$. Then $d\in X^{\Jmc_{2}}$ iff $d\in X^{\Jmc_{d_{r}}}$ for
all concepts $X$ in $\Tmc$, by the definition of the expansion. Thus, $d\in C^{\Jmc_{d_{r}}}\setminus D^{\Jmc_{d_{r}}}$
which contradicts that $\Jmc_{d_{r}}$ is a model of $\Tmc$. 
\end{proof}

%
%Then, we obtain a contradiction 
%using compactness/Gaifman: If there exists a uniform interpolant, then
%there exists a first-order formula axiomatizing the models. Denote it by $\varphi$.
%Assume $\varphi$ is in Gaifman normal form and the maximal distance used in it is $k$.
%
%Let $n>k$ and take the mosels $\mathcal{I}_{1}$ and $\mathcal{I}_{2}$ for $n$.
%
%Let $\mathcal{I}^{a}$ be the disjoin union of $\mathcal{I}_{1}$ and $\mathcal{I}_{2}$
%and let $\mathcal{I}^{b}$ denote the resulting model when the root of $\mathcal{I}_{2}$
%is removed from $\mathcal{I}_{a}$. Then $\mathcal{I}^{a} \not\models T^{\Sigma}$ but
%$\mathcal{I}^{b} \models T^{\Sigma}$. However, the models are not distinguishable by a formula
%using at most distance $\leq n-1$. Hence, they are not distinguishable by $\varphi$.

Fix a TBox $\Tmc$ and $\Sigma\subseteq {\sf sig}(\Tmc)$. Set $(\Imc_{1},d_{1})\sim_{e}(\Imc_{2},d_{2})$ iff 
${\sf Ext}^{\Imc_{1}}(d_{1})={\sf Ext}^{\Imc_{2}}(d_{2})$.
Note that the number of $\sim_{e}$-equivalence classes is bounded 
by $M_{\Tmc}$.

\begin{lemma}\label{lem3}
Let $\Imc$ be a tree interpretation and $d\in \Delta^{\mathcal{I}}$.
Assume $(\mathcal{I}(d),d) \sim^{e} (\mathcal{J},\rho^{\Jmc})$ for a tree interpretation
$\mathcal{J}$.
Replace $\mathcal{I}(d)$ by $\mathcal{J}$ in $\mathcal{I}$ and denote the resulting 
tree interpretation by $\Kmc$. Then $\mathcal{I}\models  \exists^\sim_{\Sigma}. \Tmc$ if, and only if,
$\mathcal{K} \models  \exists^\sim_{\Sigma}. \Tmc$.
\end{lemma}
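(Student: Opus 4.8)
The plan is to reduce the statement to a recursion principle for extension sets along a tree; below I write $\exists^{\sim}_{\overline{\Sigma}}$ for the bisimulation quantifier (the statement's $\exists^{\sim}_{\Sigma}$) and use that $(\mathcal{N},a)\models\exists^{\sim}_{\overline{\Sigma}}.\Tmc$ holds iff ${\sf Ext}^{\mathcal{N}}(a)\neq\emptyset$. First, two routine observations. (i)~$\Sigma$-bisimilarity is an equivalence relation on pointed interpretations---identity relations, converses and compositions of $\Sigma$-bisimulations give reflexivity, symmetry and transitivity---so $(\mathcal{I}_{1},d_{1})\sim_{\Sigma}(\mathcal{I}_{2},d_{2})$ implies ${\sf Ext}^{\mathcal{I}_{1}}(d_{1})={\sf Ext}^{\mathcal{I}_{2}}(d_{2})$, since $t$ lies in either set exactly if some model $\mathcal{J}$ of $\Tmc$ has a $\Sigma$-bisimilar point of type $t$. (ii)~For a tree interpretation $\mathcal{N}$ and $a\in\Delta^{\mathcal{N}}$, the identity on $\Delta^{\mathcal{N}(a)}$ is a $\Sigma$-bisimulation between $(\mathcal{N}(a),a)$ and $(\mathcal{N},a)$ (the back-and-forth clauses only inspect successors, which agree in $\mathcal{N}(a)$ and $\mathcal{N}$), hence ${\sf Ext}^{\mathcal{N}(a)}(a)={\sf Ext}^{\mathcal{N}}(a)$. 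Combining (ii) with the hypothesis $(\mathcal{I}(d),d)\sim_{e}(\mathcal{J},\rho^{\mathcal{J}})$ and $\mathcal{K}(d)=\mathcal{J}$ gives ${\sf Ext}^{\mathcal{I}}(d)={\sf Ext}^{\mathcal{I}(d)}(d)={\sf Ext}^{\mathcal{J}}(\rho^{\mathcal{J}})={\sf Ext}^{\mathcal{K}}(d)$. So it suffices to show ${\sf Ext}^{\mathcal{I}}(\rho^{\mathcal{I}})={\sf Ext}^{\mathcal{K}}(\rho^{\mathcal{K}})$, and since ${\sf Ext}$ ignores non-$\Sigma$ predicates we may assume $\mathcal{I},\mathcal{J},\mathcal{K}$ are $\Sigma$-tree interpretations and, w.l.o.g., that $\Tmc=\{\top\sqsubseteq C_{\Tmc}\}$.

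The recursion principle I would establish is: for every $\Sigma$-tree interpretation $\mathcal{N}$ and $d\in\Delta^{\mathcal{N}}$, one has $t\in{\sf Ext}^{\mathcal{N}}(d)$ iff \emph{(a)}~$t\cap\NC\cap\Sigma=\{A\in\NC\cap\Sigma\mid d\in A^{\mathcal{N}}\}$; \emph{(b)}~for every $r\in\Sigma\cap\NR$ and every $r$-successor $d'$ of $d$ some $t'\in{\sf Ext}^{\mathcal{N}}(d')$ satisfies $t\leadsto_{r}t'$; \emph{(c)}~for every $\exists r.C\in t$ with $r\in\Sigma\cap\NR$ some $r$-successor $d'$ of $d$ and $t'\in{\sf Ext}^{\mathcal{N}}(d')$ satisfy $C\in t'$ and $t\leadsto_{r}t'$; and \emph{(d)}~for every $\exists r.C\in t$ with $r\notin\Sigma$ some $t'''\in\mn{TP}(\Tmc)$ satisfies $C\in t'''$ and $t\leadsto_{r}t'''$, with $\leadsto_{r}$ and $\mn{TP}(\Tmc)$ as in Section~\ref{sect:automatastuff}. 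In particular ${\sf Ext}^{\mathcal{N}}(d)$ depends only on the $\Sigma$-concept names holding at $d$ and on the set $\{(r,{\sf Ext}^{\mathcal{N}}(d'))\mid r\in\Sigma\cap\NR,\ (d,d')\in r^{\mathcal{N}}\}$. The ``only if'' direction is read off from a $\Sigma$-bisimulation $S$ with $(d,b)\in S$ for a model-witness $(\mathcal{J},b)$ of type $t$: clauses 1--3 of $S$ yield \emph{(a)}--\emph{(c)} and $\mathcal{J}\models\Tmc$ yields \emph{(d)}, using that $(d',b')\in S$ forces $\mn{tp}^{\mathcal{J}}(b')\in{\sf Ext}^{\mathcal{N}}(d')$ and that the type of a node $\leadsto_{r}$-relates to that of each of its $r$-successors.

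The ``if'' direction is the crux, and the key point is that the witness model need not copy the tree shape of $\mathcal{N}$: I would build $\mathcal{J}$ from a fresh root $b$ by attaching, by disjoint union, for each demand in \emph{(b)} or \emph{(c)} a copy of a $\Sigma$-bisimilarity witness for some $t'\in{\sf Ext}^{\mathcal{N}}(d')={\sf Ext}^{\mathcal{N}(d')}(d')$ as an $r$-successor of $b$, and for each demand in \emph{(d)} a copy of a model of $\Tmc$ realizing some such $t'''$ as an $r$-successor of $b$, and setting $b\in A^{\mathcal{J}}$ iff $A\in t$. Then $\mathcal{J}\models\Tmc$ (every non-root node lies in a model of $\Tmc$, and $C_{\Tmc}\in t$ because $t\in\mn{TP}(\Tmc)$), $\mn{tp}^{\mathcal{J}}(b)=t$ (each attached successor carries a type $\leadsto_{r}$-compatible with $t$, so no spurious existentials arise at $b$, and the required ones are present by \emph{(c)} and \emph{(d)}), and $\{(d,b)\}$ together with the chosen witnessing $\Sigma$-bisimulations is a $\Sigma$-bisimulation between $(\mathcal{N},d)$ and $(\mathcal{J},b)$; hence $t\in{\sf Ext}^{\mathcal{N}}(d)$. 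Granting the recursion principle, the lemma follows: set $\rho=\rho^{\mathcal{I}}=\rho^{\mathcal{K}}$ (the case $d=\rho$ being immediate from ${\sf Ext}^{\mathcal{I}}(d)={\sf Ext}^{\mathcal{K}}(d)$), let $\rho=e_{0},\dots,e_{\ell}=d$ be the finite root-to-$d$ path, and show ${\sf Ext}^{\mathcal{I}}(e_{i})={\sf Ext}^{\mathcal{K}}(e_{i})$ by induction on $\ell-i$: at $e_{i}$ the $\Sigma$-concept names and the roles to the children agree in $\mathcal{I}$ and $\mathcal{K}$, and every child $c$ has ${\sf Ext}^{\mathcal{I}}(c)={\sf Ext}^{\mathcal{K}}(c)$---for $c=e_{i+1}$ by the induction hypothesis and for every other child because $\mathcal{I}(c)=\mathcal{K}(c)$---so the recursion principle gives ${\sf Ext}^{\mathcal{I}}(e_{i})={\sf Ext}^{\mathcal{K}}(e_{i})$; for $i=0$ this is ${\sf Ext}^{\mathcal{I}}(\rho)={\sf Ext}^{\mathcal{K}}(\rho)$, and comparing emptiness at the roots finishes the proof. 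I expect the main obstacle to be exactly the ``if'' direction of the recursion principle---assembling a witness model with a branching that does not mirror $\mathcal{N}$, one separate successor for each existential and each $\Sigma$-successor built from the children's extension sets and from $\mn{TP}(\Tmc)$, and checking that its root realizes precisely the type $t$.
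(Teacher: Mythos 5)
Your proof is correct, but it takes a genuinely different route from the paper's. The paper proves the lemma by a direct surgery on a witness model: it takes a tree model $\Imc'\models\Tmc$ with $(\Imc,\rho^\Imc)\sim_\Sigma(\Imc',\rho^{\Imc'})$, chooses a path-respecting bisimulation $S$, and for each $d'\in S(d)$ replaces $\Imc'(d')$ by a fresh tree model $\Kmc_{d'}\models\Tmc$ that simultaneously realizes $\mn{tp}^{\Imc'}(d')$ at its root and is $\Sigma$-bisimilar to $(\Jmc,\rho^\Jmc)$ --- the existence of such $\Kmc_{d'}$ is exactly what the hypothesis $\sim^e$ provides. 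The bisimulation between $\Kmc$ and the surgered model is then assembled from $S$ and the bisimulations into the $\Kmc_{d'}$. Your proof instead establishes the stronger invariant ${\sf Ext}^\Imc(\rho^\Imc)={\sf Ext}^\Kmc(\rho^\Kmc)$ by isolating a local characterization of extension sets (your conditions (a)--(d)) and propagating equality of extension sets upward along the finite root-to-$d$ path; the surgery on a single witness model is replaced by the one-step reassembly argument in the ``if'' direction of the recursion principle. This local characterization is, in substance, a set-level restatement of the correctness of the automaton $\Amc_{\Tmc,\Sigma}$ (Lemma~\ref{lem:autocorrect}), so your argument effectively imports into Section~\ref{sect:charact} machinery that the paper develops only in Section~\ref{sect:automatastuff}. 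What you gain is modularity and a more conceptual statement (extension sets, not merely their non-emptiness, are preserved by the replacement); what the paper's proof gains is self-containedness within Section~\ref{sect:charact} and avoiding the need to verify the global-to-local fixpoint characterization. One small point worth making explicit in your write-up: when you attach disjoint witness subtrees under the fresh root $b$, the forth condition uses the witness chosen for each $r$-successor $d'$ of $d$ from (b), while the back condition is satisfied because every attached $r$-subtree ($r\in\Sigma$) comes tagged with some actual $r$-successor $d'$ of $d$ in $\mathcal{N}$ (this is why (c) demands a $d'$, not just a type); and $\mn{tp}^\Jmc(b)=t$ requires the no-spurious-existentials check you gesture at via $t\leadsto_r t'$.
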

\begin{proof}
Let $\Imc$, $d$, $\Jmc$, and $\Kmc$ be as in the formulation of Lemma~\ref{lem3}.
Assume $\mathcal{I} \models  \exists^\sim_{\Sigma}. \Tmc$. There exists a tree-interpretation
$\Imc'$ that is a model of $\Tmc$
such that $(\mathcal{I},\rho^{\Imc}) \sim_{\Sigma} (\mathcal{I}',\rho^{\Imc'})$. 
We may assume that there is a $\Sigma$-bisimulation $S$ between
$(\mathcal{I},\rho^{\Imc})$ and $(\mathcal{I}',\rho^{\Imc'})$ such that $S^{-}$ is an
injective relation and such that $(e,e')\in S$ implies that $e$ is reached from $\rho^{\Imc}$
along the same path as $e'$ from $\rho^{\Imc'}$. Let $S(d)= \{ d' \mid (d,d')\in S\}$.
Consider, for every $d'\in S(d)$, a tree-interpretation $\mathcal{K}_{d'}$ satisfying $\Tmc$ such that
\begin{itemize}
\item $\mn{tp}^{\mathcal{I}',\mathcal{T}}(d') = \mn{tp}^{\mathcal{K}_{d'},\mathcal{T}}(\rho^{\mathcal{K}_{d'}})$;
\item $(\mathcal{J},\rho^{\Jmc}) \sim_{\Sigma} (\mathcal{K}_{d'},\rho^{\Kmc_{d'}})$.
\end{itemize}
Such interpretations $\Kmc_{d'}$ exist by the definition of the equivalence relation
$\sim^{e}$. Now replace, in $\mathcal{I}'$ and for all $d'\in S(d)$, 
the tree interpretation $\mathcal{I}'(d')$ by $\mathcal{K}_{d'}$, and denote the resulting 
interpretation by $\Kmc'$. $\Kmc'$ is a model of $\Tmc$ since
$\mn{tp}^{\mathcal{I}',\mathcal{T}}(d') = \mn{tp}^{\mathcal{K}_{d'},\mathcal{T}}(\rho^{\mathcal{K}_{d}})$,
$\Imc'$ is a model of $\Tmc$ and all $\Kmc_{d'}$ are models of $\Tmc$.
It remains to show that $(\mathcal{K},\rho^{\Kmc}) \sim_{\Sigma} (\Kmc',\rho^{\Kmc'})$.
Take for every $d'\in S(d)$ a $\Sigma$-bisimulation $S_{d'}$ between
$(\mathcal{J},\rho^{\Jmc})$ and $(\mathcal{K}_{d'},\rho^{\Kmc_{d'}})$.
Let $S'$ be the restriction of $S$ to 
$$
(\Delta^{\mathcal{I}}\setminus\Delta^{\mathcal{I}(d)}) \times (\Delta^{\mathcal{I}'}
\setminus(\bigcup_{d'\in S(d)}\Delta^{\mathcal{I}'(d')}))
$$
It is not difficult to show that $S' \cup \bigcup_{d'\in S(d)}S_{d'}$ is the required 
$\Sigma$-bisimulation between $(\Kmc,\rho^{\Jmc})$ and $(\Kmc',\rho^{\Kmc})$.
\end{proof}

\medskip
\noindent
{\bf Theorem~\ref{fixm}.}
  Let $\Tmc$ be a TBox and $\Sigma\subseteq {\sf sig}(\Tmc)$. Then
  there does not exist a uniform $\Sigma$-interpolant of $\Tmc$
  iff $(\ast_{M_{\Tmc}^{2}+1})$ from Theorem~\ref{thm1} holds, where
  $M_{\Tmc}:=2^{2^{|\Tmc|}}$.

\medskip

\begin{proof}
By Theorem~\ref{thm1}, it is sufficient to prove that $(\ast_{M_{\Tmc}^{2}+1})$ implies $(\ast_{m})$ for all
$m\geq M_{\Tmc}^{2}+1$.  
%For any two nodes
%Consider now an equivalence relation $\equiv$ on tree-models (over $\Sigma$). Lot's of ways to
%define one which does the job. What we need is that $\mathcal{I}_{1} \equiv \mathcal{I}_{2}$ implies
%that for any tree-model $\Imc$ and node $x$ in it, $\Imc +_{x} \Imc_{1} \models T^{\Sigma}$ iff
%$\Imc +_{x} \Imc_{2} \models T^{\Sigma}$ ($+_{x}$ means `replacing the tree with root $x$ by the
%model to the right). Let $m$ be the number of distinct equivalence classes.
%
%We want more: $\mathcal{I} +_{x} \mathcal{I}_{1} \equiv \mathcal{I} +_{x} \mathcal{I}_{2}$ whenever
%$\mathcal{I}_{1} \equiv \mathcal{I}_{2}$.
%
Take $\Sigma$-tree interpretations $\mathcal{I}_{1}$ and $\mathcal{I}_{2}$ satisfying 
$(\ast_{m})$ of Theorem~\ref{thm1} for some $m \geq M_{\Tmc}^{2}+1$. 
We show that there exist $\Sigma$-tree interpretations 
$\mathcal{J}_{1}$ and $\mathcal{J}_{2}$ satisfying $(\ast_{m+1})$ of Theorem~\ref{thm1}.
The implication then follows by induction.

Let $D$ be the set of $d\in \Delta^{\Imc_{1}}$ with ${\sf dist}(\rho^{\Imc_{1}},d)=m$
such that $\Imc_{1}(d)^{\leq 1} \;\not=\; \Imc_{2}(d)^{\leq 1}$ (i.e., the
restrictions of $\Imc_{1}$ to
$\{ d' \mid d' \mbox{ son of $d$ in $\Imc_{1}$}\}$ and $\Imc_{2}$ to
$\{ d' \mid d' \mbox{ son of $d$ in $\Imc_{2}$} \}$ do not
coincide). If $D =\emptyset$, then $\Imc_{1}^{\leq m+1} \; = \; \Imc_{2}^{\leq m+1}$
and the claim is proved. Otherwise choose $f\in D$ and consider the path 
$$
\rho^{\Imc_{1}}= d_{0} r_{0} d_{1} \cdots r_{m-1} d_{m}= f
$$
with $(d_{i},d_{i+1}) \in r_{i}^{\Imc_{1}}$ for all $i<m$. 
As $m \geq M_{\Tmc}^{2}+1$, there exists $0<i<j\leq m$ such that both, 
$$
(\mathcal{I}_{1},d_{i}) \sim^{e} (\mathcal{I}_{1}, d_{j}),
\quad
(\mathcal{I}_{2},d_{i}) \sim^{e} (\mathcal{I}_{2}, d_{j}).
$$
Replace $\mathcal{I}_{1}(d_{j})$ by 
$\mathcal{I}_{1}(d_{i})$ in $\mathcal{I}_{1}$ 
and denote the resulting interpretation by $\mathcal{K}_{1}$. 
Similarly, replace $\mathcal{I}_{2}(d_{j})$ by 
$\mathcal{I}_{2}(d_{i})$ in $\mathcal{I}_{2}$ and denote the resulting interpretation
$\mathcal{K}_{2}$.
By Lemma~\ref{lem3}, $\mathcal{K}_{1}$ and $\mathcal{K}_{2}$ still have Properties (1)-(4).
Moreover, the set $D'$ of all $d\in \Delta^{\Imc_{1}'}$ with ${\sf dist}(\rho^{\Imc_{1}'},d)=m$ 
such that $\Kmc_{1}(d)^{\leq 1} \;\not= \Kmc_{2}(d)^{\leq 1}$ 
is a subset of $D$ not containing $f$. 
Thus, we can proceed with $D'$ in the same way as above until the set is empty. Denote
the resulting interpretations by $\mathcal{J}_{1}$ and $\mathcal{J}_{2}$,
respectively. They still have Properties (1)--(4), but now for some $m'>m$. 
%Observe that for each $x\in W^{l}$, at some point
%we have replaced $\mathcal{I}_{1}(x_{1})$ by $\mathcal{I}_{1}(x_{2})$ in $\mathcal{I}_{1}$ and
%$\mathcal{I}_{2}(x_{1})$ by 
%$\mathcal{I}_{2}(x_{2})$ in $\mathcal{I}_{2}$ for some $r < x_{2}<x_{1} \leq x$. 
\end{proof}

\section{Proofs for Section~\ref{sect:automatastuff}}

% PUT AT PROPER PLACE IN THIS SECTION:

% Throughout the paper, it will sometimes be convenient to work with
% tree interpretations. We introduce the formalities here.  An
% interpretation \Imc is a \emph{tree interpretation} if $\Delta^{\Imc}$
% is a tree, $d'$ is a son of $d$ whenever $(d,d')\in r^{\Imc}$ for some
% $r\in \NC$, and for all $d,d' \in \Delta^\Imc$ with $d'$ son of $d$,
% there is exactly one $r\in \NC$ such that $(d,d')\in r^{\Imc}$.

We start with establishing some basic results about APTAs. To
formulate and prove these, we make some technicalities more formal
than in the main paper. Recall than a run is a pair $(T,\ell)$ with
$T$ a tree. Let us make precise what exactly we mean by `tree' here. A
\emph{tree} is a non-empty (finite or infinite) prefix-closed subset
$T \subseteq S^*$, for some set $S$. If $d \in T$ and $d \cdot c \in
T$ with $d \in S^*$ and $c \in S$, then the node $d \cdot c$ is a
\emph{son} of the node $d$ in $T$.  A node $d \in T$ that has no sons
is a \emph{leaf}.  We measure the size of an APTA primarily in the
number of states. To define a more fine-grained measure, we use
$||\Amc||$ to denote the \emph{size} of \Amc, i.e.,
$|Q|+{|\Sigma_N|}+|\Sigma_E|$. Note that the size of (the
representation of) all other components of the automaton is bounded
polynomially in $||\Amc||$. In particular, we can w.l.o.g.\ assume
that the values in $\Omega$ are bounded by $2|Q|$.
\begin{lemma}
\label{lem:compl}
Let $\Amc_i=(Q_i,\Sigma_{N},\Sigma_{E},q_{0,i},\delta_i,\Omega_i)$ be
APTAs, $i \in \{1,2\}$ with $Q_1 \cap Q_2 = \emptyset$. Then there is an APTA
\begin{enumerate}

\item 
  $\Amc'=(Q_1,\Sigma_{N},\Sigma_{E},q_{0,1},\delta',\Omega')$ such that
  $L(\Amc')=\overline{L(\Amc_1)}$;

\item 
  $\Amc''=(Q_1 \uplus Q_2 \uplus \{q_0\},\Sigma_{N},\Sigma_{E},q_{0},\delta'',\Omega'')$ such that
  $L(\Amc'')=L(\Amc_1) \cap L(\Amc_2)$. 

\end{enumerate}
Moreover, $\Amc'$ and $\Amc''$ can be constructed in time $p(||\Amc||)$, $p$ a polynomial.
\end{lemma}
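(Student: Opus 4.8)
The plan is to establish the two closure constructions for APTAs exactly as one does for more classical alternating parity tree automata, exploiting the fact that runs are tree-shaped regardless of whether the input interpretation is.

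\medskip
\noindent
\textbf{Complementation (Part 1).} First I would recall that the acceptance of a run by an APTA can be phrased as a parity game between two players: the \emph{automaton player}, who at states with a disjunctive move $q' \vee q''$ or an existential move $\langle r\rangle q'$ picks one son, and the \emph{pathfinder}, who at states with a conjunctive move $q' \wedge q''$ or a universal move $[r]q'$ picks one son; leaf-type moves $\mn{true},\mn{false},A,\neg A$ are decided by the local check against $\Imc$ at the current element. Then $(\Imc,d) \in L(\Amc_1)$ iff the automaton player has a winning strategy in this game on $(\Imc,d)$. Since parity games are determined, $(\Imc,d) \notin L(\Amc_1)$ iff the pathfinder has a winning strategy. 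The automaton $\Amc'$ is obtained by the usual \emph{dualization}: swap $\wedge$ with $\vee$, swap $\langle r\rangle$ with $[r]$, swap $\mn{true}$ with $\mn{false}$, swap $A$ with $\neg A$, and set $\Omega'(q) = \Omega_1(q)+1$ so as to complement the parity condition (using our normalization that priorities are bounded by $2|Q|$, this stays within the required form after a trivial shift). One then shows that a winning strategy for the pathfinder in the game for $\Amc_1$ is, literally, a winning strategy for the automaton player in the game for $\Amc'$, and conversely; hence $L(\Amc') = \overline{L(\Amc_1)}$. The state set, node alphabet and edge alphabet are unchanged, and the transition function and priority function are recomputed by a single pass over $\delta_1$, so $\Amc'$ is constructed in polynomial time. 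I would note that we freely use arbitrary modal formulas in negation normal form as transition values, as sanctioned earlier, so the dualization is simply the De Morgan dual of the NNF formula $\delta_1(q)$.

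\medskip
\noindent
\textbf{Intersection (Part 2).} For $L(\Amc_1) \cap L(\Amc_2)$ I would take the disjoint union of the two state sets together with a fresh initial state $q_0$, set $\delta''(q_0) = q_{0,1} \wedge q_{0,2}$, let $\delta''$ agree with $\delta_i$ on $Q_i$, set $\Omega''(q_0)$ to any odd value so that the extra node on each path is harmless (it occurs only once, hence is not the priority seen infinitely often), and let $\Omega''$ agree with $\Omega_i$ on $Q_i$. Correctness is immediate from the shape of runs: an accepting run of $\Amc''$ on $(\Imc,d)$ is a tree whose root has label $(q_0,d)$ with two sons, the subtree below the first son being (the relabelling of) an accepting run of $\Amc_1$ on $(\Imc,d)$ and the subtree below the second an accepting run of $\Amc_2$ on $(\Imc,d)$; the parity condition on every path of $\Amc''$ reduces, after its first node, to the parity condition of the corresponding path in the $\Amc_i$-run because priorities in $Q_1$ and $Q_2$ are kept as they were and the root priority is seen only finitely (namely once). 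The size is $|Q_1|+|Q_2|+1$ for states and the shared alphabets, and the construction is again a single combination pass, hence polynomial.

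\medskip
\noindent
\textbf{Main obstacle.} The only genuinely delicate point is the correctness of complementation, i.e.\ that dualizing the transition function together with incrementing priorities really complements the accepted language. This rests on (a) determinacy of parity games and (b) the observation that a run is always a tree even when $\Imc$ is an arbitrary interpretation, so that the game-theoretic semantics of APTAs coincides with the run-based one given in Definition~\ref{def:altrun}; both are standard for alternating parity tree automata in Wilke's style \cite{Wilke-Automata}, and transfer verbatim to APTAs precisely because of this tree-shapedness of runs. The intersection construction and all the size and time bounds are routine. I would therefore present Part~1 via the game characterization and a one-line dualization argument, and Part~2 as an immediate consequence of the run structure, deferring any further routine verification.
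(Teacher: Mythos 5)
Your proposal is correct and follows essentially the same route as the paper's own (sketch) proof: dualization with priority shift for complementation, and a fresh conjunctive initial state over the disjoint union for intersection; you merely spell out the standard game-determinacy justification that the paper leaves implicit by citing the Muller--Schupp construction. (Incidentally, your $\delta''(q_0)=q_{0,1}\wedge q_{0,2}$ is what the paper clearly intends, where the appendix has a typo $q_{0,1}\wedge q_{0,1}$.)
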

\begin{proof} (sketch) The construction of $\Amc'$ is based on the
  standard dualization construction first given in
  \cite{Muller-Schupp-87}, i.e., $\delta'$ is obtained from $\delta$
  by swapping \mn{true} and \mn{false}, $A$ and $\neg A$, $\wedge$ and
  $\vee$, and diamonds and boxes, and setting $\Omega'(q)=\Omega(q)+1$
  for all $q \in Q$. The construction of $\Amc''$ is standard as well:
  add a fresh initial state $q_0$ with $\delta(q_0)=q_{0,1} \wedge
  q_{0,1}$, and define $\delta''$ and $\Omega''$ as the fusion of the
  respective components of $\Amc_1$ and $\Amc_2$ (e.g.,
  $\delta''(q)=\delta_1(q)$ for all $q \in Q_1$ and
  $\delta''(q)=\delta_2(q)$ for all $q \in Q_2$).
\end{proof}
The following lemma shows that a language accepted by an APTA is
closed under bisimulation. It implies that whenever for an APTA \Amc
we have $L(\Amc) \neq \emptyset$, then there is a pointed tree
interpretation $(\Imc,d)$ with $(\Imc,d) \in L(\Amc)$.
As a notational convention, whenever $x$ is a node in a $Q \times
\Delta^\Imc$-labelled tree and $\ell(x)=(q,d)$, then we use
$\ell_1(x)$ to denote $q$ and $\ell_2(x)$ to denote $d$. 
\begin{lemma}
\label{lem:aptabisimclosed}
  Let $\Amc=(Q,\Sigma_{N},\Sigma_{E},q_{0},\delta,\Omega)$ be an APTA,
  $(\Imc,d) \in L(\Amc)$, and $(\Imc,d) \sim_{\Sigma_N \cup \Sigma_E} (\Jmc,e)$.
  Then $(\Jmc,e) \in L(\Amc)$.
\end{lemma}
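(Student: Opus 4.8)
The plan is to prove Lemma~\ref{lem:aptabisimclosed} by transferring an accepting run of $\Amc$ on $(\Imc,d)$ to an accepting run on $(\Jmc,e)$, using a $(\Sigma_N \cup \Sigma_E)$-bisimulation $S$ between the two pointed interpretations to ``guide'' the new run step by step. Write $\Sigma = \Sigma_N \cup \Sigma_E$, and let $(T,\ell)$ be an accepting run of $\Amc$ on $(\Imc,d)$. I will build a run $(T,\ell')$ on $(\Jmc,e)$ over the \emph{same} underlying tree $T$, keeping the state component unchanged, i.e.\ $\ell'_1(x) = \ell_1(x)$ for all $x \in T$, and only replacing the element component so that $\ell'_2(x) \in \Delta^\Jmc$. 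The invariant maintained along the construction is that $(\ell_2(x), \ell'_2(x)) \in S$ for every node $x$, and in particular $\ell'(\varepsilon) = (q_0, e)$ since $(d,e) \in S$ by definition of a $\Sigma$-bisimulation.

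First I would set up the construction by induction on the depth of nodes in $T$. At the root, $\ell'_2(\varepsilon) := e$, which satisfies the invariant. For the inductive step, consider a node $x$ with $\ell(x) = (q,d')$, $\ell'_2(x) = e'$, and $(d',e') \in S$; we must define $\ell'$ on the sons of $x$ so that the run conditions of Definition~\ref{def:altrun} hold at $x$ with respect to $\Jmc$. The cases $\delta(q) \in \{\mn{true}, q' \wedge q'', q' \vee q''\}$ are trivial since they do not mention the interpretation: keep the same sons and copy $e'$ into their element component. For $\delta(q) = A$ or $\delta(q) = \neg A$ with $A \in \Sigma_N$: since $(T,\ell)$ is a run we have $d' \in A^\Imc$ (resp.\ $d' \notin A^\Imc$), and condition~1 of $\Sigma$-bisimulation gives $e' \in A^\Jmc$ (resp.\ $e' \notin A^\Jmc$); there is nothing to do for sons. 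For $\delta(q) = [r]q'$ with $r \in \Sigma_E$: here the son structure in $T$ may need genuine adjustment, because $\Jmc$ may have $r$-successors of $e'$ not matched by those of $d'$; I handle this below. For $\delta(q) = \langle r \rangle q'$: the run provides a son $y$ of $x$ with $\ell(y) = (q', d'')$ and $(d',d'') \in r^\Imc$; by condition~2 (the forth condition) of the $\Sigma$-bisimulation there is $e'' \in \Delta^\Jmc$ with $(e',e'') \in r^\Jmc$ and $(d'',e'') \in S$, so set $\ell'_2(y) := e''$, preserving the invariant.

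The one case that needs care is $\delta(q) = [r] q'$, and this is the main obstacle, because the box move is universally quantified over successors and the tree $T$ only contains sons matching $\Imc$'s $r$-successors of $d'$. The fix is to \emph{rebuild} the son-set at such a node: for each $(e', e'') \in r^\Jmc$, condition~3 (the back condition) of the $\Sigma$-bisimulation yields $d''$ with $(d', d'') \in r^\Imc$ and $(d'', e'') \in S$; since $(T,\ell)$ is a run, $d''$ is the element component of some son $y$ of $x$ with $\ell_1(y) = q'$. I would use $y$ (or a copy of the subtree rooted at $y$, to be fully safe if several $e''$ map back to the same $y$) as the corresponding son in the new run, setting its element component to $e''$ and then recursing into that subtree; the subtree below $y$ is an accepting run fragment of $\Amc$ on $(\Imc, d'')$, and the invariant $(d'',e'') \in S$ lets the induction continue. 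Thus the new tree $T'$ may differ from $T$ in its branching at box-nodes, but its paths are, up to the relabelling, paths of $T$, which is what we need for acceptance. Concretely, one shows that along every path $\pi'$ of $T'$ the sequence of states $(\ell'_1(x))_{x \in \pi'}$ equals the state sequence along a corresponding path $\pi$ of $T$ — this is immediate from the construction, which never alters state components and only ever re-uses (copies of) subtrees of $T$ whose roots carry the required state. Hence the maximal priority seen infinitely often along $\pi'$ equals that along $\pi$, which is even because $(T,\ell)$ is accepting. Therefore $(T',\ell')$ is an accepting run of $\Amc$ on $(\Jmc,e)$, and so $(\Jmc,e) \in L(\Amc)$, as required. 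The remaining bookkeeping — formalizing ``copy of a subtree'' as a map of trees and checking that all six run conditions of Definition~\ref{def:altrun} hold at every node of $T'$ — is routine and I would relegate it to a short verification rather than spell it out in full.
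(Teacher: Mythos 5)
Your proposal is correct and follows essentially the same route as the paper's proof: both construct a new run tree whose nodes carry a state/element pair satisfying the invariant that the paired elements are $(\Sigma_N\cup\Sigma_E)$-bisimilar, use the forth condition at $\langle r\rangle$-moves and the back condition at $[r]$-moves, and observe that because state components are preserved along the (implicit or explicit) back-map to $T$, every path of the new run mirrors a path of the old one, so acceptance transfers. The only cosmetic difference is that you start out trying to reuse $T$ verbatim and only rebuild at box nodes, whereas the paper constructs $T'$ afresh throughout with an explicit map $\mu:T'\to T$; the resulting constructions coincide.
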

\begin{proof}
  Let $(\Imc,d) \in L(\Amc)$, and $(\Imc,d) \sim_{\Sigma_N \cup
    \Sigma_E} (\Jmc,e)$. Moreover, let $(T,\ell)$ be an accepting run
  of \Amc on $(\Imc,d)$.  We inductively construct a $Q \times
  \Delta^{\mathcal{J}}$-labelled tree $(T',\ell')$, along with a map
  $\mu:T' \rightarrow T$ such that $\mu(y)=x$ implies
  $\ell_1(x)=\ell'_1(y)$ and $(\Imc,\ell_2(x)) \sim_{\Sigma_N \cup
    \Sigma_E} (\Jmc,\ell'_2(y))$:
  \begin{itemize}

  \item start with $T'=\{\varepsilon\}$, $\ell'(\varepsilon)=(q,e)$, and
    $\mu(\varepsilon)=\varepsilon$;

  \item if $y \in T'$ is a leaf, $\ell'_1(y)=q' \wedge q''$, and
    $\mu(y)=x$, then there are sons $x',x''$ of $x$ with
    $\ell(x')=(q',\ell_2(x))$ and $\ell(x'')=(q'',\ell_2(x))$; add
    fresh $y \cdot c'$ and $y \cdot c''$ to $T'$ and put $\ell'(y
    \cdot c')=(q',\ell'_2(y))$, $\ell'(y \cdot c'')=(q'',\ell'_2(y))$,
    $\mu(y \cdot c')=x'$, and $\mu(y \cdot c'')=x''$;

  \item if $y \in T'$ is a leaf, $\ell'_1(y)=q' \vee q''$, and
    $\mu(y)=x$, then there is a son $x'$ of $x$ with $\ell_1(x') \in
    \{ q',q'' \}$ and $\ell_2(x')=\ell_2(x)$; add a fresh $y \cdot c'$
    to $T'$ and put $\ell'(y \cdot c')=(\ell_1(x'),\ell'_2(y))$ and
    $\mu(y \cdot c')=x'$;

  \item if $y \in T'$ is a leaf, $\ell'_1(y)=\langle r \rangle q'$,
    and $\mu(y)=x$, then there is an $(\ell_2(x),d) \in r^\Imc$ and a
    son $x'$ of $x$ with $\ell(x')=(q',d)$; since $(\Imc,\ell_2(x))
    \sim_{\Sigma_N \cup \Sigma_E} (\Jmc,\ell'_2(y))$, there is an
    $(\ell'_2(y),d') \in r^\Jmc$ with $(\Imc,d) \sim_{\Sigma_N \cup
      \Sigma_E} (\Jmc,d')$; add a fresh $y \cdot c'$ to $T'$ and put
    $\ell'(y \cdot c')=(q',d')$ and $\mu(y \cdot c')=x'$;

  \item if $y \in T'$ is a leaf, $\ell'_1(y)=[ r ] q'$, and
    $\mu(y)=x$, then do the following for every $(\ell_2'(y),d') \in
    r^\Jmc$: since $(\Imc,\ell_2(x)) \sim_{\Sigma_N \cup \Sigma_E}
    (\Jmc,\ell'_2(y))$, there is an $(\ell_2(x),d) \in r^\Imc$ with
    $(\Imc,d) \sim_{\Sigma_N \cup \Sigma_E} (\Jmc,d')$, and thus also
    a son $x'$ of $x$ with $\ell(x')=(q',d)$; add a fresh $y \cdot c'$
    to $T'$ and put $\ell'(y \cdot c')=(q',d')$ and $\mu(y \cdot
    c')=x'$.

  \end{itemize}
  It can be verified that $(T',\ell')$ is an accepting run of \Amc on
  $(\Jmc,e)$.
\end{proof}
Finally, we fix the complexity of the emptiness problem of APTAs.
The following is proved in \cite{Wilke-Automata} using a reduction to 
parity games.
\begin{theorem}[Wilke]
\label{thm:emptiness}
Let $\Amc=(Q,\Sigma_{N},\Sigma_{E},q_{0},\delta,\Omega)$ be an
APTA.  Then the emptiness of $L(\Amc)$
can be decided in time $2^{p(||\Amc||)}$, $p$ a polynomial.
\end{theorem}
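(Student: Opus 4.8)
The plan is to reduce the emptiness problem for $\Amc$ to the problem of deciding the winner of a parity game, via the two standard stages: first eliminate alternation, turning $\Amc$ into an equivalent \emph{nondeterministic} parity tree automaton $\Amc'$; then read off emptiness of $\Amc'$ from a parity game whose size is polynomial in that of $\Amc'$. The entire exponential blow-up is incurred in the first stage, and only in the number of states (not in the number of priorities), which is exactly what yields a single-exponential time bound overall. Throughout, I rely on the fact (Lemma~\ref{lem:aptabisimclosed}) that $L(\Amc)$ is closed under $\Sigma_N\cup\Sigma_E$-bisimulation.

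\textbf{Reduction to trees of bounded branching.} The first step is to show that $L(\Amc)\neq\emptyset$ iff $L(\Amc)$ contains a pointed \emph{tree} interpretation of branching degree at most $||\Amc||$. Given $(\Imc,d)\in L(\Amc)$, unraveling $\Imc$ at $d$ yields a pointed tree interpretation bisimilar to $(\Imc,d)$, hence in $L(\Amc)$ by Lemma~\ref{lem:aptabisimclosed}. To bound the branching, one prunes, in an accepting run $(T,\ell)$, the successors of each node: at a node $w$ with incoming role $r$, among the relevant copies of $\Amc$ processing $w$ at most $|Q|$ states occur, so for each role name $s$ at most $|Q|$ distinct diamond targets $\langle s\rangle q$ are triggered; and the original witness successor of such a move already hosts, in the run, a copy in state $q$ \emph{together with} a copy in every state demanded by an active $[s]$-move (box-moves send a copy to every $s$-successor). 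Hence keeping one successor per triggered diamond target per role name suffices, giving branching $\le|Q|\cdot|\Sigma_E|\le||\Amc||$. On tree interpretations of this branching, after folding each edge label into the label of the node it points to, $\Amc$ is an ordinary alternating parity tree automaton over trees of branching $\le||\Amc||$ labelled by the finite alphabet $2^{\Sigma_N}\times\Sigma_E$.

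\textbf{Eliminating alternation and solving the game.} Now I would invoke the simulation theorem of Muller--Schupp / Emerson--Jutla: an alternating parity tree automaton with $|Q|$ states and priorities bounded by $2|Q|$ has an equivalent nondeterministic parity tree automaton $\Amc'$ whose number of states is $2^{p(||\Amc||)}$ and whose number of priorities is polynomial in $||\Amc||$. For a nondeterministic parity tree automaton over trees of bounded branching, emptiness is decided by a parity game $G$: positions are (essentially) states of $\Amc'$; at a position the automaton player chooses a locally consistent transition --- a node label together with a tuple of successor states --- and the ``pathfinder'' player then chooses one successor to continue from; priorities are inherited from $\Amc'$-states, and the automaton player wins a play iff the parity condition holds along the chosen branch. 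A memoryless winning strategy for the automaton player unfolds into a regular tree accepted by $\Amc'$, and conversely an accepted tree yields such a strategy, so $L(\Amc')\neq\emptyset$ iff the automaton player wins $G$. The arena of $G$ has size polynomial in $|\Amc'|$, hence $2^{p'(||\Amc||)}$, with polynomially many priorities, so $G$ is solved in time $|G|^{q(||\Amc||)}=2^{p''(||\Amc||)}$ by any classical parity-game algorithm. Composing the two stages gives the stated $2^{p(||\Amc||)}$ bound for deciding emptiness of $\Amc$; this is essentially the argument of \cite{Wilke-Automata}.

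\textbf{Main obstacle.} The technical heart is the simulation theorem: both constructing the nondeterministic automaton's state space --- which must record a set of $\Amc$-states \emph{together with} a ``latest appearance record''--style history component certifying, branch by branch, that the maximal recurring priority has the correct parity --- and verifying that the translation preserves the accepted language. This is the delicate part of the construction; the bounded-branching reduction and the final passage to a parity game are routine by comparison. In a full write-up I would prove the bounded-branching property in detail, cite the simulation theorem, and give the emptiness-game construction together with its correctness proof in full.
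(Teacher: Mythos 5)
The paper does not prove this theorem itself but imports it from Wilke's work, where emptiness is settled via parity games, and your proposal reconstructs exactly that standard route (unravelling/bisimulation closure plus pruning to trees of polynomially bounded branching, the Muller--Schupp/Emerson--Jutla simulation theorem to remove alternation with exponential blow-up only in states, then the emptiness parity game for the resulting nondeterministic automaton), so it is correct as a sketch and matches the cited argument in approach. One cosmetic slip: the pruning yields branching at most $|Q|\cdot|\Sigma_E|$, which is polynomial in $||\Amc||$ but not in general $\leq ||\Amc||$; this does not affect the single-exponential bound.
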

The following lemma establishes the correctness of the construction of
the automata $\Amc_{\Tmc_\Sigma}$ for Theorem~\ref{thm:bisimauto}.
\begin{lemma}
\label{lem:autocorrect}
$(\Imc,d_0) \in
  L(\Amc_{\Tmc,\Sigma})$ iff $(\Imc,d_0) \models \exists^\sim_{\overline{\Sigma}}
  . \Tmc$,
  for all pointed $\Sigma$-interpretations $(\Imc,d_0)$.
\end{lemma}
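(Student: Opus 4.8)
The plan is to prove the two directions of the equivalence separately, exploiting the structure of the automaton $\Amc_{\Tmc,\Sigma}$ defined above and the bisimulation invariance of $L(\Amc_{\Tmc,\Sigma})$ (Lemma~\ref{lem:aptabisimclosed}). Recall that $\Tmc = \{\top \sqsubseteq C_\Tmc\}$, that the states of $\Amc_{\Tmc,\Sigma}$ are $\mn{TP}(\Tmc) \uplus \{q_0\}$, and that a run that reaches state $t \in \mn{TP}(\Tmc)$ at some element $e$ asserts, via $\delta(t)$, that (a) $e$ agrees with $t$ on $\Sigma \cap \NC$, (b) every $r$-successor of $e$ (for $r \in \Sigma$) realizes a type $t'$ with $t \leadsto_r t'$, and (c) for each $\exists r.C \in t$ with $r \in \Sigma$ there is an $r$-successor of $e$ realizing some $t'$ with $C \in t$ and $t \leadsto_r t'$. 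Since all priorities are $0$, every run is accepting, so $(\Imc,d_0) \in L(\Amc_{\Tmc,\Sigma})$ is purely a reachability/local-consistency condition.

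For the ``$\Leftarrow$'' direction, suppose $(\Imc,d_0) \models \exists^\sim_{\overline\Sigma}.\Tmc$, i.e.\ there is a model $\Jmc$ of $\Tmc$ and $e_0 \in \Delta^\Jmc$ with $(\Imc,d_0) \sim_\Sigma (\Jmc,e_0)$ via a bisimulation $S$. First I would build an accepting run of $\Amc_{\Tmc,\Sigma}$ on $(\Jmc,e_0)$ directly: label the root with $(q_0,e_0)$, move to $(\mn{tp}^\Jmc(e_0), e_0)$ — this is legitimate since $\mn{tp}^\Jmc(e_0) \in \mn{TP}(\Tmc)$ because $\Jmc \models \Tmc$ — and then recursively satisfy $\delta(t)$ at each node labelled $(t,f)$ with $t = \mn{tp}^\Jmc(f)$. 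The key observations are: clause (a) holds trivially; for the $[r]$ conjuncts, if $(f,f') \in r^\Jmc$ then $\mn{tp}^\Jmc(f) \leadsto_r \mn{tp}^\Jmc(f')$ by definition of $\leadsto_r$ (if $C \in \mn{tp}^\Jmc(f')$ then $f' \in C^\Jmc$, so $f \in (\exists r.C)^\Jmc$, so $\exists r.C \in \mn{tp}^\Jmc(f)$); and for the $\langle r \rangle$ conjuncts, if $\exists r.C \in \mn{tp}^\Jmc(f)$ then $f$ has an $r$-successor $f'$ with $f' \in C^\Jmc$, and $\mn{tp}^\Jmc(f) \leadsto_r \mn{tp}^\Jmc(f')$ as before. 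This gives an accepting run on $(\Jmc,e_0)$, hence $(\Jmc,e_0) \in L(\Amc_{\Tmc,\Sigma})$; then Lemma~\ref{lem:aptabisimclosed} (with $\Sigma_N \cup \Sigma_E = \Sigma$) and $(\Imc,d_0) \sim_\Sigma (\Jmc,e_0)$ yield $(\Imc,d_0) \in L(\Amc_{\Tmc,\Sigma})$.

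For the ``$\Rightarrow$'' direction, suppose $(T,\ell)$ is an accepting run of $\Amc_{\Tmc,\Sigma}$ on $(\Imc,d_0)$. The idea is to read off a model of $\Tmc$ from the run together with the type information it attaches to elements of $\Imc$. Concretely, I would first observe that the run induces, for each $d \in \Delta^\Imc$ reached by the run in some state $t \in \mn{TP}(\Tmc)$, a type $t$ such that the ``$\Sigma$-projection'' of $t$ agrees with $d$'s $\Sigma$-behaviour in $\Imc$; the $\leadsto_r$ conditions in $\delta$ guarantee these types are mutually consistent along $\Sigma$-edges of $\Imc$. Then I would construct a model $\Jmc$ of $\Tmc$ bisimilar to $(\Imc,d_0)$ by a canonical-model / unraveling construction: take as domain the nodes of $T$ labelled by proper states (those in $\mn{TP}(\Tmc)$), unravel if necessary, interpret concept names and $\Sigma$-roles so as to mirror $\Imc$ (which the bisimulation clauses 2--3 of the run at $[r]$ and $\langle r \rangle$ moves make possible), and additionally supply, for each type $t$ appearing, witnessing successors for the non-$\Sigma$ existentials $\exists s.C \in t$ by grafting on a standard tree model of $\Tmc$ realizing $t$ at its root (such a model exists since $t \in \mn{TP}(\Tmc)$). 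One then checks that (i) the resulting $\Jmc$ is $\Sigma$-bisimilar to $(\Imc,d_0)$ — the bisimulation relates each $\Imc$-element to the run-nodes carrying its type, and the forth/back conditions for $\Sigma$-roles follow from the $[r]$/$\langle r\rangle$ clauses of $\delta$ combined with the run conditions — and (ii) $\Jmc \models \Tmc$, i.e.\ every element is in $C_\Tmc^\Jmc$, which follows because every element realizes a type from $\mn{TP}(\Tmc)$ (so contains $C_\Tmc$) and the interpretation of all constructors has been set up faithfully, using the grafted models to handle the non-$\Sigma$ part and a standard induction on concepts to show membership in $C^\Jmc$ matches the type.

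The main obstacle is the ``$\Rightarrow$'' direction: turning the bare reachability information in the run into a genuine \emph{global} model of $\Tmc$ while simultaneously preserving $\Sigma$-bisimilarity to $(\Imc,d_0)$. The run only tells us that a consistent family of types can be threaded through the $\Sigma$-structure of $\Imc$; making this into a model requires care in handling (a) elements of $\Imc$ that the run visits in several distinct states $t$ (these must be duplicated, so one works with the run tree or an unraveling rather than $\Imc$ itself), (b) the non-$\Sigma$ existentials, which are not witnessed inside $\Imc$ at all and must be filled in by grafting standard tree models realizing the appropriate type at the root — and one must verify the graft does not disturb the $\Sigma$-bisimulation, which is fine precisely because grafted material uses only non-$\Sigma$ roles at the attachment point or is invisible to $S$ — and (c) the final inductive verification that concept membership in the assembled model agrees with the assigned types. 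Once these bookkeeping points are discharged, both directions close, giving the stated equivalence.
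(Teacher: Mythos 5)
Your plan is correct and, for the ``$\Rightarrow$'' direction, matches the paper's proof almost exactly: the paper also takes the run-tree nodes labelled by types as the domain of an interpretation $\Jmc_0$, defines $\Sigma$-role edges from the $[r]$/$\langle r\rangle$ run conditions together with $\leadsto_r$, grafts tree models of $\Tmc$ onto each node to witness the non-$\Sigma$ existentials, exhibits $\{(d,x)\mid \ell_2(x)=d\}$ as the $\Sigma$-bisimulation, and closes with an induction on concepts. For the ``$\Leftarrow$'' direction your route is slightly cleaner than the paper's: you build the run directly on the model $(\Jmc,e_0)$ of $\Tmc$ using its own types and then transfer acceptance to $(\Imc,d_0)$ via Lemma~\ref{lem:aptabisimclosed}, whereas the paper constructs the run on $(\Imc,d_0)$ in-line by threading pairs $(d_i,e_i)$ through the bisimulation---the same argument, just not factored through the bisimulation-closure lemma.
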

\begin{proof}
  Let $(\Imc,d_0)$ be a pointed $\Sigma$-interpretation.  By
  definition of $\Amc_{\Tmc,\Sigma}$, we have $(\Imc,d_0) \in
  L(\Amc_{\Tmc,\Sigma})$ iff there exists a $Q \times
  \Delta^{\mathcal{I}}$-labelled tree $(T,\ell)$ such that
\begin{enumerate}

\item $\ell(\varepsilon) = (q_{0},d_0)$; 

\item there exists a son $x$ of $\varepsilon$ and 
$t\in \mn{TP}(\Tmc)$ such that 
$\ell(x) = (t,d_0)$;

\item if $\ell(x) = (t,d)$, then 

\begin{itemize}

\item[(a)] $d\in A^{\mathcal{I}}$ for all $A \in t \cap \NC \cap \Sigma$;

\item[(b)] $d\not\in A^{\mathcal{I}}$ for all $A\in (\NC \cap \Sigma)\setminus t$;

\item[(c)] for all $r \in \Sigma$ and $(d,d')\in r^{\Imc}$,
there exist $t'\in \mn{TP}(\Tmc)$ such that $t \leadsto_{r} t'$ 
and a son $y$ of $x$ such that $\ell(y) = (t',d')$;

\item[(d)] for all $\exists r.C\in t$ with $r \in \Sigma$, there exists
  $(d,d')\in r^{\Imc}$ and $t'\in \mn{TP}(\Tmc)$ such that $C \in t'$
  and $t \leadsto_{r} t'$, and $\ell(y) = (t',d')$ for some son $y$ of $x$.

\end{itemize}

\end{enumerate}
Assume that $(\Imc,d_0) \models \exists^\sim_{\overline{\Sigma}} . \Tmc$ and let
$(\Jmc,e_0)$ be a pointed model of $\Tmc$ such that $(\Imc,d_0)
\sim_{\Sigma} (\Jmc,e_0)$. A \emph{path} is a sequence
$(d_1,e_1)\cdots(d_n,e_n)$, $n \geq 0$, with $d_1,\dots,d_n \in
\Delta^\Imc$ and $e_1,\dots,e_n \in \Delta^\Jmc$ such that
\begin{itemize}

\item[(i)] $d_1 = d_0$;

\item[(ii)] $(d_i,d_{i+1}) \in r^\Imc$ for some $r \in \Sigma \cap \NR$, for
  $1 \leq i < n$;

\item[(iii)] $(\Imc,d_i) \sim_\Sigma (\Jmc,e_i)$ for $1 \leq i \leq n$.

\end{itemize}
Define a $Q \times \Delta^{\mathcal{I}}$-labelled tree $(T,\ell)$ by
setting
\begin{itemize}

\item $T$ to the set of all paths;

\item $\ell(\varepsilon)=(q_0,d_0)$;

\item $\ell((d_1,e_1)\cdots(d_n,e_n))=(\mn{tp}^\Jmc(e_n),d_n)$ for all
  paths $(d_1,e_1)\cdots(d_n,e_n) \neq \varepsilon$.
  
\end{itemize}
One can now verify that $(T,\ell)$ satisfies Conditions~1 to~3,
thus $(\Imc,d_0) \in L(\Amc_{\Tmc,\Sigma})$. In fact, Conditions~1 and~2
are immediate and Conditions~3a and~3b are a consequence of the
definition of $\ell$ and (iii). As for Condition~3c, let
$\ell(x)=(t,d)$, $(d,d') \in r^\Imc$, and $r \in \Sigma$, and assume
$x= (d_1,e_1)\cdots(d_n,e_n)$. Then $d=d_n$, and $d_n \sim_\Sigma e_n$
and $(d,d') \in r^\Imc$ yield an $(e_n,e') \in r^\Jmc$ with $d'
\sim_\Sigma e'$. Thus $y:=(d_1,e_1)\cdots(d_n,e_n)(d',e')$ is a son of
$x$ and $t':=\mn{tp}^\Jmc(e')$ is as desired, i.e., $t
\rightsquigarrow_r t'$ and $\ell(y)=(t',d')$. Condition~3d can be
established similarly.

\smallskip

Conversely, assume that there is a $Q \times
\Delta^{\mathcal{I}}$-labelled tree $(T,\ell)$ that satisfies
Conditions~1 to~3. 
%If $\ell(x)=(t,d)$, we use $\ell_1(x)$ to
%denote $t$ and $\ell_2(x)$ to denote $d$. 
Define an interpretation
${\Jmc_0}$ by setting
\begin{itemize}

\item $\Delta^{{\Jmc_0}}= T\setminus \{x \in T \mid \ell_1(x)=q_{0} \}$;

\item $A^{{\Jmc_0}} = \{ x\in \Delta^{{\Jmc_0}} \mid A \in \ell_1(x) \}$;

\item $(x,y)\in r^{{\Jmc_0}}$ iff $y$ is a son of $x$, $\ell(x)=(t,d)$,
  $\ell(y)=(t',d')$, $t \leadsto_{r} t'$, and $(d,d')\in r^{\Imc}$.

\end{itemize}
The next step is to extend ${\Jmc_0}$ to also satisfy existential
restrictions $\exists r .C$ with $r \notin \Sigma$.  For each $x \in
\Delta^{\Jmc_0}$ and $\exists r . C \in \ell_1(x)$ with $r \notin
\Sigma$, fix a model $\Imc_{x,\exists r . C}$ of \Tmc that satisfies
$C$ and every $D$ with $\forall r . D \in \ell_1(x)$ at the root. Such
models exist since $\ell_1(x) \in \mn{TP}(\Tmc)$, thus it is realized
in some model of \Tmc. Let $\Imc_{x_1,\exists r_1 . C_1},
\dots,\Imc_{x_k,\exists r_k . C_k}$ be the chosen models and assume
w.l.o.g.\ that their domains are pairwise disjoint, and also disjoint
from $\Delta^{\Jmc_0}$. Now define a new interpretation \Jmc as
follows:
\begin{itemize}

\item $\Delta^\Jmc = \Delta^{\Jmc_0} \cup \displaystyle\bigcup_{1 \leq i \leq k}
  \Delta^{\Imc_{x_i,\exists r_i . C_i}}$;

\item $A^\Jmc = A^{\Jmc_0}  \cup \displaystyle\bigcup_{1 \leq i \leq k}
  A^{\Imc_{x_i,\exists r_i . C_i}}$;

\item $r^\Jmc = r^{\Jmc_0}  \cup \displaystyle\bigcup_{1 \leq i \leq k}
  r^{\Imc_{x_i,\exists r_i . C_i}} \cup \bigcup_{1 \leq i \leq k}
  (x_i,\rho^{\Imc_{x_i,\exists r_i . C_i}})$.

\end{itemize}
By Condition~2, there is a son $x_0$ of $\varepsilon$ in $T$ such that
$\ell(x_0)=(t,d_0)$. Using Condition~3, it can be verified that
$
\{(d,x) \in \Delta^{\Imc}\times \Delta^{\Jmc_0} \mid \ell_2(x)=d \}
$
is a $\Sigma$-bisimulation between $(\Imc,d_0)$ and $(\Jmc,x_0)$. It
thus remains to show that \Jmc is a model of \Tmc, which is an
immediate consequence of the following claim and the definition of
types for \Tmc.
\\[2mm]
{\bf Claim}. For all $C \in \mn{cl}(\Tmc)$:
\begin{enumerate}

\item[(i)] for all $x \in \Delta^{\Jmc_0}$ with $\ell(x)=(t,d)$, $C \in t$
  implies $x \in C^\Jmc$;

\item[(ii)] for $1 \leq i \leq k$ and all $x \in \Delta^\Imc_{x_i,\exists
    r_i . C_i}$,  $x \in C^{\Imc_{x_i,\exists r_i . C_i}}$ implies
  $x \in C^\Jmc$.

\end{enumerate}
The proof is by induction on the structure of $C$. We only do the case
$C=\exists r . D$ explicitly. For Point~(i), let $x \in \Delta^{\Jmc_0}$
with $\ell(x)=(t,d)$, and $\exists r . D \in t$. First assume $r \in
\Sigma$. Then Condition~3d yields a $(d,d') \in r^\Imc$ and $t' \in
\mn{TP}(\Tmc)$ such that $D \in t'$ and $t \rightsquigarrow_r t'$, and
a son $y$ of $x$ with $\ell(y)=(t',d')$. By definition of $\Jmc_0$,
$(x,y) \in r^\Jmc$. By IH, $D \in t'$ yields $y \in D^\Jmc$, thus $x
\in (\exists r . D)^\Jmc$. Now assume $r \notin \Sigma$. Then
$(x,\rho^{\Imc_{x,\exists r . D}}) \in r^\Jmc$. By IH and choice of
$\rho^{\Imc_{x,\exists r . D}}$, $\rho^{\Imc_{x,\exists r . D}} \in D^\Jmc$
and we are done. For Point~(ii), it suffices to apply IH and the semantics.
\end{proof}

\medskip
\noindent
{\bf Proof of Theorem~\ref{bisimuniform}.}
We have already proved the modified version of the theorem, where
``for all interpretations \Imc'' is replaced with ``for all
interpretations \Imc with finite outdegree''. The ``if'' direction of
the modified version immediately implies the one of the original
version. For the ``only if'' direction of the original version, assume
that $\Tmc_\Sigma$ is a uniform $\Sigma$-interpolant of \Tmc.
The direction ``$\Leftarrow$'' of ($*$) is proved exactly as in the
modified version. For ``$\Rightarrow$'', take an interpretation \Imc
with $\Imc \models \Tmc_\Sigma$ and assume to the contrary that there
is a $d \in \Delta^\Imc$ such that $(\Imc,d)$ is not
$\Sigma$-bisimilar to any pointed model of \Tmc. Let \Amc be the
complement of the automaton $\Amc_{\Tmc,\Sigma}$ of
Theorem~\ref{thm:bisimauto}. By Lemma~\ref{lem:autocorrect}, $(\Imc,d)
\in L(\Amc)$. We can w.l.o.g.\ assume that $\Imc$ is a tree
interpretation with root $d$ (if it is not, apply unravelling). By
considering an accepting run $(T,\ell)$ of \Amc on \Imc and removing
unnecessary subtrees synchronously from both \Imc and $(T,\ell)$, it
is easy to show that there is a $(\Jmc,d) \in L(\Amc)$ that is still a
model of $\Tmc_\Sigma$, but of finite outdegree. Since $\Jmc \in
L(\Amc)$, \Jmc is not $\Sigma$-bisimilar to any model of $\Tmc$. The
existence of such a \Jmc contradicts the modified
Theorem~\ref{bisimuniform}, which we already proved to hold.  \qed

\medskip
\noindent
Our next aim is to prove Theorem~\ref{prop:cetocons}, which or convenience
we state in expanded form here.
{\bf Theorem~\ref{prop:cetocons}.}
Let \Tmc be a TBox, $\Sigma \subseteq \mn{sig}(\Tmc)$ a signature, and
$m \geq 0$. Then there is an APTA
$\Amc_{\Tmc,\Sigma,m}=(Q,\Sigma_N,\Sigma_E,q_0, \delta, \Omega)$ such
that $L(\Amc) \neq \emptyset$ iff  there are $\Sigma$-tree
interpretations $(\Imc_1,d_1)$ and $(\Imc_2,d_2)$ such that
\begin{enumerate}

\item $\Imc_1^{\leq m}=\Imc_2^{\leq m}$;

\item $(\Imc_1,\rho^{\Imc_1}) \models \exists^\sim_{\overline{\Sigma}} . \Tmc$;

\item $(\Imc_2,\rho^{\Imc_2}) \not\models \exists^\sim_{\overline{\Sigma}} . \Tmc$;

\item for all successors $d$ of $\rho^{\Imc_2}$, we have $(\Imc_2,d) \models \exists^\sim_{\overline{\Sigma}} . \Tmc$.

\end{enumerate}
Moreover, $|Q| \in \Omc(2^{\Omc(n)}+\log^2 m)$ and $|\Sigma_N|,|\Sigma_E| \in \Omc(n+\log m)$,
where $n=|\Tmc|$.

\medskip
\noindent
\begin{proof}
% Note that by Theorem~\ref{thm1}, there is a uniform
% $\Sigma$-interpolant for \Tmc iff the APTA $\Amc_{\Tmc,\Sigma,k}$ from
% Theorem~\ref{prop:cetocons} recognizes a non-empty language, where
% $k=2^{2^{|\Tmc|}}$. Together with Theorem~\ref{thm:emptiness}, this
% yields Theorem~\ref{thm:uniintupper}.
%
% \medskip
%
  Let \Tmc be a TBox, $\Sigma \subseteq \mn{sig}(\Tmc)$ a signature,
  and $k \geq 0$.  We show how to constuct the APTA
  $\Amc_{\Tmc,\Sigma,k}$ stipulated in Theorem~\ref{prop:cetocons} as
  an intersection of four automata $\Amc_1,\dots,\Amc_4$. Let
  $k=\lceil\log(m+2)\rceil$. All of the automata $\Amc_i$ will use the
  alphabets
$$
  \begin{array}{rcl}
  \Sigma_N &=& ((\Sigma \cap \NC) \times \{ 1,2\}) \cup \{ c_1,\dots,c_k\} \\[1mm]
  \Sigma_E &=& (\Sigma \cap \NR) \times \{ 1,2,12\}
\end{array}
$$
We assume w.l.o.g.\ that $\Sigma_N \subseteq \NC$ and $\Sigma_E
\subseteq \NR$.  Intuitively, a $\Sigma_N \cup
\Sigma_E$-interpretation $\Imc$ represents two
$\Sigma$-interpretations $\Imc_1$ and $\Imc_2$ where for $i \in
\{1,2\}$, we set $\Delta^{\Imc_i}=\Delta^\Imc$, $A^{\Imc_i} = d \in
(A,i)^{\Imc}$ and $r^{\Imc_i}=(r,i)^{\Imc} \cup (r,12)^{\Imc}$. Thus,
edges indexed by ``12'' represent edges that are shared between the
two interpretations.  The additional concept names $c_1,\dots,c_k$ are
used to implement a counter that counts the depth of elements in
$\Imc$ up to $m+1$, and then stays at $m+1$.  As a by-product, the
counter ensures that every element has a uniquely defined
depth. Formally, a pointed $\Sigma_N \cup \Sigma_E$-interpretation
$(\Imc,d)$ is called \emph{$m$-well-counting} if for all $d_0,\dots,d_n
\in \Delta^\Imc$ with $d_0=d$ and $(d_i,d_{i+1}) \in \bigcup_{r \in
  \Sigma_E} r^\Imc$ for $0 \leq i < n$, the value encoded in binary by
the truth of the concept names $c_1,\dots,c_k$ at $d_n$ is $\min\{n,m+1\}$.

\smallskip

Automaton $\Amc_1$ ensures Condition~2 of Theorem~\ref{prop:cetocons}.
We construct it by starting with the APTA $\Amc_{\Tmc,\Sigma}$ from
Theorem~\ref{thm:bisimauto}, and then modifying it as follows to run
on track~1 of the combined interpretations:
    \begin{itemize}

    \item replace the node alphabet $\Sigma \cap \NR$ with $\Sigma_N$, 
      and the edge alphabet $\Sigma \cap \NC$ with $\Sigma_E$

    \item for all states $q$, $\delta(q)$ is obtained from $\delta(q)$
      by replacing $A$ with $(A,1)$, $\neg A$ with $\neg(A,1)$,
      $\langle r \rangle q$ with $\langle (r,1) \rangle q \vee \langle
      (r,12) \rangle q$ and every $[ r] q$ with $[ (r,1) ] q \wedge 
      [ (r,12) ] q$

    \end{itemize}
    Automaton $\Amc_2$ takes care of Condition~3 of
    Theorem~\ref{prop:cetocons}.  We again start with
    $\Amc_{\Tmc,\Sigma}$, first complement it according to
    Lemma~\ref{lem:compl} and then modify it as $\Amc_1$, but using
    track/index 2 instead of track/index 1. Automaton $\Amc_3$
    addresses Condition~4 of Theorem~\ref{prop:cetocons}.  We start
    once more with $\Amc_{\Tmc,\Sigma}$, modify it to run on track 2,
    add a new initial state $q_0'$, and put $\delta(q'_0)=\bigwedge_{r
      \in (\Sigma \cap \NR) \times \{2,12\}} [r] q_0$, with $q_0$ the
    original initial state, to start the run of the obtained automaton
    at every successor of the selected point instead of at the selected
    point itself. The following lemma states the
    central property of the APTAs constructed so far.
\\[2mm]
{\bf Claim~1}  Let $(\Imc,d)$ be a pointed $\Sigma_N \cup \Sigma_E$-interpretation. Then
  \begin{enumerate}
  \item 
  $(\Imc,d) \in L(\Amc_1)$ iff $(\Imc_1,d) \in L(\Amc_{\Tmc,\Sigma})$ iff $(\Imc_1,d) \models \exists^\sim_{\overline{\Sigma}} .\Tmc$
  \item 
  $(\Imc,d) \in L(\Amc_2)$ iff $(\Imc_2,d) \notin L(\Amc_{\Tmc,\Sigma})$ iff
 $(\Imc_2,d) \not\models \exists^\sim_{\overline{\Sigma}} .\Tmc$

  \item $(\Imc,d) \in L(\Amc_3)$ iff $(\Imc_2(d'),d') \in
  L(\Amc_{\Tmc,\Sigma})$ for all successors $d'$ of $d$ iff $(\Imc_2(d'),d')
  \models \exists^\sim_{\overline{\Sigma}} .\Tmc$ for all successors $d'$ of $d$.

  \end{enumerate}
The purpose of the final automaton $\Amc_4$ is to address Condition~1
of Theorem~\ref{prop:cetocons}. To achieve this, $\Amc_4$ also enforces that 
accepted interpretations are $m$-well-counting.
$$
\begin{array}{r@{\;}c@{\;}l}
Q&=&\{q_0, q_1, q_2\} \\[1mm]
\delta(q_0)&=&\neg c_1 \wedge \cdots \wedge \neg c_k \wedge q_1 \wedge q_2
\\[1mm] 
\delta(q_1)&=&((c = k+1) \vee [r] (c{+}{+})) \, \wedge \\[1mm]
&& ((c<k+1) \vee [r](c{=}{=}))
\wedge \displaystyle\bigwedge_{r \in \Sigma_E} [r]q_1
\\[5mm] 
\delta(q_2)&=&\displaystyle \bigwedge_{A \in \Sigma \cap \NC} ((A,1) \wedge (A,2)) \vee (\neg(A,1)
  \wedge \neg(A,2))
  \, \wedge \\[5mm]
&& \displaystyle \bigwedge_{r \in \Sigma \cap \NR} [(r,1)] \mn{false} \wedge [(r,2)] \mn{false} \\[5mm]
&& \displaystyle \wedge \; ((c=k+1) \vee \bigwedge_{r \in \Sigma \cap \NR} [(r,12)]q_2 )
\end{array}
$$
where $(c < k+1)$ and $(c=k+1)$ are the obvious Boolean NNF formulas
expressing that the counter $c_1,\dots,c_k$ is smaller and equal to
$k+1$, respectively, $[r] (c{=}{=})$ is a formula expressing that the
counter value does not change when travelling to $r$-successors, and
$[r] (c{+}{+})$ expresses that the counter is incremented when
travelling to $r$-successors. It is standard to work out the details
of these formulas. 
\\[2mm]
{\bf Claim~2.}  Let $(\Imc,d)$ be a pointed $\Sigma_N \cup
\Sigma_E$-interpretation. Then $(\Imc,d) \in L(\Amc_4)$ iff $(\Imc,d)$
is $k$-well-counting and $(\Imc_1,d)^{\leq k}=(\Imc_2,d)^{\leq
  k}$.\footnote{The definition of $(\Imc,d)^{\leq k}$ generalizes from
  tree interpretations to $k$-well-counting interpretations in the
  obvious way.}
\\[2mm]
By Claims~1 and~2 and Lemma~\ref{lem:aptabisimclosed}, the
intersection of $\Amc_1,\dots,\Amc_4$ satisfies Conditions~1-4 of
Theorem~\ref{prop:cetocons}. The size bounds stated in
Theorem~\ref{prop:cetocons} are also satisfied (note the additional
states implicit in $\Amc_4$).
\end{proof}

\medskip
\noindent
{\bf Theorem~\ref{thm:ceupper}.}
  Given TBoxes \Tmc and $\Tmc'$, it can be decided in time
  $2^{p(|\Tmc| \cdot 2^{|\Tmc'|})}$ whether $\Tmc \cup \Tmc'$ is a
  conservative extension of $\Tmc$, for some polynomial $p()$.

\medskip
\noindent
\begin{proof}(sketch) Let $\Tmc$ and $\Tmc'$ be TBoxes and
  $\Sigma=\mn{sig}(\Tmc)$. We construct an APTA \Amc such that
  $L(\Amc)=\emptyset$ iff there are $\Sigma$-tree interpretations
  $\Imc$ and $\Imc'$ such that $\Imc \models \Tmc$ and $\Imc
  \not\models \exists^\sim_{\overline{\Sigma}} . \Tmc'$. By Theorem~\ref{bisimCE}
  and a straightforward unravelling argument, it follows that $\Tmc
  \cup \Tmc'$ is a conservative extension of \Tmc iff
  $L(\Amc)=\emptyset$. We start with taking automata
  $\Amc_{\Tmc,\Sigma}$ and $\Amc_{\Tmc',\Sigma}$, where the latter is
  constructed according to Theorem~\ref{thm:bisimauto} and the former is
  defined as $(Q,\Sigma_N,\Sigma_E,q_0, \delta, \Omega)$, where $Q$ is
  the set of all subformulas of $C_\Tmc$,
$$
\begin{array}{r@{\;}c@{\;}l@{}r@{\;}c@{\;}l}
\Sigma_N &=& \Sigma \cap \NC &
\Sigma_E &=& \Sigma \cap \NR \\[1mm]
\delta(q_0)&=&\displaystyle C_\Tmc \wedge \bigwedge_{r \in \Sigma \cap \NR} [r] q_0  \\
\delta(A)&=&A &
\delta(\neg A)&=&\neg A\\[1mm]
\delta(C \sqcap D)&=&C \wedge D&
\delta(C \sqcup D)&=&C \vee D\\[1mm]
\delta(\exists r . C)&=&\langle r \rangle C&
\delta(\forall r . C)&=&[ r ] C\\[1mm]

\end{array}
$$
Set $\Omega(q)=0$ for all $q \in Q$. We can then use the constructions
from Lemma~\ref{lem:compl} to obtain a final automaton \Amc such
that $L(\Amc)=L(\Amc_{\Tmc,\Sigma}) \cap
\overline{L(\Amc_{\Tmc',\Sigma})}$. It can be verified that the automaton
is as required, and that Theorem~\ref{thm:emptiness} yields the bounds
stated in Theorem~\ref{thm:ceupper}.
\end{proof}

\medskip
\noindent
{\bf Theorem~\ref{thm:unifint2expupper}.}
  It is 2-\ExpTime-complete to decide for a given a TBox \Tmc and
  signature $\Sigma \subseteq \mn{sig}(\Tmc)$, whether there exists a
  uniform $\Sigma$-interpolant of \Tmc.

\medskip
\noindent
\begin{proof}
It remains to prove the lower bound. To this end, 
we reduce deciding conservative extensions to deciding the existence
of uniform interpolants. Assume $\Tmc \subseteq \Tmc'$ are given. 
We may assume that $\Tmc' = \Tmc \cup \{\top \sqsubseteq C\}$ and
that $\Tmc'$ is satisfiable and $\Tmc \not\models \Tmc'$.  Consider the TBox
$$
\Tmc_{0} = \Tmc \cup \{ \neg C \sqsubseteq A, A\sqsubseteq \exists r.A\} \cup 
\{\exists s.A \sqsubseteq A\mid s\in {\sf sig}(\Tmc')\}
$$
where $A$ is a fresh concept name and $r$ a fresh role name.

\medskip
\noindent
Claim. $\Tmc'$ is a conservative extension of $\Tmc$ iff 
there exists a uniform $\Sigma$-interpolant of $\Tmc_{0}$ for $\Sigma={\sf sig}(\Tmc)\cup \{r\}$.

\medskip

Assume first that $\Tmc'$ is a conservative extension of $\Tmc$. We
show that $\Tmc$ is a uniform $\Sigma$-interpolant of $\Tmc_{0}$. 
By Theorem~\ref{bisimuniform}, it is sufficient to
show the following for all $\Imc$: $\Imc \models \Tmc$ iff for all
$d\in \Delta^{\Imc}: (\mathcal{I},d)\models
\exists^{\sim}_{\Sigma}.\Tmc_{0}$.  The direction from right to left
is trivial.  Assume now that $\Imc\models \Tmc$ and fix a $d \in
\Delta^\Imc$. By Theorem~\ref{bisimCE}, $(\mathcal{I},d)\models
\exists^{\sim}_{{\sf sig}(\Tmc)}.\Tmc'$ and thus there is a pointed
model $(\Jmc,e)$ of $\Tmc'$ such that $(\mathcal{I},d)\sim_{{\sf
    sig}(\Tmc)}(\Imc_{d},d)$. As $\top \sqsubseteq C\in \Tmc'$, we
have $C^{\Jmc}= \Delta^{\Jmc}$.  Moreover, since $A$ and $r$ do not
occur in \Tmc and $\Tmc'$, we may assume that
$A^{\Jmc}=\emptyset$. But from $C^{\Jmc}= \Delta^{\Jmc}$
and $A^{\Jmc}=\emptyset$ we obtain $\Jmc\models \Tmc_{0}$ and
so $(\Imc,d) \models \exists^{\sim}_{\Sigma}.\Tmc_{0}$ as required.

Conversely, assume that $\Tmc'$ is not a conservative extension of
$\Tmc$.  Let $C_{0}$ be a ${\sf sig}(\Tmc)$-concept that is
satisfiable w.r.t.~$\mathcal{T}$ but not
w.r.t.~$\mathcal{T}'$. Then
$\mathcal{T}_{0} \models C_{0} \sqsubseteq A$. To show this, let
$\Imc$ be a tree interpretation satisfying $\Tmc_{0}$ with
$\rho^{\Imc} \in C_{0}^{\Imc}$. Let $\Imc_{0}$ be the restriction of
$\Imc$ to all $d\in \Delta^{\Imc}$ that are reachable from
$\rho^{\Imc}$ with paths using roles from ${\sf sig}(\Tmc')$ only.
Then $\Imc_{0}$ is a model of $\Tmc$ and not a model of $\Tmc'$ and so
$C^{\Imc_{0}}\not=\Delta^{\Imc_{0}}$. From $\neg C \sqsubseteq A\in
\Tmc_{0}$ we obtain $A^{\Imc_{0}}\not=\emptyset$.  Thus, from $\exists
s.A\sqsubseteq A\in \Tmc_{0}$ for all $s\in {\sf sig}(\Tmc')$, we
obtain $\rho^{\Imc}\in A^{\Imc_{0}} \subseteq A^\Imc$, as required.

From $A \sqsubseteq \exists r.A\in \Tmc_{0}$, we obtain 
$\Tmc_{0} \models C_{0} \sqsubseteq \exists r.\cdots \exists r.\top$ for arbitrary
long sequences $\exists r.\cdots \exists r$. Intuitively, 
this cannot be axiomatized with a uniform interpolant not using $A$. 
To prove this in a formal way, we apply Theorem~\ref{thm1} and prove
that ($\ast_{m}$) holds for all $m>0$. 

Assume $m>0$ is given.
Let $\Jmc_{0}$ be a tree model satisfying $\Tmc$ with $C_{0}^{\Jmc_{0}}\in \rho^{\Jmc_{0}}$
and in which each $d\in \Delta^{\Imc}$ with the exception of $\rho^{\Jmc_{0}}$ has an
$r$-successor. Let $\Jmc_{i}$ be tree models satisfying $\Tmc'$ with 
$r^{\Jmc_{i}}=\emptyset$, $i>0$. 
We may assume that the $\Delta^{\Jmc_{i}}$ are mutually disjoint.
Define $\Imc_{1} = (\Delta^{\Imc_{1}},\cdot^{\Imc_{1}})$
by setting
\begin{itemize}
\item $\Delta^{\Imc_{1}}= \bigcup_{i\geq 0} \Delta^{\Jmc_{i}}$,
\item $s^{\Imc_{1}}= \bigcup_{i\geq 0} s^{\Jmc_{i}}$ and $B^{\Imc_{1}} = \bigcup_{i\geq 0}B^{\Jmc_{i}}$
for all $s,B\in {\sf sig}(\Tmc')$;
\item $r^{\Imc_{1}}= r^{\Jmc_{0}} \cup \{ (\rho^{\Jmc_{i}},\rho^{\Jmc_{i+1}}) \mid i\geq 0\}$.
\end{itemize}
Note that $\rho^{{\Imc}_{1}}= \rho^{\Jmc_{0}}$. 
Then $(\Imc_{1},\rho^{\Imc_{1}}) \models \exists^{\sim}_{\Sigma}.\Tmc_{0}$ because
the extension of $\Imc_{1}$ defined by setting 
$A^{\Imc_{1}}= \Delta^{\Jmc_{0}}\cup \{\rho^{\Jmc_{i}}\mid i>0\}$
is a model of $\Tmc_{0}$.

Let $\Imc_{2}$ be the restriction of $\Imc_{1}$ to 
$\bigcup_{0\leq i \leq m}\Delta^{\Jmc_{i}}$.
Then $(\Imc_{2},\rho^{\Imc_{2}}) \not\models \exists^{\sim}_{\Sigma}.\Tmc_{0}$:
for any interpretation $\Imc_{2}'$ with $(\Imc_{2},\rho^{\Imc_{2}}) \sim_{\Sigma}
(\Imc_{2}',\rho^{\Imc_{2}'})$ we have $\rho^{\Imc_{2}'} \in C_{0}^{\Imc_{2}'}$. 
If $\Imc_{2}'$ is a model of $\Tmc_{0}$, then $\rho^{\Imc_{2}'} \in A^{\Imc_{2}'}$ and so 
there exist $d_{0},d_{1},\ldots$ such that $d_{0}=\rho^{\Imc_{2}'}$ and 
$(d_{i},d_{i+1})\in r^{\Imc_{2}'}$ for $i\geq 0$. But then there exists such a sequence in 
$\Imc_{2}$ starting at $\rho^{\Imc_{2}}$. As such a sequence does not exist,
we have derived a contradiction. 

On the other hand, for all sons $d$ of $\rho^{\Imc_{2}}$, we have $(\Imc_{2},d)\models
\exists^{\sim}_{\Sigma}.\Tmc_{0}$: for $d=\rho^{\Jmc_{1}}$ this is witnessed by the interpretation
obtained from $\Imc_{2}$ by interpreting $A$ as the empty set. 
For all $d\in \Delta^{\Jmc_{0}}$ this is witnessed by the interpretation obtained
from $\Imc_{2}$ by interpreting $A$ as the whole domain. 

It follows that $\Imc_{1}$ and $\Imc_{2}$ satisfy the condition 
($\ast_{m}$) from Theorem~\ref{thm1}.
\end{proof}

\section{Proofs for Section~\ref{sect:lower}}

\noindent
{\bf Theorem~\ref{thm:internalization}.}
Let $\Tmc= \{\top \sqsubseteq C_\Tmc\}$ and assume that $\Tmc$ has a uniform $\Sigma$-interpolant
Let $R$ denote the set of role names in $\Tmc$,
$m = 2^{2^{|C_\Tmc|+1}} + 2^{|C_\Tmc|} +2$ and let $C$ be a $\Sigma$-concept uniform interpolant of
$\forall R^{\leq m}.C_\Tmc$ w.r.t.~$\Sigma$.
Then $\Tmc'= \{ \top \sqsubseteq C\}$ is a uniform $\Sigma$-interpolant of $\Tmc$.

\medskip
\noindent
\begin{proof}
Recall that $M_{\Tmc}= 2^{2^{|C_\Tmc|}}$. 
By Theorem~\ref{fixm}, $\Tmc_{\Sigma,M_{\Tmc}^{2}+1}$ is a uniform $\Sigma$-interpolant of 
$\Tmc$. We may assume that $\Tmc_{\Sigma,M_{\Tmc}^{2}+1} = 
\{\top \sqsubseteq F\}$ for a $\Sigma$-concept $F$ with ${\sf rd}(F)\leq M_{\Tmc}^{2}+1$. 
We show
$$
\emptyset \models \forall R^{\leq m}.C_\Tmc \sqsubseteq F
$$
We provide a sketch only since the argument is similar to the standard 
reduction of ``global consequence'' to ``local consequence'' in modal logic.
Suppose this is not the case. Let $\Imc$ be a tree interpretation
with $\rho^{\Imc} \in (\forall R^{\leq m}.C_\Tmc)^{\Imc}$ and $\rho^{\Imc} \not\in F^{\Imc}$.
Let $W$ be the set of $d\in \Delta^{\Imc}$ that are of depth $2^{2^{|C_\Tmc|+1}}$. For any path of
length $2^{|C_\Tmc|}+1$ starting at some $d\in W$, there exist at least two points
on that path, say $d_{1}$ and $d_2$, such that
$$
\{ E \in {\sf sub}(C_\Tmc) \mid d_{1} \in E^{\Imc}\} =  \{ E \in {\sf sub}(C_\Tmc) \mid d_{2}\in E^{\Imc}\}.
$$
%We denote this by $d_{1} \sim_{D} d_{2}$. 
We remove the subtree $\Imc(d_{2})$
from $\Imc$ and add the pair $(d',d_{1})$ to $r^{\Imc}$ for the unique predecessor $d'$ of 
$d_{2}$ with $(d',d_{2}) \in r^{\Imc}$ for some role $r$. 
This modification is repeated until a (non-tree!) 
interpretation $\Imc'$ is reached in which all points are reachable from $\rho^{\Imc}$ by a 
path of length bounded by $m = 2^{2^{|C_\Tmc|+1}} + 2^{|C_\Tmc|} +2$. Since ${\sf rd}(F) \leq M_{\Tmc}^{2}+1$
and $\Imc$ has not changed for points of depth not exceeding $M_{\Tmc}^{2}+1$,
we still have $\rho^{\Imc} \not\in F^{\Imc'}$. By construction,
$\Imc'$ is a model of $\Tmc = \{\top \sqsubseteq C_\Tmc\}$.
Thus, we have obtained a contradiction to the assumption that $\{\top \sqsubseteq F\}$ is a 
uniform $\Sigma$-interpolant of $\Tmc$. 

From $\emptyset \models \forall R^{\leq m}.C_\Tmc \sqsubseteq F$ we obtain $\emptyset
\models C \sqsubseteq F$ for the $\Sigma$-concept uniform interpolant $C$. 
Thus $\{\top \sqsubseteq C\} \models \top \sqsubseteq F$
and so $\{\top \sqsubseteq C\}$ is a uniform $\Sigma$-interpolant of $\Tmc$.
\end{proof}

\medskip
\noindent
{\bf Theorem~\ref{thm:trplexpsize}.}
There exists a signature $\Sigma$ and a family of TBoxes $(\Tmc_n)_{n
  > 0 }$ such that, for all $n > 0$,
\begin{itemize}

%\item[(i)] $\Tmc_n \cup
%\Tmc'_n$ is not a conservative extension of $\Tmc_n$,

\item[(i)] $|\Tmc_n| \in \Omc(n^2)$ and

\item[(ii)] every uniform $\Sigma$-interpolant $\{ \top \sqsubseteq
  C_\Tmc \}$ for $\Tmc_n$ is of size at least $2^{(2^n \cdot
    2^{2^n})-2}$.
\end{itemize}
To prove Theorem~\ref{thm:trplexpsize} in an economic way, we reuse
some techniques and result from~\cite{KR06}. We first need a bit of
terminology. If $\Tmc$ and $\Tmc'$ are TBoxes and $\Tmc'$ is not a
conservative extension of~\Tmc, then there is a $\Sigma$-concept $C$
such that $C$ is satisfiable relative \Tmc, but not relative to
$\Tmc'$; such a concept $C$ is a \emph{witness concept} for
non-conservativity of the extension of \Tmc with $\Tmc'$.  One main
result of \cite{KR06} is as follows.
\begin{theorem}[\cite{KR06}]
\label{thm:trplexpsizeCE}
There are families of TBoxes $(\Tmc_n)_{n > 0 }$ and $(\Tmc'_n)_{n > 0 }$ 
such that, for all $n > 0$,
\begin{itemize}

\item[(i)] $\Tmc_n \cup \Tmc'_n$ is not a conservative extension of
  $\Tmc_n$,

\item[(i)] $|\Tmc_n| \in \Omc(n^2)$, $|\Tmc'_n| \in \Omc(n^2)$, and

\item[(ii)] every witness concept for non-conservativity of the
  extension of $\Tmc_n$ with $\Tmc'_n$ is of size at least $2^{(2^n
    \cdot 2^{2^n})-1}$.

\end{itemize}
\end{theorem}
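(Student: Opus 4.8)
The plan is a reduction from the halting problem of alternating Turing machines working in space $2^n$, carried out for a fixed \emph{maximally alternating} machine $M$: on any input of length $n$, $M$ lays out a $2^n$-cell work tape and then performs a complete alternating search, branching (universally and existentially in strict alternation) on one guessed bit per move while incrementing a doubly-exponential step counter, and finally accepts. Thus $M$ accepts every length-$n$ input, and its unique accepting \emph{computation tree} is forced to be a full binary tree whose depth, counted in configurations, is $2^{\Theta(2^n)}$ and whose size is $2^{\Theta(2^n\cdot 2^{2^n})}$, with every node genuinely required since at $\forall$-nodes all children must be present. Over a signature $\Sigma$ with one role name $r$ and $\Omc(n)$ concept names, I would let $\Tmc_n$ axiomatise exactly the \emph{locally well-formed fragments} of such trees: an $r$-path of length $2^n$ encodes one configuration, each node on it carrying an $n$-bit cell address together with cell content, head marker and control state; consecutive configurations are glued by an $r$-edge at their last cell; addresses increment along $r$; and the transition relation of $M$ is imposed cell-by-cell, the ``cells far from the head are copied'' part realised by the standard address-comparison scheme in $\Omc(n)$ inclusions, so that $|\Tmc_n|\in\Omc(n^2)$. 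Crucially, $\Tmc_n$ says nothing about whether a computation tree is \emph{complete} or \emph{accepting}: any tree model of $\Tmc_n$ may be pruned below any depth and reattached to generic continuations without leaving the class of models.

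I would then take $\Tmc'_n$ over $\Omc(n^2)$ fresh symbols. Its fresh predicates realise an ``accepting'' marker $\mathsf{acc}$, forced bottom-up by inclusions of the shape $(\textit{halting-accepting config})\sqsubseteq\mathsf{acc}$, $(\exists\textit{-config})\sqcap\exists(\textit{next}).\mathsf{acc}\sqsubseteq\mathsf{acc}$, $(\forall\textit{-config})\sqcap\forall(\textit{next}).\mathsf{acc}\sqsubseteq\mathsf{acc}$, together with $\mathsf{markedroot}\sqcap(\textit{initial configuration on the fixed input})\sqcap\mathsf{acc}\sqsubseteq\mathsf{goodroot}$ and the single clashing inclusion $\mathsf{goodroot}\sqsubseteq\bot$, where $\mathsf{markedroot}$ is a $\Sigma$-predicate. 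Hence a tree model of $\Tmc_n$ expands to a model of $\Tmc_n\cup\Tmc'_n$ iff no $\mathsf{markedroot}$-node heads a full accepting computation tree of $M$ on the fixed input; since $\Tmc_n$ has $\mathsf{markedroot}$-free models, $\Tmc_n\cup\Tmc'_n$ is consistent. For non-conservativity: because $M$ accepts, a finite accepting computation tree $t$ exists, and the $\Sigma$-concept $C^\ast_n$ that declares its holder a $\mathsf{markedroot}$ carrying the initial configuration and then spells out all of $t$ node by node is satisfiable w.r.t.\ $\Tmc_n$ (take $t$ as model) but not w.r.t.\ $\Tmc_n\cup\Tmc'_n$ (there it would force $\mathsf{acc}$ up $t$, then $\mathsf{goodroot}$, then a clash); so $\Tmc_n\cup\Tmc'_n$ is not a conservative extension of $\Tmc_n$, and the TBox size bounds $\Omc(n^2)$ hold by construction.

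It remains to prove the size lower bound on \emph{every} witness. Let $C$ be any witness, i.e.\ a $\Sigma$-concept satisfiable w.r.t.\ $\Tmc_n$ but not w.r.t.\ $\Tmc_n\cup\Tmc'_n$. Then every tree model $\Imc\models\Tmc_n$ with some $d_0\in C^\Imc$ must already contain a $\mathsf{markedroot}$-node that heads a full accepting computation tree of $M$; otherwise, using that $\Tmc_n$ is neutral about completeness and acceptance, $\Imc$ could be pruned and reattached so as to still model $\Tmc_n$ and satisfy $C$ while admitting an expansion to $\Tmc_n\cup\Tmc'_n$, contradicting that $C$ is a witness. Applying this to the canonical (least-constrained) tree model generated by $C$ and $\Tmc_n$, whose underlying structure has size $\Omc(|C|)$, we conclude that this canonical structure must embed an accepting computation tree of $M$ on the fixed input. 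Since the maximally alternating design forces every such tree to have at least $2^{(2^n\cdot 2^{2^n})-2}$ nodes, and since a single concept (having no predicate-sharing mechanism at its disposal) can pin down such a structure only by being at least that large, $|C|\geq 2^{(2^n\cdot 2^{2^n})-2}$, which is~(ii).

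The main obstacle is this last step: making rigorous that any $\Tmc_n$-model of a witness must literally contain a complete accepting computation tree, and that a concept's canonical $\Tmc_n$-model has size linear in the concept. This hinges on engineering $\Tmc_n$ so that the ``prune-and-reattach-generic-rejecting-continuation'' surgery is genuinely available on \emph{every} $\Tmc_n$-model that does not already contain such a tree, and on choosing $M$ so that its minimum accepting tree provably has the stated triple-exponential size (the alternation forces all $2^{\Theta(2^n\cdot 2^{2^n})}$ nodes, and no shorter concept can describe that tree). Keeping both $\Tmc_n$ and $\Tmc'_n$ within $\Omc(n^2)$ while carrying out the $2^n$-cell address-comparison and the $\mathsf{acc}$/$\mathsf{goodroot}$ propagation correctly — in particular that the $\forall(\textit{next}).\mathsf{acc}$ clause is never vacuously triggered, which holds since $\Tmc_n$ forces every $\forall$-configuration to have a successor — is routine bookkeeping. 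These are exactly the ingredients of the $2$-\ExpTime lower-bound construction for conservative extensions in~\cite{KR06}, where the theorem is established.
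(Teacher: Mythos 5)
This theorem is not proved in the paper at all: it is imported from \cite{KR06}, and the appendix only recalls the ingredients needed to transfer it to uniform interpolants (the signature $\{A,B,r,s\}$, \emph{strongly $n$-violating} interpretations, $\Tmc_n^-$, and Lemma~\ref{thm:modelstuff}). In the \cite{KR06} construction the hard structure is not a Turing-machine computation but a full binary tree over the two $\Sigma$-roles $r,s$ that branches at \emph{every} element, of depth $m-1$ with $m=2^n\cdot 2^{2^n}$, carrying a $2^n$-bit counter $X$ that counts to $2^{2^n}-1$ along every path and with $\neg B$ at the last level; since the relevant TBox forces no successors whatsoever, all of this existential structure must be written out inside any witness concept (in essence $K_1 \sqcap K_2^{(m)}$ of Figure~\ref{fig:tbox2}), one existential occurrence per node, which is where the $2^{m-1}$ bound comes from. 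Your ATM-based sketch is a genuinely different route, but as it stands it is a plan rather than a proof: the decisive quantitative step is the one you yourself flag as ``the main obstacle'' and finally outsource back to \cite{KR06}.

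Concretely, three things break. First, the whole lower bound rests on the principle that the \emph{concept}, not the TBox, must supply the existential structure; yet to keep the clause $(\forall\textit{-config})\sqcap\forall(\textit{next}).\mathsf{acc}\sqsubseteq\mathsf{acc}$ from firing vacuously you let $\Tmc_n$ force successors of universal configurations. That contradicts your claim that the canonical model of a witness has size $\Omc(|C|)$, and it lets the TBox generate the universal half of the accepting tree for free, so ``$|C|$ is at least the size of the tree'' no longer follows; if instead $\Tmc_n$ forces nothing, vacuous universal acceptance admits tiny witnesses. Resolving this tension is exactly the missing argument. Second, the minimal structure a witness must enforce is an accepting \emph{strategy} tree: at existential configurations one successor suffices (both for acceptance of $M$ and for the $\exists(\textit{next}).\mathsf{acc}$ clause), and branching one bit per \emph{move} yields on the order of $2^{2^{2^n}}$ configurations, not $2^{2^n\cdot 2^{2^n}}$ elements; so even granting everything else, your machine cannot yield the stated bound $2^{(2^n\cdot 2^{2^n})-1}$ of item~(ii) (your own concluding figure, $2^{(2^n\cdot 2^{2^n})-2}$, already falls short of it). To get branching at every element one needs something like the per-cell $r/s$-branching of \cite{KR06}, which is not what ``one guessed bit per move'' provides. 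Third, the prune-and-reattach surgery on an \emph{arbitrary} model of a witness is asserted, not constructed; it must be shown compatible with the transition and address/counter constraints and must handle non-tree models (via unravelling and bisimulation invariance), and it is precisely the step that converts ``$C$ is a witness'' into ``every model of $C$ and $\Tmc_n$ contains the full accepting tree''. Until these points are carried out, the proposal does not establish the theorem independently of \cite{KR06}.
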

To transfer Theorem~\ref{thm:trplexpsizeCE} from witness concepts to
uniform interpolants, we need to introduce some technicalities from
its proof.  For the reminder of this section, fix a signature
$\Sigma=\{A,B,r,s\}$.  Let \Imc be an intepretation and $d \in
\Delta^\Imc$. A \emph{path} starting at $d$ is a sequence
$d_1,\dots,d_k$ with $d_1=d$ and $(d_1,d_{1+1}) \in r^\Imc \cup
s^\Imc$, for $1 \leq i \leq k$.
%
% For each $w=w_1 \cdots w_k
% \in \{r,s\}^*$, we call a sequence of domain element
% $x_1,\dots,x_{k+1} \in \Delta^\Imc$ a \emph{$w$-path starting at $x$}
% if $x=x_1$ and $(x_i,x_{i+1}) \in w_i^\Imc$, for $1 \leq i \leq k$. If
% $w$ is not important, we simply speak of a \emph{path}. 
%
In \cite{KR06}, \Imc is called \emph{strongly $n$-violating} iff there
exists an $x \in A^\Imc$ such that the following two properties are
satisfied, where $m=(2^n \cdot 2^{2^n})$:
\begin{itemize}

\item[(P1)] for all paths $x_1,\dots,x_k$ in \Imc with $k \leq m$
  starting at $x$, the $X$ values of $x_1,\dots,x_k$ describe the
  first $k$ bits of a $2^n$-bit counter counting from $0$ to
  $2^{2^n}-1$.

\item[(P2)] %for all words $w \in \{r,s\}^*$ of length $m-1$, there
%  exists a $w$-path $x_1,\dots,x_m$ in \Imc starting at $x$ such that
%  $x_m \notin B^\Imc$.
  there exist elements $x_w \in \Delta^\Imc$, for all
  $w \in \{r,s\}^*$ of length at most $m-1$, such that the following
  are true:
  \begin{itemize}

  \item[($a$)] $x_\varepsilon = x$;

  \item[($b$)] $(x_w,x_{w'}) \in r^\Imc$ if $w'=w \cdot r$, and $(x_w,x_{w'})
    \in s^\Imc$ if $w'=w \cdot s$;

  \item[($c$)] $x_w \notin B^\Imc$ if $w$ is of length $m-1$.

  \end{itemize}

\end{itemize}
Define a 
$\Sigma$-TBox
$$
\begin{array}{r@{}l}
  \Tmc_n^- = \{ & \top \sqsubseteq \forall r . \neg A \sqcap \forall s . \neg A \\[2mm]
  &  A \sqsubseteq \neg X \sqcap \bigsqcap_{i<2^n} \forall (r \cup s)^i. \neg X \}
\end{array}
$$
The following result of \cite{KR06} underlies the proof of Theorem~\ref{thm:trplexpsizeCE}.
\begin{lemma}[\cite{KR06}]
\label{thm:modelstuff}
There exist families of TBoxes $(\Tmc_n)_{n > 0 }$ and $(\Tmc'_n)_{n >
  0 }$ such that, for all $n > 0$,
\begin{itemize}

%\item[(i)] $\Tmc_n \cup
%\Tmc'_n$ is not a conservative extension of $\Tmc_n$,

\item[(i)] $|\Tmc_n| \in \Omc(n^2)$, $|\Tmc'_n| \in \Omc(n^2)$;

\item[(ii)] a model of $\Tmc_n$ that is strongly $n$-violating cannot be extended
  to a model of $\Tmc'_n$;

\item[(iii)] a tree model of $\Tmc_n$ that is not strongly $n$-violating can be extended
  to a model of~$\Tmc'_n$;

\item[(iv)] every model of $\Tmc_n^-$ can be extended to a model of $\Tmc_n$.

\end{itemize}
\end{lemma}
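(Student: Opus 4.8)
The plan is to reproduce the lower-bound construction of \cite{KR06}, so the proof reduces to exhibiting the two TBox families and verifying (i)--(iv); the signature $\Sigma=\{A,B,r,s\}$ and the notion of a strongly $n$-violating interpretation are already fixed above, and $m=2^n\cdot 2^{2^n}$ is the length of the sequence $0,1,\dots,2^{2^n}-1$ written in blocks of $2^n$ bits.

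\emph{The TBoxes.} Besides $A,B,r,s$, the generator $\Tmc_n$ uses a fresh guard name $G$ and $\Omc(n)$ fresh names implementing an $n$-bit ``position within the current block'' counter together with a constant number of auxiliary flags (a carry bit, a ``first block'' flag, a ``block all-ones so far'' flag) and the bit name $X$. All axioms of $\Tmc_n$ beyond those of $\Tmc_n^-$ are guarded by $G$ and \emph{local}, of role depth $\Omc(n)$: at a $G$-marked $A$-node the position is $0$, the bit is $0$, and the first block is forced to be all-$\neg X$ (the value $0$); each non-boundary $r$- or $s$-step increments the position and threads the carry and the all-ones flag; at a block boundary the carry is transferred and the increment rule fixes the first bit of the next block. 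Since there are $\Omc(n)$ axioms, each of size $\Omc(n)$, we get $|\Tmc_n|\in\Omc(n^2)$; the crux of this encoding (the heart of \cite{KR06}) is that with polynomially many symbols $\Tmc_n$ forces every path from a $G$-marked $A$-node to spell the counting sequence for as many steps as the model provides successors. The checker $\Tmc'_n$ is defined over a further fresh signature and, again with $\Omc(n)$ local axioms of size $\Omc(n)$, re-verifies the position machinery against the $X$-bits, tracks which nodes sit at the final position of the all-ones block (depth $m-1$ along a correctly counting path), and is arranged so that the obstruction it imposes can be discharged \emph{exactly} when the full binary tree of depth $m-1$ with $\neg B$-leaves fails to be present below some $A$-node; this keeps $|\Tmc'_n|\in\Omc(n^2)$ and, with the bound on $\Tmc_n$, gives~(i).

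\emph{Properties (ii)--(iv).} For (ii), let $\Imc\models\Tmc_n$ be strongly $n$-violating, witnessed by $x\in A^{\Imc}$ with (P1) and (P2). By (P1) the $X$-bits along every path from $x$ spell the counting sequence, so the nodes $\Tmc'_n$ flags as ``final position of the all-ones block'' are exactly the depth-$(m-1)$ nodes, which by (P2)($c$) lie outside $B$; moreover (P1) is unsatisfiable for paths of length $>m$, so these nodes are leaves, while (P2) provides the full binary tree down to them. Hence the obstruction of $\Tmc'_n$ cannot be discharged in any expansion of $\Imc$, i.e.\ $\Imc$ has no model expansion satisfying $\Tmc'_n$. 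For (iii), let $\Imc$ be a tree model of $\Tmc_n$ that is not strongly $n$-violating. Then for every $x\in A^{\Imc}$ the configuration below $x$ violates (P1) or (P2): the counting breaks at some finite depth, or a required branching is missing, or some depth-$(m-1)$ node lies in $B$. Since $\Imc$ is a tree, ``the configuration below $x$'' is well defined, and in each case the obstruction $\Tmc'_n$ would impose can be discharged, yielding a model of $\Tmc'_n$ expanding $\Imc$. For (iv): given a model of $\Tmc_n^-$, expand it by interpreting $G$ and every concept name outside $\mn{sig}(\Tmc_n^-)$ as the empty set; then every axiom of $\Tmc_n$ beyond those of $\Tmc_n^-$ is a guarded implication with empty antecedent, hence trivially satisfied, while the axioms of $\Tmc_n^-$ hold by assumption, so the expansion models $\Tmc_n$.

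\emph{Main obstacle.} The delicate part is the design of $\Tmc'_n$: it must combine a \emph{forward} verification of the counter (which a single correctly counting path already passes) with the \emph{branching} requirement of (P2), using only local axioms and no number restrictions, so that the obstruction survives exactly on strongly $n$-violating interpretations yet is dischargeable on every other tree model. Getting this combination right, together with the polynomial-size encoding of the doubly-exponential counter inside $\Tmc_n$, is precisely the technical content carried out in \cite{KR06}; the guard $G$ is what decouples $\Tmc_n$ from $\Tmc_n^-$ so that (iv) is immediate, and the tree restriction in (iii) is what makes the ``configuration below $x$'' argument go through.
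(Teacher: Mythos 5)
The paper does not prove this lemma at all: it is imported verbatim from \cite{KR06} (``The following result of \cite{KR06} underlies the proof of Theorem~\ref{thm:trplexpsizeCE}''), and only the transfer from witness concepts to uniform interpolants is carried out in the appendix. Your proposal, read as a blind proof, has a genuine gap in exactly the place where the work would have to happen: the TBoxes $\Tmc_n$ and $\Tmc'_n$ are never actually defined. The generator $\Tmc_n$ is described only as ``$\Omc(n)$ guarded local axioms implementing the counter,'' and the checker $\Tmc'_n$ is specified purely by the property it is supposed to have --- it ``is arranged so that the obstruction it imposes can be discharged exactly when the full binary tree of depth $m-1$ with $\neg B$-leaves fails to be present.'' But that dischargeability property is precisely the conjunction of items (ii) and (iii); asserting it as part of the definition and then ``verifying'' (ii) and (iii) from it is circular. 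The hard content --- forcing a $2^n\cdot 2^{2^n}$-step counter and detecting the missing branching or a $B$-labelled depth-$(m-1)$ node with plain \ALC\ axioms of total size $\Omc(n^2)$, in such a way that the obstruction is forced in \emph{every} expansion (for (ii)) yet avoidable in \emph{some} expansion of every non-violating tree model (for (iii)) --- is exactly what you defer back to \cite{KR06}, so nothing beyond the citation is actually established.

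Two smaller points. First, in your argument for (ii) the claim that (P1) ``is unsatisfiable for paths of length $>m$, so these nodes are leaves'' is not justified: (P1) constrains only paths of length at most $m$ and says nothing about successors of depth-$(m-1)$ elements, so such elements need not be leaves (and the intended argument in \cite{KR06} does not need them to be). Second, your treatment of (iv) is sound only relative to your own guess that all non-$\Tmc_n^-$ axioms of $\Tmc_n$ are guarded by a fresh name $G$; whether the actual construction of \cite{KR06} has this shape is again something you would have to check against that paper rather than stipulate. In short, either follow the paper and cite \cite{KR06} for the lemma as a black box, or commit to writing out the axioms of $\Tmc_n$ and $\Tmc'_n$ and verify (ii)--(iv) against them; the current text does neither.
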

The TBoxes $\Tmc_n$ and $\Tmc'_n$ from Lemma~\ref{thm:trplexpsizeCE}
are formulated in extensions of the signature~$\Sigma$, more precisely
we have $\Sigma=\mn{sig}(\Tmc_n) \cap \mn{sig}(\Tmc'_n)$. Thus, the phrase
`extended to a model of' refers to interpreting those symbols that do not
occur in the original TBox.

\smallskip

To estabish Theorem~\ref{thm:trplexpsize}, we consider the
uniform $\Sigma$-interpolants of the TBoxes $\Tmc_n \cup \Tmc'_n$. Let
$$
  \Tmc_{\Sigma,n} = \Tmc^-_n \cup \{ A \sqsubseteq \neg K_1 \sqcup \neg K^{(m)}_2 \}
$$
\psfull
\begin{figure}[t!]
  \begin{boxedminipage}[t]{\columnwidth}
\vspace*{-2mm}
  \begin{center}
\begin{eqnarray}
  K_1 & = & \bigsqcap_{i<m, \mn{bit}_i(2^{2^n})=1} \forall \{r,s\}^i . X \; \sqcap \\[1mm]
  &&
  \bigsqcap_{i<m, \mn{bit}_i(2^{2^n})=0} \forall \{r,s\}^i . \neg X \\
  K^{(0)}_2 & = & \neg B \\
  K^{(i+1)}_2 & = & \exists r . K^{(i)}_2 \sqcap \exists s . K^{(i)}_2
\end{eqnarray}
  \end{center}
  \end{boxedminipage}
  \caption{Definition of the concepts $K_1$ and $K^{(m)}_2$.}
  \label{fig:tbox2}
\end{figure}
\psdraft
where the concepts $K_1$ and $K^{(m)}_2$ are shown in
Figure~\ref{fig:tbox2} and $m = 2^n \cdot 2^{2^n}$. In the figure, we
use $\mn{bit}_i(m)$ to denote the $i$-th bit of the string obtained by
concatenating all values of a binary counter that counts up to $m$
(lowest bit first, and with every counter value padded to $\log(m)$
bits using trailing zeros). Note that $\forall \{r \cup s\}^i . C$
is an abbreviation for $\bigsqcap_{r_1 \cdots r_i \in \{r,s\}^i}
\forall r_1 . \cdots . \forall r_i . C$. We show that $\Tmc_{\Sigma,n}$
is a uniform $\Sigma$-interpolant of $\Tmc_n \cup \Tmc'_n$, and that
it is essentially of minimal size. It is not hard to see that the
models of $\Tmc_{\Sigma_n}$ are precisely those interpretations that are
not strongly $n$-violating.
\begin{lemma}
  For all $n \geq 0$,
  \begin{enumerate}

  \item $\Tmc_{\Sigma,n}$ is a uniform $\Sigma$-interpolant of $\Tmc_n
    \cup \Tmc'_n$;

  \item every uniform $\Sigma$-interpolant $\{ \top \sqsubseteq C_\Tmc
    \}$ for $\Tmc_n \cup \Tmc'_n$ is of size at least $2^{(2^n \cdot
      2^{2^n})-1}$.

  \end{enumerate}
\end{lemma}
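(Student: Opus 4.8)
The plan is to derive both items from the bisimulation characterization of uniform interpolants (Theorem~\ref{bisimuniform}) together with Lemma~\ref{thm:modelstuff}; the signature inclusions $\mn{sig}(\Tmc_{\Sigma,n}) \subseteq \Sigma \subseteq \mn{sig}(\Tmc_n \cup \Tmc'_n)$ needed for Item~1 are immediate, and throughout I use that all of $A,B,X,r,s$ belong to $\Sigma$, so that $\Tmc_n^-$, $K_1$ and $K_2^{(m)}$ are $\Sigma$-concepts.

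\medskip\noindent\textbf{Item 1.} First I would record the elementary equivalence: for every interpretation $\Imc$, $\Imc \models \Tmc_{\Sigma,n}$ iff $\Imc$ is a model of $\Tmc_n^-$ that is not strongly $n$-violating; this uses only the shape $\Tmc_{\Sigma,n} = \Tmc_n^- \cup \{A \sqsubseteq \neg K_1 \sqcup \neg K_2^{(m)}\}$ and the observation that, by construction of $K_1$, $K_2^{(m)}$, P1 and P2, an element $x$ lies in $(A \sqcap K_1 \sqcap K_2^{(m)})^\Imc$ exactly when $x$ witnesses strong $n$-violation in $\Imc$. The core step is then to show that, for every pointed $\Sigma$-interpretation $(\Imc,d)$, $(\Imc,d) \models \exists^\sim_{\overline{\Sigma}}.(\Tmc_n \cup \Tmc'_n)$ iff $(\Imc,d)$ is $\Sigma$-bisimilar to the root of some tree interpretation that is a model of $\Tmc_n^-$ and not strongly $n$-violating: for ``$\Rightarrow$'', a $\Sigma$-bisimilar model of $\Tmc_n \cup \Tmc'_n$ is in particular a model of $\Tmc_n^-$ (since $\Tmc_n \models \Tmc_n^-$) and cannot be strongly $n$-violating (else it is a strongly $n$-violating model of $\Tmc_n$ that is simultaneously a model of $\Tmc'_n$, contradicting Lemma~\ref{thm:modelstuff}(ii)), and unravelling makes it a tree; for ``$\Leftarrow$'', extend the given tree model of $\Tmc_n^-$ first to a model of $\Tmc_n$ by Lemma~\ref{thm:modelstuff}(iv) and then to a model of $\Tmc'_n$ by Lemma~\ref{thm:modelstuff}(iii), noting that these extensions add only symbols outside $\Sigma$ and hence preserve $\Sigma$-bisimilarity to $(\Imc,d)$. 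Combining the two equivalences with Theorem~\ref{theorem:bisimchar} (which transfers along $\Sigma$-bisimulations both membership in the $\Sigma$-concept coding the axioms of $\Tmc_n^-$ and non-membership in $A \sqcap K_1 \sqcap K_2^{(m)}$) yields, for every $\Imc$: $\Imc \models \Tmc_{\Sigma,n}$ iff for every $d \in \Delta^\Imc$ the pointed interpretation $(\Imc,d)$ is $\Sigma$-bisimilar to a tree model of $\Tmc_n^-$ that is not strongly $n$-violating (``$\Rightarrow$'' by unravelling $\Imc$ at each $d$, ``$\Leftarrow$'' by transferring the relevant $\Sigma$-concepts back), iff for every $d \in \Delta^\Imc$, $(\Imc,d) \models \exists^\sim_{\overline{\Sigma}}.(\Tmc_n \cup \Tmc'_n)$. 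Theorem~\ref{bisimuniform} then gives Item~1.

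\medskip\noindent\textbf{Item 2.} Let $\{\top \sqsubseteq C_\Tmc\}$ be any uniform $\Sigma$-interpolant of $\Tmc_n \cup \Tmc'_n$. By Item~1 and uniqueness of uniform interpolants up to logical equivalence, it has the same models as $\Tmc_{\Sigma,n}$. I claim $\neg C_\Tmc$ is a witness concept for non-conservativity of the extension of $\Tmc_n$ with $\Tmc'_n$. It is a $\Sigma$-concept; it is unsatisfiable w.r.t.\ $\Tmc_n \cup \Tmc'_n$ because $\Tmc_n \cup \Tmc'_n \models \Tmc_{\Sigma,n}$, so every model of $\Tmc_n \cup \Tmc'_n$ validates $\top \sqsubseteq C_\Tmc$; and it is satisfiable w.r.t.\ $\Tmc_n$ because, by Theorem~\ref{thm:trplexpsizeCE}(i) and Lemma~\ref{thm:modelstuff}(iii), there is a strongly $n$-violating tree model $\Imc$ of $\Tmc_n$, and no model of $\Tmc_n \cup \Tmc'_n$ is strongly $n$-violating (Lemma~\ref{thm:modelstuff}(ii)), so for a violation witness $x$ of $\Imc$ we get $(\Imc,x) \not\models \exists^\sim_{\overline{\Sigma}}.(\Tmc_n \cup \Tmc'_n)$ (strong $n$-violation is preserved along $\Sigma$-bisimulations); by Item~1 this means $\Imc \not\models \Tmc_{\Sigma,n}$, i.e.\ $\neg C_\Tmc$ is realized at a point of the $\Tmc_n$-model $\Imc$. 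Theorem~\ref{thm:trplexpsizeCE}(ii) now yields $|\neg C_\Tmc| \geq 2^{(2^n \cdot 2^{2^n})-1}$, hence $\{\top \sqsubseteq C_\Tmc\}$ has size at least $2^{(2^n \cdot 2^{2^n})-1}$; Theorem~\ref{thm:trplexpsize} follows by taking the family $(\Tmc_n \cup \Tmc'_n)_{n>0}$, which has size $\Omc(n^2)$ by Lemma~\ref{thm:modelstuff}(i).

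\medskip\noindent\textbf{Main obstacle.} The delicate point is the ``core step'' of Item~1: one must verify that both extension steps of Lemma~\ref{thm:modelstuff} leave the $\Sigma$-reduct — hence $\Sigma$-bisimilarity and the ``not strongly $n$-violating'' status — intact, and that ``strongly $n$-violating'', an $\exists/\forall$ property over unbounded-length paths, is genuinely invariant under $\Sigma$-bisimulations and faithfully captured by the $\Sigma$-concepts $K_1$ and $K_2^{(m)}$ (minding the off-by-one between the length bound in P2 and the depth of $K_2^{(m)}$). Once this is established, Item~2 is a short reduction to the witness-concept lower bound of \cite{KR06}.
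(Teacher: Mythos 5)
Your proposal is essentially correct and Item~2 matches the paper's argument, but for Item~1 you take a genuinely different route. The paper works directly with the definition of $\Sigma$-inseparability: for a $\Sigma$-inclusion $C\sqsubseteq D$ with $\Tmc_{\Sigma,n}\not\models C\sqsubseteq D$, take a countermodel, extend it to a model of $\Tmc_n$ via Lemma~\ref{thm:modelstuff}(iv), unravel into a tree (still not strongly $n$-violating), and extend to a model of $\Tmc'_n$ via Lemma~\ref{thm:modelstuff}(iii); the converse is a one-liner from (ii). This avoids invoking Theorem~\ref{bisimuniform} and the ``$\exists^\sim_{\overline\Sigma}$'' apparatus entirely, so it never needs the (correct but non-trivial) facts that strong $n$-violation is a $\Sigma$-bisimulation invariant and that it is preserved/reflected by the extension steps --- which is precisely what you flag as the delicate obstacle. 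Your route via Theorem~\ref{bisimuniform} is also valid, buys a cleaner conceptual picture (``$\Tmc_{\Sigma,n}$ has exactly the $\Sigma$-bisimulation-closure of $\mathrm{Mod}(\Tmc_n\cup\Tmc'_n)$ as its models''), but forces you to discharge exactly those invariance lemmas explicitly.

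Two small points worth noting. First, the combination step in your Item~1 needs that $\Tmc_n\models\Tmc_n^-$; this is asserted but not among the four facts you quote from Lemma~\ref{thm:modelstuff}, so you should flag it as an additional property of the construction in~\cite{KR06} (the paper implicitly relies on it too via ``the models of $\Tmc_{\Sigma,n}$ are precisely those that are not strongly $n$-violating''). Second, on the arithmetic in Item~2: the paper's own proof actually assumes a putative interpolant of size $<2^{(2^n\cdot2^{2^n})-2}$ (the bound of Theorem~\ref{thm:trplexpsize}), infers $|\neg C_\Tmc|<2^{(2^n\cdot2^{2^n})-1}$, and contradicts Theorem~\ref{thm:trplexpsizeCE}(ii). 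Your direct derivation $|\neg C_\Tmc|\geq 2^{(2^n\cdot2^{2^n})-1}\Rightarrow|\{\top\sqsubseteq C_\Tmc\}|\geq 2^{(2^n\cdot2^{2^n})-1}$ gives the slightly stronger bound stated in the lemma, but is a bit cavalier about the exact size convention for concepts vs.\ TBoxes; either way both arguments deliver the $2^{(2^n\cdot2^{2^n})-2}$ needed for Theorem~\ref{thm:trplexpsize}.
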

\begin{proof}
  For Point~1, let $C \sqsubseteq D$ be a $\Sigma$-inclusion and
  assume first that $\Tmc_{\Sigma,n} \not\models C \sqsubseteq D$,
  i.e., there is a model $\Imc$ of $\Tmc_{\Sigma,n}$ and a $d \in (C
  \sqcap \neg D)^\Imc$. Since $\Tmc^-_n \subseteq \Tmc_{\Sigma,n}$ and
  by Point~(iv) of Lemma~\ref{thm:modelstuff}, $\Imc$ can be extended
  to a model $\Imc'$ of $\Tmc_n$. Let \Jmc be the unravelling of
  $\Imc'$ into a tree with root $d$. Obviously, \Jmc is still a model
  of $\Tmc_n$ and $d \in (C \sqcap \neg D)^\Jmc$.  Since $\Imc \models
  \Tmc_{\Sigma,n}$, \Imc is not strongly $n$-violating, and thus the
  same holds for \Jmc. By Point~(iii) of Lemma~\ref{thm:modelstuff},
  $\Jmc$ can be extended to a model $\Jmc'$ of $\Tmc_n$, and we have
  $d \in (C \sqcap \neg D)^{\Jmc'}$, thus $\Tmc_n \cup \Tmc'_n \not
  \models C \sqsubseteq D$. 

  Conversely, let $\Tmc_n \cup \Tmc'_n \not \models C \sqsubseteq D$.
  Then there is a model $\Imc$ of $\Tmc_n \cup \Tmc'_n$ and a $d \in
  (C \sqcap \neg D)^\Imc$. By (ii), \Imc is not strongly
  $n$-violating, thus it is a model of $\Tmc_{\Sigma,n}$ and we get
  $\Tmc_{\Sigma,n} \not\models C \sqsubseteq D$.

  For Point~2, assume that there is a uniform $\Sigma$-interpolant $\{
  \top \sqsubseteq C_\Tmc \}$ for $\Tmc_n \cup \Tmc'_n$ that is of
  size strictly smaller than $2^{(2^n \cdot 2^{2^n})-2}$. Then the
  size of $\neg C_\Tmc$ is strictly smaller than $2^{(2^n \cdot
    2^{2^n})-1}$. By Theorem~\ref{thm:trplexpsizeCE}, to obtain a
  contradiction it thus suffices to show that $\neg C_\Tmc$ is a
  witness concept for non-conservativity of the extension of $\Tmc_n$
  with $\Tmc'_n$. First, since $\Tmc_n \cup \Tmc'_n \models \top
  \sqsubseteq C_\Tmc$, $\neg C_\Tmc$ is unsatisfiable relative to
  $\Tmc_n \cup \Tmc'_n$. And second, there clearly is a model \Imc of
  $\Tmc^-_n$ that is not a model of $\Tmc_{\Sigma,n}$. Since $\{ \top
  \sqsubseteq C_\Tmc \}$ and $\Tmc_{\Sigma,n}$ are both uniform
  $\Sigma$-interpolants of $\Tmc_n \cup \Tmc'_n$, they are equivalent
  and thus there is a $d \in \neg C_\Tmc^\Imc$. By Point~(iv) of
  Lemma~\ref{thm:modelstuff}, $\neg C_\Tmc$ is satisfiable relative to
  $\Tmc_n$.
\end{proof}

%%% Local Variables: 
%%% mode: latex
%%% TeX-master: "main"
%%% End: 

\end{document}